\newcommand{\Nat}{{\mathbb N}}
\newcommand{\vcen}[1]{\begingroup
	\setbox0=\hbox{\includegraphics[scale=0.1]{#1}}%
	\parbox{\wd0}{\box0}\endgroup}
\newcommand{\Bool}{\mathbb{B}}
\newcommand{\Dist}{\mathcal{D}}
\newcommand{\ol}[1]{\mathbf{#1}}
\newcommand{\den}[1]{\llbracket #1 \rrbracket}
\newcommand{\C}{\mathcal{C}}
\newcommand{\tom}{\twoheadrightarrow}
\newcommand{\tv}{\mathsf{tv}}
\newcommand{\MP}{\mathsf{PBCirc}}
\newcommand {\PI}[1]{\textsf{PI}(#1)}
\newcommand{\sort}{\mathcal{S}}
\newcommand{\sign}{\Sigma}
\newcommand{\id}[1]{id_{#1}}
\newcommand{\discharger}[1]{!_{#1}}
\newcommand{\copier}[1]{\blacktriangleleft_{#1}}
\newcommand{\gen}{s}
\newcommand{\ari}{\mathsf{ar}}
\newcommand{\coar}{\mathsf{coar}}
\newcommand{\Cat}[1]{\mathsf{#1}}
\newcommand{\FStoch}{\Cat{FinStoch}}
\newcommand{\mylabel}[2]{
	\begingroup
    \protected@edef\@currentlabel{#2}%
    \label{#1}%
  \endgroup
}
\newcommand{\br}[1]{\llparenthesis #1 \rrparenthesis}
\title{Parametric Iteration in Resource Theories} %
\author{Alessandro Di Giorgio}{Tallinn University of Technology, Estonia}{alessd@taltech.ee}{https://orcid.org/0000-0002-6428-6461}{}
\author{Pawel Sobocinski}{Tallinn University of Technology, Estonia}{}{https://orcid.org/0000-0002-7992-9685}{}
\author{Niels Voorneveld}{Cybernetica, Estonia}{}{https://orcid.org/0000-0001-6650-3493}{}
\authorrunning{A. Di Giorgio, P. Sobocinski and N. Voorneveld} %
\keywords{Markov categories, Cryptography, String diagrams, Asymptotic equivalence} %
\begin{document}

\maketitle

\begin{abstract}
Many algorithms are specified with respect to a fixed but unspecified parameter.
Examples of this are especially common in cryptography, where protocols often feature a security parameter such as the bit length of a secret key.

Our aim is to capture this phenomenon in a more abstract setting. We focus on resource theories --- general calculi of processes with a string diagrammatic syntax --- introducing a general parametric iteration construction. 
By instantiating this construction within the Markov category of probabilistic Boolean circuits and equipping it with a suitable metric, we are able to capture the notion of negligibility via asymptotic equivalence, in a compositional way.
This allows us to use diagrammatic reasoning to prove simple cryptographic theorems -- for instance, proving that guessing a randomly generated key has negligible success.
\end{abstract}

\section{Introduction}\label{sec:intro}

Cryptographic protocols are designed to resist attacks by adversaries with bounded computational power.
A fundamental feature is the use of a security parameter $\sigma$, which governs the time complexity of potential attacks. As $\sigma$ increases, cryptographic schemes become more secure: for example, in public-key encryption, key lengths are chosen in relation to $\sigma$ to ensure that brute-force attacks require time exponential in $\sigma$ to succeed. More generally, security is often defined in asymptotic terms, stating that certain probabilities—such as the success rate of an adversary—become \emph{negligible} as $\sigma$ grows.
This notion of negligibility from cryptography~\cite{Modern_Cryptography} also appears in other computational settings, such as approximation schemes in probabilistic algorithms. In both cases, reasoning about security or approximation guarantees requires a formal understanding of how computations depend on the parameter~$\sigma$.

Traditionally, the definition of a cryptographic system follows a bottom-up approach: one chooses a model of computation, an algorithm and its notion of complexity, efficiency and security.
This usually entails several technicalities that tend to hinder clarity and modularity. In particular, security proofs often become bogged down with low-level details of the computational model, making it difficult to generalise results, compare different cryptographic frameworks, or reason about their composition in a structured way.

This situation motivated the study of the subject at a higher level of abstraction, starting with the work of Canetti on Universal Composable Security~\cite{canetti2001universally} and followed by the work of Maurer and Renner on Abstract Cryptography~\cite{MaurerR11,maurer2011constructive}.
Motivated by similar intents, Broadbent and Karvonen~\cite{Broadbent_2023} propose an abstract model of composable security definitions in terms of \emph{resource theories}, formalised as symmetric monoidal categories.

The resource theoretic approach comes with the visual and intuitive language of string diagrams~\cite{joyal1991geometry}. Consider the following example:
\begin{equation}\label{ex:resth}
    \vcen{pics/S1/resth2}
\end{equation}
In the diagram above -- read from left to right -- we think of $f$ and $g$ as processes that transform the resources carried on the wires. The explicit splitting of wires via a \emph{copier}, indicates that $f$ and $g$ share the second resource. If a resource is unused, as in the diagram on the right hand side, we simply \emph{discard} it by closing the wire.

In cryptographic terms, we interpret the first input wire in~\eqref{ex:resth} as carrying a message and the second as carrying a key. The processes $ f $ and $ g $ correspond to encryption and decryption, respectively. The equation postulates the correctness of the protocol: the original message -- transformed into a cipher text by $f$ -- corresponds to the one decrypted by $g$. Equations between string diagrams such as these lead to proofs via \emph{diagrammatic reasoning}, a powerful axiomatic technique that allows compositional reasoning about  compound systems.

\medskip
Our aim in this paper is to capture the notion of parametric algorithms within the abstract framework of resource theories.
After a brief introduction to resource theories and their string diagrammatic language in Section~\ref{sec:string diagrams}, in Section~\ref{sec:iteration} we define a notion of \emph{iteration} bounded by an unspecified parameter, which we call $\star$-iteration. Perhaps unexpectedly, we do not study traces (as, e.g., in monoidal streams~\cite{Monoidal_Streams}) but instead define a functorial construction:  given a morphism $f \colon S \otimes A \to B \otimes S$, we define $\tau^k_{S , (A) , (B)}(f) \colon S \otimes A^k \to B^k \otimes S$ as its $k$-fold iteration. Below we give a diagrammatic representation of $\tau^k_{S,(A),(B)}(f)$, when $k = 3$.
\begin{center}
$\vcen{pics/S1/iter_intro5a} = \vcen{pics/S1/iter_intro5b}$
\end{center}
Iterated morphism consumes and produces \emph{lists} of resources of type $A$ and $B$, respectively. The resource of type $S$ can be thought of as a \emph{state} carried over by the computation. A similar construction is found in probabilistic programming languages~\cite{holtzen2020scaling,torres2024iteration}, where iteration is usually bounded. Interestingly, useful results can be proved about iteration \emph{without} instantiating the parameter. To this end, we introduce a technique we dub the \emph{diagonal method} that allows us to prove useful, general results  that add to the arsenal of diagrammatic reasoning about abstract resource theories with parametric iteration.

\medskip
Parametric iteration is a feature that extends the expressivity of resource theories.
Our main application in this paper is motivated by cryptography, and the notion of \emph{negligibility} in particular. To do this, we  work within a \emph{probabilistic} resource theoretical context.
This is the category $\FStoch_{\{1,0\}}$ of Boolean stochastic maps. It is both a Markov category and one that possesses a \emph{probabilistic choice structure}, which we introduce in Section~\ref{sec:preliminaries}.
As a notable property, the probabilistic case structure ensures a unique (up to isomorphism) normal form for probabilistic Boolean functions. This, in turn, allows us to derive a completeness result (Theorem~\ref{thm:complete}). While not essential to the main developments of the paper, it provides an interesting alternative proof to the one found in~\cite{piedeleu2025boolean}.

 Formally, we first move from symmetric monoidal categories to metric-enriched ones, so that morphisms, i.e. processes, are equipped with a notion of \emph{distance} between them (see e.g. \cite{dallago_et_al:LIPIcs.FSCD.2022.4,quantitativemonoidalalgebra}).

We reconcile metric-enrichment with the parameterized iteration construction in Section~\ref{sec:asyeq}, where we introduce a notion of \emph{asymptotic equivalence} on morphisms. This captures negligibility—also referred to as \emph{indistinguishability}—between processes. Intuitively, two processes $f$ and $g$, potentially involving parametric iterations, are considered indistinguishable if their distance approaches $0$ as the parameter $\sigma$ tends to $\infty$.
In related work \cite{Lago-Indistinct}, such notions of indistinguishability are studied in a bounded linear $\lambda$-calculus.

Guided by cryptographic intuitions, we conclude with some examples. Notably, we formally prove that a randomly generated key is unlikely to be any specific key (Lemma~\ref{lem:keyguess}), and show we can asymptotically approximate fair choice with weighted choice with the iterated von Neumann trick (Lemma~\ref{eq:vn trick}). We also discuss how we can perform arithmetic over the cyclic groups of sizes $2^\sigma$ and $2^\sigma-1$, which is useful for instance in Diffie-Hellman key exchange and Zero knowledge proofs.

The paper presents two theoretical contributions. We define the parametrized iteration construction on symmetric monoidal categories, prove that it forms a category, preserves several categorical structures, and admits an endofuctor $\tau^\star$ for parametrized iteration. We moreover define the notion of asymptotic equivalence on the parametrized iteration over a metric-enriched symmetric monoidal category, and show that it defines a congruence and is preserved under the application of $\tau^\star$.

The paper includes appendices containing additional material and omitted proofs.

\section{Symmetric Monoidal Categories as Resource Theories}\label{sec:string diagrams}%
In this section we give a brief introduction to symmetric monoidal categories (SMCs) and their associated graphical calculus~\cite{Joyal1991,Selinger2009,piedeleu2023introduction} in terms of resource theories, following~\cite{coecke2016mathematical}.

The basic data consists of objects $A,B,C,\ldots$ that we think of as types of resources, and morphisms $f \colon A \to B$, which we think of as transformations that convert a resource of type $A$ into one of type $B$. Morphisms can be composed sequentially: given $f \colon A \to B$ and $g \colon B \to C$, their composite $f ; g \colon A \to C$ represents performing $f$ followed by $g$.

The monoidal product $\otimes$ lets us express having two resources at once—so $A \otimes C$ means having both $A$ and $C$—and it extends to morphisms too: given $f \colon A \to B$ and $g \colon C \to D$, the morphism $f \otimes g \colon A \otimes C \to B \otimes D$ describes applying $f$ and $g$ in parallel. There is also a void resource $I$ that acts neutrally under combination: $A \otimes I = A$.

These ideas are captured visually using the graphical calculus of string diagrams. A resource type $A$ is drawn as a wire \vcen{pics/S2/id}, and a transformation $f \colon A_1 \otimes \ldots \otimes A_n \to B_1 \otimes \ldots \otimes B_m$ is represented by a box \vcen{pics/S2/gen3} with $n$ input wires on the left and $m$ output wires on the right. Composition of morphisms is shown by connecting boxes in sequence \vcen{pics/S2/seq\_comp}, while the monoidal product corresponds to placing them side-by-side \vcen{pics/S2/par\_comp}. The void resource $I$ corresponds to no wires at all—i.e., the empty diagram.

Resources can also be swapped according to the symmetry $\chi_{A,B} \colon A \otimes B \to B \otimes A$, which is represented by a crossing of wires \vcen{pics/S2/symm}.

Diagrams are subject to the axioms of SMCs, listed in Appendix~\ref{sec:monoidal}. However, note that equational reasoning in this setting amounts to deformation of diagrams without changing their topology. For example, naturality of symmetry amounts to sliding a box past a crossing of wires, illustrated as \vcen{pics/S2/symm_nat}.

In the next section, where we introduce the parametric iteration construction, it will be convenient to explicitly track both the combination of resources via the monoidal product and the void resource. Therefore, following~\cite{carette2019szx}, we extend the graphical calculus with operations that introduce \vcen{pics/S2/op_2} and discard  \vcen{pics/S2/op_1} the void resource, as well as split \vcen{pics/S2/op_4} and combine \vcen{pics/S2/op_3} resources.
These operations are subject to the expected equations %
determined by the properties of the monoidal product.

Finally, note that all monoidal categories considered in this paper are taken to be strict, as justified by Mac Lane's strictification theorem~\cite{mclane}.

\section{Parametric Iteration}\label{sec:iteration}

We define a category of parametric iterations over a symmetric monoidal category. We do this by introducing the concept of $\star$-iteration, which expresses the idea of iterating an operation a fixed yet unspecified number of times. We do not a priori choose how many times to iterate, the operations are \emph{parametrized} over this yet to be specified choice of number of iterations.

\newcommand{\FI}[1]{\mathsf{FI}(#1)}

We define parametric iteration over $\C$ in two stages. First we define the \emph{free iteration construction} $\FI{\C}$ which adds the relevant objects and operations which will be used to express iteration. We then quotient this category to create the \emph{parametric iteration construction} $\PI{\C}$ over $\C$, giving an interpretation to the added objects and operations.

We define the objects of $\FI{\C}$ inductively as follows:
\[
\frac{A \in \C}{\br{A} \in \FI{\C}}
\qquad \qquad
\frac{A , B \in \FI{\C}}{A \otimes B \in \FI{\C}}
\qquad \qquad
\frac{A \in \FI{\C}}{A^\star \in \FI{\C}}
\]
satisfying the equations $\br{I} = I, \br{I} \otimes A = A = A \otimes \br{I}$, $\br{A} \otimes \br{B} = \br{A \otimes B}$, and $(A \otimes B) \otimes C = A \otimes (B \otimes C)$.
Here $A^\star$ represents a $\star$-tuple, that is a tuple of $A$'s, with the exact number of elements bounded by an external parameter.

Given a tuple of objects $\ol{A} = (A_1 , \dots , A_n)$, we write $\ol{A} \cdot \star = A_1^\star \otimes \dots \otimes A_n^\star$, and for each $k \in \Nat$, $\ol{A} \cdot k = A_1^k \otimes \dots \otimes A_n^k$, following notation from \cite{soton422947}, where $A^k$ is the $k$-th power of $A$ with respect to $\otimes$. In particular, $\ol{A} \cdot 1 = A_1 \otimes \dots \otimes A_n$.

Morphisms of $\FI{\C}$ are given inductively as follows:
\[
\frac{f : A \to B \in \C}{\br f : \br A \to \br B}
\qquad
\frac{A \in \FI{\C}}{\textit{id}_A : A \to A}
\qquad
\frac{f : A \to B \quad g : B \to C}{f ; g : A \to C}
\qquad
\frac{f : A \to B \quad g : C \to D}{f \otimes g : A \otimes B \to C \otimes D}
\]
\[
\qquad
\frac{A , B \in \FI{\C}}{\chi_{A,B} : A \otimes B \to B \otimes A}
\qquad
\frac{f : S \otimes \ol{A} \cdot 1 \to \ol{B} \cdot 1 \otimes S}{\tau^\star_{S, \ol{A} , \ol{B}}(f) : S \otimes \ol{A} \cdot \star \to \ol{B} \cdot \star \otimes S}
\]
satisfying the relevant equations making $\FI{\C}$ into a symmetric monoidal category,
together with the equations $\br{\textit{id}_A} = \textit{id}_{\br A}$, $\br{f;g} = \br{f};\br{g}$, $\br{\chi_{A,B}} = \chi_{\br{A},\br{B}}$, $\br{f \otimes g} = \br{f} \otimes \br{g}$.
The latter equations make $\br{-} : \C \to \FI{\C}$ sending $A$ to $\br{A}$ and $f$ to $\br{f}$ into a symmetric monoidal functor.

Here $\tau^\star$ is the operation which represents $\star$-iteration, transforming a list of $\star$-tuples $\ol{A} \cdot \star$ into a list of $\star$-tuples $\ol{B} \cdot \star$ by going through the elements one by one, carrying over the state $S$ between iterations.
We annotate $\tau^\star$ by lists telling us exactly how the input and output of $f$ is divided into different tuples. For instance, if $f : S \otimes A^2 \to B \otimes S$, we could either create $\tau^\star_{S,(A^2),(B)}(f) : S \otimes (A^2)^\star \to B^\star \otimes S$ or $\tau^\star_{S,(A,A),(B)}(f) : S \otimes A^\star \otimes A^\star \to B^\star \otimes S$, where $(A^2)^\star$ is a $\star$-tuple of elements of $A^2$, and $A^\star \otimes A^\star$ are two $\star$-tuples of elements of $A$.

String diagrammatically, we denote $\tau^\star_{S,\ol{A},\ol{B}}(f)$ as follows:
\begin{center}
	$
	\vcen{pics/S3/starop_a} \quad \mapsto \quad \vcen{pics/S3/starop_b}
	$
\end{center}
Here we mark \emph{state wires} with a $\bullet$, and the $\star$-tuples with an interrupted wire. Each non-state wire is part of a different tuple. We may use multiple or no state wires as well.

We give an interpretation of iteration by defining $\tau^k$ as an operation on symmetric monoidal categories describing a constant number of $k$ iterations.
Given a symmetric monoidal category $\C$ , we can inductively define operations $\textsf{push}^k_{\ol{A}} : \ol{A} \cdot 1 \otimes \ol{A} \cdot k \to \ol{A} \cdot (k+1)$ and $\textsf{pop}^k_{\ol{A}} : \ol{A} \cdot (k+1) \to \ol{A} \cdot 1 \otimes \ol{A} \cdot k$ which respectively pushes and pops the first element of each tuple of objects in $\ol{A} \cdot (k+1)$.
For instance, if $\ol{A} = (X , Y)$, then $\textsf{push}^3_\ol{A} : X \otimes Y \otimes X^2 \otimes Y^2 \to X^3 \otimes Y^3$ simply swaps the second and third element.

Given a morphism $f : S \otimes \ol{A} \cdot 1 \to \ol{B} \cdot 1 \otimes S$ in $\C$, we inductively define $\tau^k_{S,\ol{A},\ol{B}}(f) : S \otimes \ol{A} \cdot k \to \ol{B} \cdot k \otimes S$ for each $k \in \Nat$ as:
\begin{itemize}
	\item $\tau^0_{S,\ol{A},\ol{B}}(f) = \textit{id}_S$,
	\item $\tau^{k+1}_{S,\ol{A},\ol{B}}(f) = (\textit{id}_S \otimes \textsf{pop}^k_\ol{A}) ; (f \otimes \textit{id}_{\ol{A} \cdot k}) ; (\textit{id}_{\ol{B} \cdot 1} \otimes \tau^k_{S,\ol{A},\ol{B}}(f)) ; (\textsf{push}^k_\ol{B} \otimes \textit{id}_S)$.
\end{itemize}

\begin{center}
$\vcen{pics/S3/starop_a} \quad \mapsto \quad \vcen{pics/S3/iterop_b}$
\end{center}
\begin{center}
$\vcen{pics/S3/iter_def_a} \quad := \quad \vcen{pics/S3/iter_def_b}
\qquad \quad
\vcen{pics/S3/iter_def_c} \quad := \quad \vcen{pics/S3/iter_def_d}$
\end{center}

Contrary to $\tau^\star$, the annotations for $\tau^k$ serve a further clarifying role, as they are also necessary for removing ambiguity even in those instances where the type of the produced morphism is known. Without this specification, the input may be wrongly distributed into separate tuples. For instance, a $\tau^3$ iteration having $A^6$ as input, could consider this either as two triples of $A$, or one triple of $A^2$, which lead to different interpretations.
In terms of string diagrams, we can keep the convention that each string represents a different tuple, which avoids the need for putting annotations in the string diagrams.
\begin{example}
	Suppose $f : A \otimes A \to B$, then $\tau^2_{1,(A,A),(B)}(f)$ and $\tau^2_{1,(A^2),(B)}(f)$ have the same type but yield two different results, represented respectively as string diagrams as follows:
	\begin{center}
		$\vcen{pics/S3/ex_anno_a} = \vcen{pics/S3/ex_anno_b}$
		\qquad \qquad
		$\vcen{pics/S3/ex_anno_c} = \vcen{pics/S3/ex_anno_d}$
	\end{center}
\end{example}
For any $k \in \Nat$, we define a functor $S_k : \FI{\C} \to \C$ inductively as follows, first on objects:
\begin{itemize}
	\item $S_k(\br A) = A$,
	\item $S_k(A \otimes B) = S_k(A) \otimes S_k(B)$,
	\item $S_k(A^\star) = (S_k(A))^k$.
\end{itemize}
For $\ol{A} = (A_1, \dots , A_n)$, then $S_k(\ol{A}) = (S_k(A_1), \dots , S_k(A_n))$. $S_k$ is defined on morphisms as:
\begin{multicols}{2}
\begin{itemize}
	\item $S_k(\textit{id}_A) = \textit{id}_{S_k(A)}$,
	\item $S_k(\br{f}) = f$,
	\item $S_k(f ; g) = S_k(f) ; S_k(g)$,
	\item $S_k(f \otimes g) = S_k(f) \otimes S_k(g)$,
	\item $S_k(\chi_{A,B}) = \chi_{S_k(A) , S_k(B)}$.
\end{itemize}
\end{multicols}
\noindent For $f : S \otimes \ol{A} \cdot 1 \to \ol{B} \cdot 1 \otimes S$, then $S_k(\tau^\star_{S,\ol{A},\ol{B}}(f)) = \tau^k_{S,S_k(\ol{A}),S_k(\ol{B})}(S_k(f))$.

\begin{definition}
	Two morphisms $f,g : A \to B$ in $\FI{\C}$ are $\star$-equivalent if for any $k \in \Nat$, $S_k(f) = S_k(g)$.
	The parametric iteration category $\PI{\C}$ of $\C$ is defined as $\FI{\C}$ quotiented over this notion of $\star$-equivalence.
\end{definition}

We have a symmetric monoidal functor $\br{-} : \C \to \PI{\C}$ and for each $k \in \Nat$ a symmetric monoidal functor $S_k : \PI{\C} \to \C$. We consider the image of $\br{-}$ in $\PI{\C}$ as the constant fragment of $\PI{\C}$. It holds that $S_k \circ \br{-}$ is the identity functor on $\C$ for each $k \in \Nat$.

Before we continue on and prove results regarding $\star$-equivalence, we introduce a powerful proof technique we call the \emph{diagonal method}. Consider $\PI{\PI{\C}}$, which has two nested parametric iteration constructions, the second we label with $\star'$.
We construct a symmetric monoidal functor $M : \PI{\PI{\C}} \to \PI{\C}$ that unifies the two iterations, i.e.\ it satisfies:
\begin{itemize}
	\item $M (A^{\star'}) = M (A^\star) = (M A)^\star$,
	\item $M (\tau^{\star'}_{S,\ol{A},\ol{B}}(f)) = M (\tau^\star_{S,\ol{A},\ol{B}}(f)) = \tau^\star_{M S,M \ol{A},M \ol{B}}(M f)$.
\end{itemize}

\begin{lemma}
	There is a functor $M : \PI{\PI{\C}} \to \PI{\C}$ satisfying the properties above.
\end{lemma}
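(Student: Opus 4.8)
The plan is to build the functor in two steps, exploiting that $\PI{\PI{\C}}$ is a quotient of the free construction $\FI{\PI{\C}}$. First I would define a symmetric monoidal functor $\widetilde{M} \colon \FI{\PI{\C}} \to \PI{\C}$ on generators, and then show it is invariant under $\star'$-equivalence, so that it descends to the quotient $\PI{\PI{\C}}$. On objects I set $\widetilde{M}(\br X) = X$ for $X \in \PI{\C}$, $\widetilde{M}(A \otimes B) = \widetilde{M} A \otimes \widetilde{M} B$, and $\widetilde{M}(A^{\star'}) = (\widetilde{M} A)^\star$; on morphisms $\widetilde{M}(\br h) = h$, $\widetilde{M}$ preserves identities, composition, tensor and symmetries, and $\widetilde{M}(\tau^{\star'}_{S,\ol A, \ol B}(f)) = \tau^\star_{\widetilde{M} S,\, \widetilde{M} \ol A,\, \widetilde{M} \ol B}(\widetilde{M} f)$. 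This assignment is type-correct since $\widetilde{M}(\ol A \cdot \star') = (\widetilde{M} \ol A)\cdot \star$ and $\widetilde{M}(\ol A \cdot 1) = (\widetilde{M} \ol A)\cdot 1$. Checking that $\widetilde{M}$ respects the defining equations of $\FI{\PI{\C}}$ (the SMC axioms and the functoriality of $\br{-}$) is routine: the former hold because the target $\PI{\C}$ is itself an SMC and $\widetilde{M}$ preserves all structural generators, and the latter hold because $\widetilde{M}$ acts as the identity on morphisms coming from $\PI{\C}$. Thus $\widetilde{M}$ is a genuine symmetric monoidal functor already satisfying the two displayed equations of the statement.

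The heart of the argument is a \emph{diagonal} identity relating $\widetilde{M}$ to the evaluation functors. Writing $S'_k \colon \FI{\PI{\C}} \to \PI{\C}$ for the functor evaluating the outer star $\star'$ at $k$ (the analogue of $S_k$ for the outer construction), I claim that for every $k \in \Nat$
\[
S_k \circ \widetilde{M} \;=\; S_k \circ S'_k \colon \FI{\PI{\C}} \to \C .
\]
I would prove this by induction on the structure of objects and morphisms. The structural and base cases are immediate; the only substantive case is $\tau^{\star'}$, where on the left $S_k(\widetilde{M}(\tau^{\star'}(f))) = S_k(\tau^\star(\widetilde{M} f)) = \tau^k(S_k \widetilde{M} f)$ by the defining equation of $S_k$ on $\tau^\star$, while on the right $S_k(S'_k(\tau^{\star'}(f))) = S_k(\tau^k(S'_k f)) = \tau^k(S_k S'_k f)$. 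The second step on the right requires a sub-lemma that a symmetric monoidal functor preserves the finite iteration $\tau^k$, i.e. $S_k(\tau^k(g)) = \tau^k(S_k g)$; this follows by a straightforward induction on $k$ from the inductive definition of $\tau^k$, since $\tau^k$ is assembled purely from identities, symmetries, $\textsf{push}$, $\textsf{pop}$, composition and tensor, all preserved by $S_k$. The two sides then agree by the induction hypothesis $S_k \widetilde{M} f = S_k S'_k f$ (and likewise on the type arguments $S, \ol A, \ol B$).

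Finally I would descend $\widetilde{M}$ through the quotient. Suppose $f$ and $g$ are $\star'$-equivalent morphisms of $\FI{\PI{\C}}$, i.e. $S'_k(f) = S'_k(g)$ in $\PI{\C}$ for all $k$. To show $\widetilde{M} f = \widetilde{M} g$ in $\PI{\C}$ it suffices, by the definition of $\PI{\C}$ as the quotient of $\FI{\C}$ by $\star$-equivalence, to check $S_k(\widetilde{M} f) = S_k(\widetilde{M} g)$ in $\C$ for every $k$. Using the diagonal identity, $S_k(\widetilde{M} f) = S_k(S'_k f) = S_k(S'_k g) = S_k(\widetilde{M} g)$, where the middle equality is $\star'$-equivalence of $f$ and $g$ at the index $k$. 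Hence $\widetilde{M}$ is constant on $\star'$-equivalence classes and induces the desired symmetric monoidal functor $M \colon \PI{\PI{\C}} \to \PI{\C}$, which by construction satisfies the two equations of the statement. I expect the main obstacle to be the diagonal identity, and specifically its $\tau$ case together with the sub-lemma above: the reason it works is precisely that both the outer iteration $\star'$ and the inner iteration $\star$ are evaluated at the \emph{same} index $k$, so collapsing the two iterations before evaluation agrees with evaluating each in turn.
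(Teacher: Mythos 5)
Your proof is correct and follows essentially the same route as the paper: define the evident functor on the free construction $\FI{\PI{\C}}$ and show it descends to the quotient by establishing that $S_k \circ \widetilde{M} = S_k \circ S'_k$, which is exactly the diagonal identity the paper uses (written there as $S_k(M'f) = S_k(S_k(f))$). The only difference is that you spell out the structural induction and the sub-lemma $S_k(\tau^k(g)) = \tau^k(S_k(g))$, which the paper leaves implicit in calling $M'$ ``the obvious functor.''
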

\begin{proof}
	Let us consider the obvious functor $M' : \FI{\PI{\C}} \to \PI{\C}$.
	We prove that $M'$ preserves $\star$-equivalence. Let $f$ and $g$ be $\star$-equivalent morphisms in $\mathsf{FI}(\PI{\C})$, then for any $k \in \Nat$, $S_k(f) = S_k(g)$ in $\PI{C}$. So for any $r \in \Nat$, $S_r(S_k(f)) = S_r(S_k(g))$ in $\C$. Note that the $S_k$ instantiates the $\star'$, and the $S_r$ the $\star$.

	Now, for any $k \in \Nat$, $S_k(M' f) = S_k(S_k(f)) = S_k(S_k(g)) = S_k(M' g)$, hence $M' f = M' g$. So $M'$ preserves $\star$-equivalence and can be factored through $\PI{\PI{\C}}$ creating $M$.
\end{proof}

In practice, this allows us to prove $\star$-equivalence between morphisms by instantiating some of the iteration operations but not necessarily all. We can do this as long as we can find a separation into two iteration types, marking a choice of top level $\star$-iterations as $\star'$ in a way that type checks.
We now have sufficient tools to prove some useful properties.

\mylabel{def:tid}{\ensuremath{\tau}\text{-id}}
\mylabel{def:tswap}{\ensuremath{\tau}\text{-swap}}
\mylabel{def:tcomp}{\ensuremath{\tau}\text{-comp}}
\mylabel{def:tmono}{\ensuremath{\tau}\text{-mono}}
\begin{lemma}\label{lem:tau-prop}
	The following equations hold:
	\[
	\begin{array}{r@{\;}c@{\;}l@{\qquad}r@{\;}c@{\;}l}
		\vcen{pics/S3/star_id_a}
		&
		\stackrel{(\ref*{def:tid})}{=}
		&
		\vcen{pics/S3/star_id_b}
		&
		\vcen{pics/S3/star_swap_a}
		&
		\stackrel{(\ref*{def:tswap})}{=}
		&
		\vcen{pics/S3/star_swap_b}
		\\
		\vcen{pics/S3/star_comp_a}
		&
		\stackrel{(\ref*{def:tcomp})}{=}
		&
		\vcen{pics/S3/star_comp_b}
		&
		\vcen{pics/S3/star_mono_a}
		&
		\stackrel{(\ref*{def:tmono})}{=}
		&
		\vcen{pics/S3/star_mono_b}
	\end{array}
	\]
\end{lemma}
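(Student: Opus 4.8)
The plan is to discharge all four identities by reducing equality in $\PI{\C}$ to a statement about the finite iterates $\tau^k$. Recall that two morphisms of $\FI{\C}$ are identified in $\PI{\C}$ exactly when they are $\star$-equivalent, i.e.\ when $S_k$ sends them to the same morphism of $\C$ for every $k \in \Nat$. Since each $S_k$ is a symmetric monoidal functor satisfying $S_k(\tau^\star_{S,\ol{A},\ol{B}}(f)) = \tau^k_{S,S_k(\ol{A}),S_k(\ol{B})}(S_k(f))$, applying $S_k$ to either side of any of the four equations turns every occurrence of $\tau^\star$ into the concrete iterate $\tau^k$ while leaving the surrounding $\br{-}$, $\otimes$, composition and symmetries untouched. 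Thus it suffices to prove, for each fixed $k$, the analogue of the equation in $\C$ with $\tau^\star$ replaced by $\tau^k$. The one delicate point is that a clean induction wants the iterated morphism held \emph{fixed} while the iteration count varies, whereas a naive $S_k$ instantiates $f$ and the iteration length by the \emph{same} $k$. This is precisely what the diagonal method secures: marking the single outermost iteration on each side as $\star'$ and collapsing via $M : \PI{\PI{\C}} \to \PI{\C}$ lets us instantiate only that iteration to a finite length, keeping $f$ (together with any iterations nested inside it) symbolic in $\PI{\C}$.

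Each resulting finite statement is then proved by induction on the iteration count $k$, using the recursive definition of $\tau^{k+1}_{S,\ol{A},\ol{B}}(f)$ in terms of $\textsf{pop}^k_\ol{A}$, one copy of $f$, the inner $\tau^k_{S,\ol{A},\ol{B}}(f)$, and $\textsf{push}^k_\ol{B}$. In the base case $k = 0$ we have $\tau^0 = \id{S}$, so both sides of each equation collapse to identities and reassociating symmetries, and the claim holds by the SMC axioms of Appendix~\ref{sec:monoidal}. For the inductive step we unfold $\tau^{k+1}$ on both sides, substitute the inductive hypothesis for the inner $\tau^k$, and deform the two resulting diagrams into one another. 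The only moves required are bifunctoriality of $\otimes$ and naturality of the symmetry (sliding boxes past crossings), together with the observation that $\textsf{push}^k$ and $\textsf{pop}^k$ are built solely from symmetries, are mutually inverse, and are therefore natural in every wire they touch.

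I expect the genuinely routine cases to be \ref{def:tid} and \ref{def:tswap}, where the inductive step only has to thread the state wire (respectively a crossing) through one further element of the tuple, so that naturality closes the argument at once. The main obstacle is \ref{def:tcomp}, and to a lesser extent \ref{def:tmono}, where two iterations run in lockstep: the step must interleave the two recursive unfoldings and verify that the two families of push/pop shuffles realign after the inductive hypothesis is applied to the inner composite iteration. This is exactly where the diagonal method pays off — by keeping one of the two iterations symbolic we replace a delicate double induction by a single one, and the realignment collapses to the naturality of push and pop noted above. I would therefore organise the proof so that \ref{def:tcomp} and \ref{def:tmono} are handled via the diagonal method, isolating all the bookkeeping into the single coherence fact that push and pop commute with any box slid along their wires.
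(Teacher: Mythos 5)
Your proposal is correct and follows essentially the same route as the paper: the authors likewise invoke the diagonal method to instantiate only the top-level iterations by a finite $k$ (keeping the iterated morphism symbolic in $\PI{\C}$), and then prove each equation by induction on $k$ — a trivial base case from $\tau^0 = \textit{id}_S$, and an inductive step that unfolds $\tau^{k+1}$ once, applies the hypothesis to the inner $\tau^k$, and refolds using naturality of the symmetries underlying $\textsf{push}$ and $\textsf{pop}$. Your identification of (\ref*{def:tcomp}) as the case where the interleaving of two unfoldings must be realigned matches the paper's choice to present that case in the main text.
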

\begin{proof}
	Let us prove (\ref{def:tcomp}), the other equations are shown in Appendix \ref{ap:4}. We use the diagonal method, instantiating only the outer iteration operations. We perform induction on those instantiations:

	Induction basis:

	\begin{center}
		$
		\vcen{pics/S3/comp_IB_a}
		\stackrel{(\text{def})}{=}
		\vcen{pics/S3/comp_IB_b}
		\stackrel{(\text{def})}{=}
		\vcen{pics/S3/comp_IB_c}
		$
	\end{center}

	Induction step:

	\begin{center}
		$\
		\vcen{pics/S3/comp_IS_a}
		\stackrel{(\text{def})}{=}
		\vcen{pics/S3/comp_IS_b}
		\stackrel{(\text{hypo})}{=}
		\vcen{pics/S3/comp_IS_c}
		\stackrel{(\text{def})}{=}
		\vcen{pics/S3/comp_IS_d}
		$
	\end{center}
\end{proof}

\begin{corollary}
	There is a symmetric monoidal endofunctor on $\PI{\C}$ given by $A \mapsto A^\star$ and $f : A \to B \mapsto \tau^\star_{I,(A),(B)}(f)$.
\end{corollary}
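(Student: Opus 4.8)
The plan is to read off the symmetric monoidal functor axioms one by one from \cref{lem:tau-prop}, so that the corollary becomes essentially a repackaging of those four equations. Write $F$ for the proposed assignment $A \mapsto A^\star$ and $f \mapsto \tau^\star_{I,(A),(B)}(f)$. Since $F$ must act on the quotient category $\PI{\C}$, I would first note that it is well defined on $\star$-equivalence classes: if $f$ and $g$ are $\star$-equivalent then for every $k$ we have $S_k(\tau^\star_{I,(A),(B)}(f)) = \tau^k_{I,(S_k A),(S_k B)}(S_k f) = \tau^k_{I,(S_k A),(S_k B)}(S_k g) = S_k(\tau^\star_{I,(A),(B)}(g))$, using the defining clause $S_k(\tau^\star_{S,\ol{A},\ol{B}}(f)) = \tau^k_{S,S_k(\ol{A}),S_k(\ol{B})}(S_k f)$, so $\tau^\star(f)$ and $\tau^\star(g)$ are again $\star$-equivalent.

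Functoriality is then immediate. Preservation of identities, $F(\id_A) = \tau^\star_{I,(A),(A)}(\id_A) = \id_{A^\star}$, is exactly equation (\ref{def:tid}); preservation of composition, $F(f;g) = \tau^\star_{I,(A),(C)}(f;g) = \tau^\star_{I,(A),(B)}(f);\tau^\star_{I,(B),(C)}(g) = F(f);F(g)$ for composable $f : A \to B$ and $g : B \to C$ (viewed with trivial state $S = I$), is equation (\ref{def:tcomp}).

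For the monoidal structure I would exhibit the coherence isomorphisms explicitly. Because $A^\star \otimes B^\star$ and $(A \otimes B)^\star$ are not equal on the nose --- each $S_k$ sends them to $(S_k A)^k \otimes (S_k B)^k$ and $(S_k A \otimes S_k B)^k$, which differ by an interleaving permutation --- I take the laxitor to be the ``zip'' morphism $\phi_{A,B} := \tau^\star_{I,(A,B),(A \otimes B)}(\id_{A \otimes B}) : A^\star \otimes B^\star \to (A \otimes B)^\star$, with inverse the analogous ``unzip'' $\tau^\star_{I,(A \otimes B),(A,B)}(\id_{A \otimes B})$ and with the evident unit morphism $I \to I^\star$. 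Naturality of $\phi$, i.e.\ the compatibility of $F(f)\otimes F(g)$ with $F(f\otimes g)$ through these zips, is provided by (\ref{def:tmono}), and compatibility with the symmetry, $\phi_{A,B};F(\chi_{A,B}) = \chi_{A^\star,B^\star};\phi_{B,A}$, is provided by (\ref{def:tswap}).

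The hard part is not any individual equation but the monoidal coherence bookkeeping: verifying that the zip maps really are inverse to the unzip maps in $\PI{\C}$ (not merely after each $S_k$), and that the associativity and unit coherence diagrams for $\phi$ commute. Here the diagonal method is the right tool: instantiating only the outermost iteration reduces each such diagram to an identity between interleaving symmetries in $\C$, which then holds by strictness and the coherence theorem for symmetric monoidal categories. I expect the bulk of the remaining work to lie in these routine reductions rather than in any conceptual difficulty.
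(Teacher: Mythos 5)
Your proposal is correct and follows the same route the paper intends: the corollary is stated as an immediate consequence of Lemma~\ref{lem:tau-prop}, with (\ref{def:tid}), (\ref{def:tcomp}), (\ref{def:tmono}) and (\ref{def:tswap}) supplying identity preservation, composition preservation, the monoidal laxitor (the zip) and symmetry compatibility respectively, exactly as you read them off. Your additional checks (well-definedness on $\star$-equivalence classes via $S_k$, and the coherence/invertibility of zip via the diagonal method) are sound elaborations of details the paper leaves implicit; note only that equality in $\PI{\C}$ \emph{is} by definition equality after every $S_k$, so your parenthetical worry about zip being inverse to unzip ``not merely after each $S_k$'' is vacuous.
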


The following property is particularly useful in forthcoming examples:

\begin{lemma}[Newton's cradle]\label{lem:Newton0}
	For any $f \colon S \to Z, g \colon Z \otimes \ol{A} \cdot 1 \to \ol{B} \cdot 1 \otimes Z, h \colon S \otimes \ol{A} \cdot 1 \to \ol{B} \cdot 1 \otimes S$, if $\;\;\vcen{pics/S3/Newton_S_a} \stackrel{(\text{I})}{=} \vcen{pics/S3/Newton_S_b}\;\;$ then $\;\;\vcen{pics/S3/Newton_S_c} = \vcen{pics/S3/Newton_S_d}$.
\end{lemma}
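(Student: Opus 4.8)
The plan is to reduce the $\star$-equivalence asserted by the conclusion to a family of finite statements, one per iteration count, and then to argue by induction. By the diagonal method I would mark the outer iteration occurring in the conclusion as $\star'$ and regard $f$, $g$, $h$ as fixed morphisms of the inner construction $\PI{\C}$; instantiating only $\star'$ then reduces the goal to proving, for every $k \in \Nat$, the equation
\[
(f \otimes \id{\ol{A}\cdot k}) ; \tau^k_{Z,\ol{A},\ol{B}}(g) \;=\; \tau^k_{S,\ol{A},\ol{B}}(h) ; (\id{\ol{B}\cdot k} \otimes f),
\]
in which $f$, $g$ and $h$ no longer depend on $k$. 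Relying on the diagonal method here is what makes the argument valid even when $f$, $g$ and $h$ themselves contain nested iterations: a naive appeal to $S_k$ would make the instantiated hypothesis vary with $k$ and break the induction.

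I would prove the displayed equation by induction on $k$. The base case $k = 0$ is immediate, since $\tau^0$ is the identity and $\ol{A}\cdot 0 = \ol{B}\cdot 0 = I$, so both sides collapse to $f$. For the inductive step I would unfold $\tau^{k+1}$ on both sides via its recursive definition and then transport $f$ from the left of the state wire to the right in three moves. First, because $f$ acts only on the state wire while $\textsf{pop}^k_\ol{A}$ acts only on the $\ol{A}$-wires, the interchange law lets me slide $f$ past $\textsf{pop}^k_\ol{A}$. Second, on the leading single application of the body I invoke the hypothesis~(I), namely $(f \otimes \id{\ol{A}\cdot 1}) ; g = h ; (\id{\ol{B}\cdot 1} \otimes f)$, rewriting one $g$-step into one $h$-step and thereby moving $f$ one slot to the right. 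Third, the remaining composite now contains $(f \otimes \id{\ol{A}\cdot k}) ; \tau^k_{Z,\ol{A},\ol{B}}(g)$, to which the induction hypothesis applies, turning it into $\tau^k_{S,\ol{A},\ol{B}}(h) ; (\id{\ol{B}\cdot k} \otimes f)$. Finally I slide $f$ past $\textsf{push}^k_\ol{B}$ --- again justified since $f$ meets only the state wire and $\textsf{push}^k_\ol{B}$ only the $\ol{B}$-wires --- so that $f$ lands to the right of all of $\ol{B}\cdot(k+1)$, which is exactly the recursive unfolding of the right-hand side $\tau^{k+1}_{S,\ol{A},\ol{B}}(h) ; (\id{\ol{B}\cdot(k+1)} \otimes f)$.

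The only genuinely delicate point is the wire bookkeeping around $\textsf{pop}^k_\ol{A}$ and $\textsf{push}^k_\ol{B}$: one must check that sliding $f$ past them, while lining the single-step slots $\ol{A}\cdot 1$ and $\ol{B}\cdot 1$ up against the remaining $\star$-tuples $\ol{A}\cdot k$ and $\ol{B}\cdot k$, yields precisely the reassociations prescribed by the definition of $\tau^{k+1}$. Each of these is an instance of interchange applied to morphisms on disjoint wires, so diagrammatically every move is a mere isotopy and no further equations are required. This is also what the name suggests: $f$ is transmitted unchanged from the left end of the iterated body to the right end, like an impulse through a Newton's cradle.
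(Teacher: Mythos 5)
Your proof is correct and follows essentially the same route as the paper: the diagonal method is used to instantiate only the outer iteration (leaving $f$, $g$, $h$ fixed), and the resulting family of equations is proved by induction on $k$ via the sequence unfold--apply~(I)--apply the induction hypothesis--refold, exactly as in the paper's diagrammatic derivation.
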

\begin{proof}
	We use the diagonal method, and choose to instantiate only the top level iterations.
	We show by induction on $k \in \Nat$ that for each instantiation by $k$ the equation holds.

	Induction basis:

	\begin{center}
		$
		\vcen{pics/S3/Newton_IBb_a}
		\stackrel{(\text{def})}{=}
		\vcen{pics/S3/Newton_IBb_b}
		\stackrel{(\text{def})}{=}
		\vcen{pics/S3/Newton_IBb_c}
		$
	\end{center}

	Induction step:

	\begin{center}
		$\
		\vcen{pics/S3/Newton_ISb_a}
		\stackrel{(\text{def})}{=}
		\vcen{pics/S3/Newton_ISb_b}
		\stackrel{(\text{I})}{=}
		\vcen{pics/S3/Newton_ISb_c}
		\stackrel{(\text{hypo})}{=}
		\vcen{pics/S3/Newton_ISb_d}
		\stackrel{(\text{def})}{=}
		\vcen{pics/S3/Newton_ISb_e}
		$
	\end{center}

\end{proof}

Let us look at two examples which occur in $\PI{\C}$ for any symmetric monoidal category $\C$.

\begin{example}
	Consider \emph{zipping} two $\star$-tuples using $\tau_{I,(A,B),(A \otimes B)}(\textit{id}_{A \otimes B})$:

	\begin{center}
		\vcen{pics/S3/zip_def}
		\qquad \qquad
		\vcen{pics/S3/zip_ex_a} = \vcen{pics/S3/zip_ex_b}
	\end{center}

	On the right we see an example instantiation performing three iterations.
	The zip operations gives one direction of an isomorphism between $A^\star \otimes B^\star$ and $(A \otimes B)^\star$.
\end{example}

\begin{example}
	The morphism $\tau^\star_{A,(A),(A)}(\textit{id}_{A^2}) : A^\star \otimes A \to A^\star \otimes A$ pushes a new element to the front of a $\star$-tuple, and pops the last element from the $\star$-tuple to make room.

	\begin{center}
		$\tau^\star_{A,(A),(A)}(\textit{id}_{A^2}) = \vcen{pics/S3/pushpop_def}
		\qquad \qquad
		\tau^k_{A,(A),(A)}(\textit{id}_{A^2}) =
		\vcen{pics/S3/pushpop_a} =
		\vcen{pics/S3/pushpop_b}$
	\end{center}

	We can use this to cycle forward and backward through the elements of a $\star$-tuple.
	We take $\textsf{cycle}_A = \tau^\star_{A,(A),(A)}(\textit{id}_{A^2}) ; \chi_{A^\star , A}$ and $\textsf{cycle-back}_A = \tau^\star_{A \otimes A^\star , () , ()}(\textsf{cycle}_A)$,

	\begin{center}
		$\textsf{cycle}_A  = \vcen{pics/S3/cycle}
		\qquad \qquad
		\textsf{cycle-back}_A  =
		\vcen{pics/S3/cycle2}$
	\end{center}

	Here we cycle back by cycling forward one less than what is needed to get back to where we started.
	Concretely, note that $S_k(\textsf{cycle}_A) = \chi_{A^k,A}$, and for any $a , b \in \Nat$, $\chi_{A^{a+1},A^b} ; \chi_{A^{a+b},A} = \chi_{A^b,A^{b+1}}$,
	so by induction, $S_k(\textsf{cycle-back}_A) = (\chi_{A^k,A})^k = \chi_{A,A^k}$, so $S_k(\textsf{cycle-back}_A ; \textsf{cycle}_A) = \chi_{A,A^k} ; \chi_{A^k,A} = \textit{id}_{A^{k+1}}$.
	We conclude that:

	\begin{center}
		$\vcen{pics/S3/cycle_a} = \vcen{pics/S3/cycle_b}$
	\end{center}
\end{example}

\section{Reasoning with Probabilities and Distances}\label{sec:preliminaries}

To model richer computational behaviors, we consider symmetric monoidal categories with additional properties and structure. In particular, for probabilistic computations, we focus on metric-enriched symmetric monoidal categories and Markov categories~\cite{fritz_2020}.

\begin{figure}
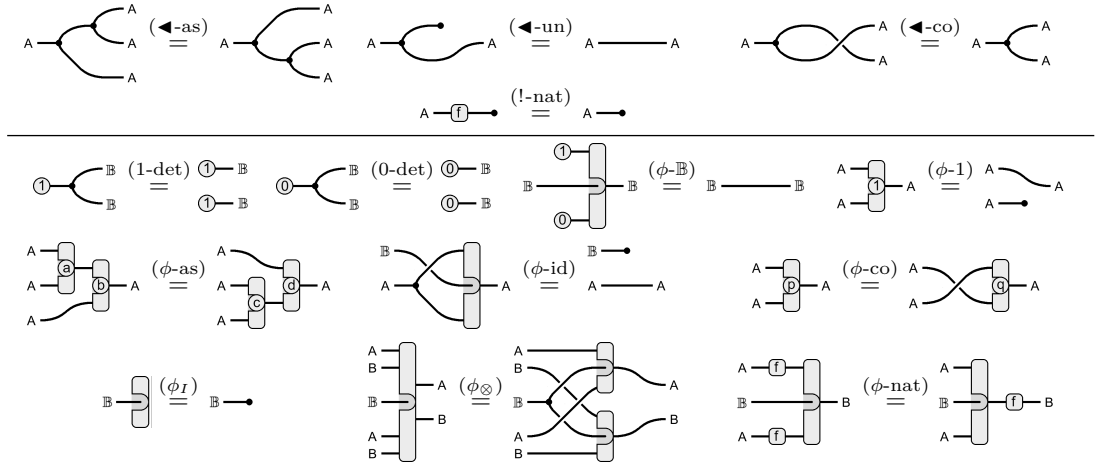

    \mylabel{ax:copyas}{\ensuremath{\copier{}}\text{-as}}
    \mylabel{ax:copyun}{\ensuremath{\copier{}}\text{-un}}
    \mylabel{ax:copyco}{\ensuremath{\copier{}}\text{-co}}
    \mylabel{ax:discardnat}{\ensuremath{\discharger{}}\text{-nat}}

    \mylabel{ax:onedet}{1\text{-det}}
    \mylabel{ax:zerodet}{0\text{-det}}

    \mylabel{ax:phiI}{\ensuremath{\phi_I}}
    \mylabel{ax:phiTimes}{\ensuremath{\phi_\otimes}}
    \mylabel{ax:phiOne}{\ensuremath{\phi}\text{-1}}
    \mylabel{ax:phico}{\ensuremath{\phi}\text{-co}}
    \mylabel{ax:phias}{\ensuremath{\phi}\text{-as}}
    \mylabel{ax:phinat}{\ensuremath{\phi}\text{-nat}}
    \mylabel{ax:phiBool}{\ensuremath{\phi\text{-}\Bool}}
    \mylabel{ax:phiIdemp}{\ensuremath{\phi}\text{-id}}
    \[
        \begin{array}{c@{\quad}c@{\quad}c}
            \vcen{pics/S4/ax/markov/assoc/lhs} \stackrel{(\ref*{ax:copyas})}{=} \vcen{pics/S4/ax/markov/assoc/rhs}
            &
            \vcen{pics/S4/ax/markov/unit/lhs} \stackrel{(\ref*{ax:copyun})}{=} \vcen{pics/S4/id}
            &
            \vcen{pics/S4/ax/markov/comm/lhs} \stackrel{(\ref*{ax:copyco})}{=} \vcen{pics/S4/copy}
            \\
            &
            \vcen{pics/S4/ax/markov/nat/lhs} \stackrel{(\ref*{ax:discardnat})}{=} \vcen{pics/S4/discard}
            &
            \\
            \midrule
            \multicolumn{3}{c}{\begin{array}{c@{\quad}c@{\quad}c@{\quad}c}
                \vcen{pics/S4/ax/pchoice/one/lhs} \stackrel{(\ref*{ax:onedet})}{=} \vcen{pics/S4/ax/pchoice/one/rhs}
                &
                \vcen{pics/S4/ax/pchoice/zero/lhs} \stackrel{(\ref*{ax:zerodet})}{=} \vcen{pics/S4/ax/pchoice/zero/rhs}
                &
                \vcen{pics/S4/ax/pchoice/phiBool/lhs} \stackrel{(\ref*{ax:phiBool})}{=} \vcen{pics/S4/ax/pchoice/phiBool/rhs}
                &
                \vcen{pics/S4/ax/pchoice/phiOne/lhs} \stackrel{(\ref*{ax:phiOne})}{=} \vcen{pics/S4/ax/pchoice/phiOne/rhs}
            \end{array}}
            \\[15pt]
            \vcen{pics/S4/ax/pchoice/phiAs/lhs} \stackrel{(\ref*{ax:phias})}{=} \vcen{pics/S4/ax/pchoice/phiAs/rhs}
            &
            \vcen{pics/S4/ax/pchoice/phiIdemp/lhs} \stackrel{(\ref*{ax:phiIdemp})}{=} \vcen{pics/S4/ax/pchoice/phiIdemp/rhs}
            &
            \vcen{pics/S4/ax/pchoice/phiCo/lhs} \stackrel{(\ref*{ax:phico})}{=} \vcen{pics/S4/ax/pchoice/phiCo/rhs}
            \\[15pt]
            \vcen{pics/S4/ax/pchoice/phiI/lhs} \stackrel{(\ref*{ax:phiI})}{=} \vcen{pics/S4/discardB}
            &
            \vcen{pics/S4/ax/pchoice/phiTimes/lhs} \stackrel{(\ref*{ax:phiTimes})}{=} \vcen{pics/S4/ax/pchoice/phiTimes/rhs}
            &
            \vcen{pics/S4/ax/pchoice/phiNat/lhs} \stackrel{(\ref*{ax:phinat})}{=} \vcen{pics/S4/ax/pchoice/phiNat/rhs}
        \end{array}
    \]
    \caption{Axioms for Markov categories (top) and probabilistic choice (bottom). Here $f \colon A \to B$, $q = 1-p$, $c = \frac{(1-a)b}{1 - ab}$ and $d = ab$.}
    \label{fig:ax markov and p choice}
\end{figure}

\begin{definition}\label{def:markov cat with p choice}
    A \emph{Markov category with probabilistic choice} is a symmetric monoidal category equipped with operations $\discharger{A} \colon A \to I$, $\copier{A} \colon A \to A \otimes A$, $\langle p \rangle \colon I \to \Bool$ and $\phi_A \colon A \otimes \Bool \otimes A \to A$, represented in string diagrams, respectively, as
    \begin{center}
        \vcen{pics/S4/discard}
        \qquad\qquad
        \vcen{pics/S4/copy}
        \qquad\qquad
        \vcen{pics/S4/p}
        \qquad\qquad
        \vcen{pics/S4/phi}
    \end{center}
    where $p \in [0,1]$, $\Bool$ is a designated object and $A$ is any object, and such that the axioms in Figure~\ref{fig:ax markov and p choice} hold. That is:
    \begin{itemize}
        \item $(\copier{A} , \discharger{A})$ is a cocommutative comonoid, that is $\copier{A}$ is associative~\eqref{ax:copyas}, cocommutative~\eqref{ax:copyco} and $!_A$ is the unit~\eqref{ax:copyun},
        \item $\copier{A}$ and $\discharger{A}$ are coherent with the monoidal structure,
		\item $\discharger{A}$ is natural \eqref{ax:discardnat},
		\item $\langle 1 \rangle$ and $\langle 0 \rangle$ are deterministic (\eqref{ax:onedet} and \eqref{ax:zerodet}),
        \item $\phi_A$ is coherent with the monoidal structure (\eqref{ax:phiI} and \eqref{ax:phiTimes}),
        \item $\phi_A$ is natural in $A$ \eqref{ax:phinat},
        \item $\phi_A^p$--where $\phi_A^p$ is $\vcen{pics/S4/phi_p2}$, often simply $\vcen{pics/S4/phi_p}$--is a convex algebra, that is, $\phi^p_A$ is idempotent \eqref{ax:phiIdemp} and satisfies parametric associativity \eqref{ax:phias} and commutativity \eqref{ax:phico},
        \item $\phi$ and $\langle 1 \rangle$, $\langle 0 \rangle$ interact via~\eqref{ax:phiBool} and \eqref{ax:phiOne}.
    \end{itemize}
\end{definition}

The $\Bool$ represents the type of two bit values, $1$ and $0$. $\langle p \rangle$ describes a weighted coin flip, which has a chance of $p$ of producing $1$, and a chance of $1-p$ of producing $0$. The $\phi$ represents an \emph{if-gate}, choosing between two results of the same type using a bit as an argument.

\begin{figure}
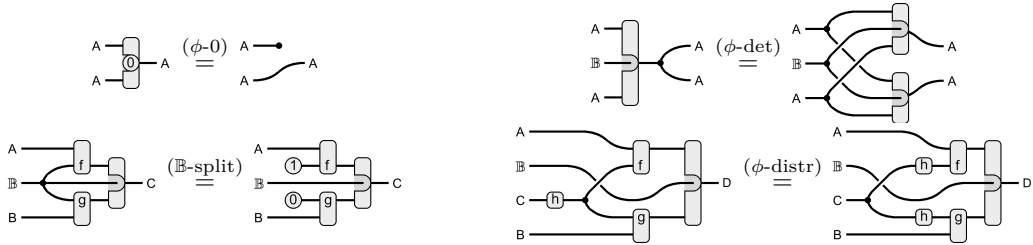

    \mylabel{ax:phiZero}{\ensuremath{\phi}\text{-0}}
    \mylabel{ax:phidet}{\ensuremath{\phi}\text{-det}}
    \mylabel{ax:boolsplit}{\ensuremath{\Bool}\text{-split}}
    \mylabel{ax:phidistr}{\ensuremath{\phi}\text{-distr}}
    \[
    \begin{array}{c@{\qquad\qquad}c}
        \vcen{pics/S4/dlaws/phiZero/lhs} \stackrel{(\ref*{ax:phiZero})}{=} \vcen{pics/S4/dlaws/phiZero/rhs}
        &
        \vcen{pics/S4/dlaws/phidet/lhs} \stackrel{(\ref*{ax:phidet})}{=} \vcen{pics/S4/dlaws/phidet/rhs}
        \\
        \vcen{pics/S4/dlaws/decomp/lhs} \stackrel{(\ref*{ax:boolsplit})}{=} \vcen{pics/S4/dlaws/decomp/rhs}
        &
        \vcen{pics/S4/dlaws/hcopy/lhs} \stackrel{(\ref*{ax:phidistr})}{=} \vcen{pics/S4/dlaws/hcopy/rhs}
    \end{array}
    \]
    \caption{Derived laws for Markov categories with probabilistic choice.}
    \label{fig:derived laws markov p choice}
\end{figure}

\begin{lemma}\label{lem:split}
    The laws in Figure~\ref{fig:derived laws markov p choice} hold in any Markov category with probabilistic choice.
\end{lemma}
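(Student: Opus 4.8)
The plan is to treat Lemma~\ref{lem:split} as four separate derivations, one for each equation in Figure~\ref{fig:derived laws markov p choice}, obtained purely by graphical rewriting from the axioms of Figure~\ref{fig:ax markov and p choice}. Since everything lives in the string-diagrammatic calculus, each rewriting step is the application of a single axiom, so the real content is bookkeeping: tracking which wire carries the $\Bool$, how the convex parameters are reindexed by parametric commutativity \eqref{ax:phico} and associativity \eqref{ax:phias}, and where the comonoid structure $(\copier{A},\discharger{A})$ is used. I would not prove the four laws in the order listed, but in an order that lets later derivations reuse earlier ones, establishing \eqref{ax:phiZero} and \eqref{ax:boolsplit} first and then feeding them into \eqref{ax:phidet} and \eqref{ax:phidistr}.

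First I would prove \eqref{ax:phiZero}, the $\langle 0\rangle$-counterpart of the axiom \eqref{ax:phiOne}. The idea is that $\langle 0\rangle$ selects the branch opposite to the one chosen by $\langle 1\rangle$: diagrammatically I would apply commutativity \eqref{ax:phico} to swap the two value arguments of $\phi$, which replaces the selector weight by its complement, and then collapse the resulting $\langle 1\rangle$-gate with \eqref{ax:phiOne}, discarding the dead branch by naturality of $\discharger{A}$ \eqref{ax:discardnat}. Next I would establish \eqref{ax:boolsplit}, which I expect to be the structural heart of the lemma: it expresses that a $\Bool$-wire can be case-split into its two deterministic values. I would build it from the interaction axiom \eqref{ax:phiBool} together with the determinism of $\langle 0\rangle$ and $\langle 1\rangle$ (\eqref{ax:zerodet} and \eqref{ax:onedet}), duplicating and reorganising the selector using the comonoid laws \eqref{ax:copyas}--\eqref{ax:copyco}.

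With these in hand, \eqref{ax:phidet} should follow by routing a deterministic bit through the $\Bool$-split and simplifying each branch with \eqref{ax:phiOne}/\eqref{ax:phiZero} and naturality of $\phi$ \eqref{ax:phinat}. For \eqref{ax:phidistr} I would push the copier $\copier{A}$ through the if-gate by naturality of $\phi$ in its object argument \eqref{ax:phinat}, turning $\phi$ over $A$ into $\phi$ over $A\otimes A$, and then split that using the monoidal coherence of $\phi$ (\eqref{ax:phiI} and \eqref{ax:phiTimes}) so that a \emph{single, copied} selector drives both components.

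The step I expect to be the genuine obstacle is the duplication in \eqref{ax:phidistr} (and, more subtly, in \eqref{ax:boolsplit}): naive distributivity of copying over probabilistic choice is \emph{false} in a Markov category, because copying \emph{after} a random choice correlates the two copies whereas copying \emph{before} does not. The derivation must therefore thread the duplication through the \emph{selector} rather than through the outputs, and one has to verify that the reindexed convex weights produced by \eqref{ax:phias} — namely $c=\frac{(1-a)b}{1-ab}$ and $d=ab$ — genuinely agree on the two sides. Matching these coefficients, rather than the diagrammatic shape, which is routine, is where the care is needed; everything else reduces to comonoid and naturality rewrites.
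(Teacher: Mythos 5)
Your plan is essentially the paper's: each of the four laws is derived by diagrammatic rewriting from Figure~\ref{fig:ax markov and p choice}, with \eqref{ax:phiZero} falling out of parametric commutativity \eqref{ax:phico} applied to \eqref{ax:phiOne}, with \eqref{ax:boolsplit} obtained by inserting \eqref{ax:phiBool} on the boolean wire, pulling the surrounding morphism through the freshly introduced if-gate, and collapsing branches via determinism of $\langle 0\rangle,\langle 1\rangle$, the $\phi$-reduction laws and naturality of $\discharger{}$, and with \eqref{ax:phidistr} handled by naturality and the monoidal coherence of $\phi$ so that one copied selector drives both components — exactly the ingredients the paper uses, and your caution about matching the reindexed convex coefficients is where the only real care is needed. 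The one point of divergence is \eqref{ax:phidet}: the paper proves it directly in two steps, first using naturality \eqref{ax:phinat} to push $\copier{A}$ past $\phi_A$ (turning it into $\phi_{A\otimes A}$) and then splitting with \eqref{ax:phiTimes}; no case analysis on the selector is involved. Your proposed detour through \eqref{ax:boolsplit} can be made to work, but after reducing each branch with \eqref{ax:phiOne} and \eqref{ax:phiZero} you still have to reassemble two independent if-gates sharing a copied boolean, and that reassembly is precisely the \eqref{ax:phinat}--\eqref{ax:phiTimes} computation you were routing around, so the direct argument is both shorter and logically prior. A minor omission: your recipe for \eqref{ax:boolsplit} does not mention naturality of $\phi$, which is needed to pull the ambient morphism through the case split; with that added, the derivations go through as in the paper.
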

\begin{proof}
    We give a proof only for~\eqref{ax:phidet}, the rest can be found in Appendix~\ref{ap:4}.
    \[
        \vcen{pics/S4/dlaws/phidet/lhs}
        \stackrel{(\ref*{ax:phinat})}{=}
        \vcen{pics/S4/dlaws/phidet/step}
        \stackrel{(\ref*{ax:phiTimes})}{=}
        \vcen{pics/S4/dlaws/phidet/rhs}
    \]
\end{proof}
    Our leading example of Markov category with probabilistic choice is $\FStoch$, the category of sets and stochastic maps. Objects are sets and morphisms $f \colon A \to B$ are functions $A \to \Dist B$ mapping each $a \in A$ to a finitely-supported discrete probability distribution over $B$. We represent a distribution as a function $v : A \to [0,1]$ with finite support $\texttt{supp}(v) := \{a \in A \mid v(a) > 0\}$ and such that $\sum_{a \in A} v(a) = 1$, or, alternatively, as a formal sum  $\sum_i{p_i \ket{a_i}}$, where the $\ket{\cdot}$ notation is used to separate the elements $a_i \in A$ from their associated probability $p_i \in [0,1]$.

    We write $f(b | a)$ for $f(a)(b)$, representing the probability that $f$ returns $b$, given $a$. For two stochastic maps $f \colon A \to B, g \colon B \to C$, their composition is defined by summing over the middle variable, i.e. $(f;g)(c \mid a) := \sum_{b \in B} g(c \mid b)f(b \mid a)$. Identities $id_A \colon A \to A$ map each $a \in A$ to the Dirac distribution at $a$, i.e. $\id{A}(a) := \ket{a}$.

    Given two distributions $v \in \Dist A$ and $u \in \Dist B$, their product $v \otimes w \in \Dist{(A \times B)}$ is given by $v \otimes u(a,b) := v(a)u(b)$.
	This yields a symmetric monoidal structure, with the monoidal product $\otimes$ being the cartesian product of sets $\times$ on objects and with monoidal unit the singleton set $I := \{ \bullet \}$. For arrows $f \colon A \to B$ and $g \colon C \to D$, $f \otimes g \colon A \otimes C \to B \otimes D$ is defined as $f \otimes g (b,d | a,c) := f(b | a)g(d|c)$, and symmetries $\chi_{A,B} \colon A \otimes B \to B \otimes A$ as $\chi_{A,B}(a,b) := \ket{b} \otimes \ket{a}$.

    The comonoid structure is given by $\copier{A}(a) := \ket{a} \otimes \ket{a}$ and $\discharger{A}(a) := \ket{\bullet}$, while for the probabilistic choice structure we take as choice object the set of Booleans $\{1,0\}$ and, for any $p \in [0,1]$, we define $\langle p \rangle \colon \{ \bullet \} \to \{1,0\}$ as $\langle p \rangle(\bullet) := p\ket{1} + (1-p)\ket{0}$ and $\phi_A \colon A \otimes \{1,0\} \otimes A \to A$ as $\phi_A(a,1,b) := \ket{a}$ and $\phi_A(a,0,b) := \ket{b}$.
			Note that $\phi^p_A(a,b) = p \ket{a} + (1-p) \ket{b}$.

    Throughout the rest of the paper we will work in $\FStoch_{\{1,0\}}$, the subcategory of $\FStoch$ consisting of stochastic maps between sets that are powers of the booleans.
    The structure of Markov category with probabilistic choice, introduced here, is novel and it offers an alternative presentation of $\FStoch_{\{1,0\}}$ to the one in~\cite{piedeleu2025boolean}. In particular: 
   \begin{theorem}\label{thm:complete}
   the axioms in Figure~\ref{fig:ax markov and p choice} are \emph{complete} for $\FStoch_{\{1,0\}}$.
   \end{theorem}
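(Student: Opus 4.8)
The plan is to prove completeness by establishing a normal form for morphisms in $\FStoch_{\{1,0\}}$ and showing that the axioms of Figure~\ref{fig:ax markov and p choice} suffice to rewrite any diagram into it. Concretely, a morphism $f \colon \Bool^n \to \Bool^m$ is a stochastic matrix whose entries are determined by the probabilities $f(b \mid a)$ for $a \in \Bool^n$, $b \in \Bool^m$. Since the category has a comonoid structure and is generated under the monoidal product, I would first reduce to the case $m = 1$: using the copier $\copier{}$, the discharger $\discharger{}$, and naturality, any $f \colon \Bool^n \to \Bool^m$ can be written as a tupling of its $m$ marginals composed with copies of the input, so it suffices to normalise each coordinate map $\Bool^n \to \Bool$ and then reassemble. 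This reduction relies only on the comonoid axioms~\eqref{ax:copyas}, \eqref{ax:copyun}, \eqref{ax:copyco}, \eqref{ax:discardnat} together with coherence.

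\medskip
The heart of the argument is the single-output case. A morphism $\Bool^n \to \Bool$ is specified by $2^n$ probabilities $p_a = f(1 \mid \ket a)$, one for each input bit-string $a$. The idea is to realise such a map as an iterated probabilistic choice: using the if-gate $\phi$ to case-split on the $n$ input bits, one builds a decision-tree diagram that, on each leaf corresponding to a fixed $a$, outputs the constant distribution $p_a \ket 1 + (1-p_a)\ket 0 = \phi^{p_a}(\ket 1, \ket 0)$. I expect the key structural tool here to be~\eqref{ax:boolsplit}, which lets one decompose along a Boolean wire into the two deterministic branches selected by $\langle 1\rangle$ and $\langle 0\rangle$, together with~\eqref{ax:phidet} and the determinism axioms~\eqref{ax:onedet}, \eqref{ax:zerodet} to handle the leaves. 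Applying $\Bool$-split to each input coordinate in turn produces exactly the decision tree, so every single-output map is provably equal to a canonical tree of $\phi$'s with weighted coins at the leaves.

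\medskip
The remaining, and genuinely delicate, step is \emph{uniqueness}: I must show that two syntactically distinct normal forms that denote the same stochastic map are provably equal, and conversely that distinct maps yield distinct normal forms. For this I would argue that the decision tree is canonical in the sense that its leaf data are recoverable from the denotation --- indeed $p_a$ is exactly the probability the map assigns to input $a$ --- so equal denotations force equal leaf coins, and equality of the coins $\langle p\rangle = \langle p'\rangle$ must itself be a provable consequence of the convex-algebra axioms whenever $p = p'$. Here the parametric-associativity~\eqref{ax:phias} and idempotence~\eqref{ax:phiIdemp} axioms, with the side conditions $c = \frac{(1-a)b}{1-ab}$ and $d = ab$ from the figure's caption, are what allow arbitrary nestings of weighted choices to be recombined into a single coin of the correct weight; this is the algebraic analogue of the fact that nested convex combinations collapse to one convex combination.

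\medskip
The main obstacle I anticipate is precisely this last recombination: the completeness of the convex-algebra fragment, i.e.\ showing that any two $\phi$-expressions computing the same convex combination are interconvertible using only~\eqref{ax:phias}, \eqref{ax:phico}, \eqref{ax:phiIdemp}. This is essentially a word problem for the theory of convex spaces (barycentric algebras), and I would either invoke a known presentation-completeness result for that theory or prove it directly by normalising every $\phi$-expression to a single coin flip via repeated application of parametric associativity. Once the convex fragment is settled, the comonoid reduction and the $\Bool$-split decision-tree construction assemble into a full normal form, and completeness follows because the normal form is determined by, and determines, the underlying stochastic map.
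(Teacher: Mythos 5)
Your overall strategy---case-split on the input bits with $\phi$ to build a decision tree, put a weighted-coin expression at each leaf, and then settle the word problem for the convex-algebra fragment by recombining nested choices via \eqref{ax:phias}, \eqref{ax:phico} and \eqref{ax:phiIdemp}---is essentially the paper's: the paper's normal form for $\Bool^{k+1}\to\Bool^n$ is exactly $(\copier{\Bool^k}\otimes id_\Bool);(id_{\Bool^k}\otimes\chi_{\Bool^k,\Bool});(f_1\otimes id_\Bool\otimes f_0);\phi_{\Bool^n}$, and the leaf data are \emph{sorted} weighted probabilistic trees, whose uniqueness is obtained by sorting and re-associating branches with precisely the axioms you name.

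However, your first reduction step is wrong, and it would sink the proof. A stochastic map $f\colon\Bool^n\to\Bool^m$ is \emph{not} determined by its $m$ output marginals, and it is not provably (nor semantically) equal to $\copier{}$ followed by the tensor of its coordinate maps: that composite denotes the product of the marginals and destroys all correlation between output bits. Concretely, $\bullet\mapsto\frac12\ket{00}+\frac12\ket{11}$ and $\bullet\mapsto\frac14(\ket{00}+\ket{01}+\ket{10}+\ket{11})$ have identical marginals but are distinct morphisms of $\FStoch_{\{1,0\}}$. The underlying categorical reason is that in a Markov category the copier is natural only for deterministic morphisms (there is no axiom $f;\copier{}=\copier{};(f\otimes f)$ for general $f$, and indeed \eqref{ax:onedet}, \eqref{ax:zerodet} assert it only for $\langle1\rangle,\langle0\rangle$), so you cannot slide $f$ past $\copier{}$ to split it into coordinates. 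The paper avoids this by never separating the output wires: the leaves of its normal form are distributions over the \emph{whole} output space $\Bool^m$, represented as weighted probabilistic trees whose leaves are full deterministic tuples $\langle b_1\rangle\otimes\dots\otimes\langle b_m\rangle$, with $\phi^p_{\Bool^m}$ (via \eqref{ax:phiTimes}) combining entire $m$-tuples at once. If you replace your marginal decomposition with joint-distribution leaves of this kind, the rest of your outline goes through.
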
 
	The interested reader can find the details in Appendix~\ref{app:completeness}.

As an example of diagrammatic reasoning in $\FStoch_{\{1,0\}}$ we look at the \emph{one-time-pad} protocol on a single bit. We regard $\langle \frac{1}{2} \rangle \colon I \to \Bool$ as an operation that produces a randomly generated bit, and the xor $\oplus \colon \Bool \otimes \Bool \to \Bool$, defined below on the left, as an operation that encrypts and decrypts a bit.
\begin{center}
    \vcen{pics/S4/xor} := \vcen{pics/S4/xor_def}
    \qquad \qquad
    \vcen{pics/S4/otp_main}
    \end{center}
The equation on the right expresses the correctness of the protocol: encoding and decoding with the same randomly generated bit gets the same result, and that to an adversary the encrypted message looks completely random. We prove correcteness by equational reasoning:
\begin{center}
    \vcen{pics/S4/otp_proof}
\end{center}
The equations use the fact that $\oplus$ is deterministic, associative, idempotent, unitary with respect to $\langle 0 \rangle$ and nullary with respect to $\langle \frac{1}{2} \rangle$. All of these properties can be proved via the axioms of Markov categories with probabilistic choice. For example the fact that $\langle \frac{1}{2} \rangle$ is absorbing for $\oplus$ is proved using the laws in Figure~\ref{fig:derived laws markov p choice} as follows:
\begin{center}
    \vcen{pics/S4/xor_null}
\end{center}

Looking at $\PI{\FStoch_{\{1,0\}}}$, the above example can be extended to operations on $\Bool^\star$, encrypting and decrypting $\star$-tuples of bits.
We now show that the parametric iteration construction preserves the structure of Markov categories with probabilistic choice.

\begin{lemma}\label{lemma:PI on markov with p choice}%
    If $\C$ is a Markov category with probabilistic choice, then so is $\PI{\C}$.%
\end{lemma}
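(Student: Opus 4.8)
The plan is to equip $\PI{\C}$ with the four families of operations required by Definition~\ref{def:markov cat with p choice} and then verify every axiom of Figure~\ref{fig:ax markov and p choice} by pushing it through the jointly faithful family of functors $\{S_k\}_{k \in \Nat}$. I take $\br{\Bool}$ as the choice object and $\langle p \rangle := \br{\langle p \rangle} \colon I \to \br{\Bool}$, and I define $\discharger{}$, $\copier{}$ and $\phi$ by induction on the structure of objects. On a base object: $\discharger{\br A} := \br{\discharger A}$, $\copier{\br A} := \br{\copier A}$, $\phi_{\br A} := \br{\phi_A}$. On a tensor $A \otimes B$: the canonical product comonoid together with the convex structure forced by coherence, i.e. $\phi_{A \otimes B}$ built from $\phi_A$ and $\phi_B$ by copying the bit as in \eqref{ax:phiTimes}. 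On a starred object:
\[
\discharger{A^\star} := \tau^\star_{I,(A),()}(\discharger A), \qquad \copier{A^\star} := \tau^\star_{I,(A),(A,A)}(\copier A),
\]
while $\phi_{A^\star}$ is defined by iterating $\phi_A$ with the selector bit threaded through the iteration state: at each step the bit is copied, one copy drives $\phi_A$ at the current position and the other is carried on, and the bit is discarded at the end, so that a single bit governs every position of the tuple. Well-definedness on the quotiented objects (e.g. $\br A \otimes \br B = \br{A \otimes B}$) follows because $\br{-}$ is symmetric monoidal and preserves the comonoid and choice structure by construction.

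The key step is the observation that, for every $k$, the functor $S_k \colon \PI{\C} \to \C$ carries this structure on an object $X$ to the canonical product comonoid-and-convex structure on $S_k(X)$ in $\C$. This is proved by induction on $X$, using that $S_k$ is symmetric monoidal, that $S_k \circ \br{-} = \id{\C}$, and that $S_k(\tau^\star_{S,\ol A,\ol B}(f)) = \tau^k_{S,S_k(\ol A),S_k(\ol B)}(S_k f)$. The base and tensor cases are immediate; the starred cases require unfolding the inductive definition of $\tau^k$, which turns $\tau^k(\copier A)$, $\tau^k(\discharger A)$ and the iterated $\phi_A$ into the $k$-fold product operations on $(S_k A)^k$ — for the choice operation this matches exactly the $k$-fold application of \eqref{ax:phiTimes}.

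To finish, recall that $f = g$ in $\PI{\C}$ precisely when $S_k(f) = S_k(g)$ for all $k$. Every axiom of Figure~\ref{fig:ax markov and p choice} is such an equation, so it suffices to verify it after applying each $S_k$. By the key step this reduces to the corresponding identity for the canonical product structure on the objects $S_k(X)$ of $\C$, which holds because $\C$ is itself a Markov category with probabilistic choice and because the single-object axioms are inherited by $\otimes$-products, using the coherence axioms $\phi_I$ \eqref{ax:phiI}, $\phi_\otimes$ \eqref{ax:phiTimes} and comonoid coherence. The two naturality axioms, $\discharger{}$-nat and $\phi$-nat, are handled identically: testing them against an arbitrary $h \colon A \to B$ of $\PI{\C}$ reduces under $S_k$ to naturality of $\discharger{}$ and $\phi$ in $\C$ with respect to the single morphism $S_k(h)$, which holds since these are genuine natural transformations in $\C$ (natural with respect to all morphisms, including those between product objects).

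I expect the main obstacle to be the starred cases of the key step, and $\phi_{A^\star}$ in particular: one must check that threading the bit through the iteration state and copying it at each stage really reproduces, under every $S_k$, the product choice operation on $(S_k A)^k$ obtained by iterating \eqref{ax:phiTimes}. A secondary, more routine point is verifying that the canonical product comonoid and convex structure in $\C$ satisfies all the single-object axioms, which is where the coherence axioms do the work.
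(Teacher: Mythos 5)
Your proposal is correct and follows essentially the same route as the paper: the same choice of $\br{\Bool}$ and $\br{\langle p\rangle}$, the same inductive definition of $\discharger{}$, $\copier{}$ and $\phi$ on objects with the starred case given by $\tau^\star$ of the base operation (threading the bit through the state for $\phi$), the same key lemma that $S_k$ sends this structure to the canonical comonoid/convex structure on $S_k(X)$ (proved by induction on objects, with the starred case resting on an induction over $k$ unfolding $\tau^k$), and the same final step of lifting every axiom through the jointly faithful family $\{S_k\}$. The only cosmetic difference is that the paper notes the naturality equations can also be seen as instances of Newton's cradle, whereas you reduce them directly to naturality in $\C$ via $S_k$; both work once the key step is in place.
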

\begin{proof}
	For $\Bool$ we can take $\br{\Bool}$, and for $\langle p \rangle$ we take $\br{\langle p \rangle}$.
    We define discard $\discharger{A\star}$, copy $\copier{A\star}$ and case $\phi_{A^\star}$ for $\PI{\C}$ inductively on objects, with their definition on $A^\star$ being:

    \begin{center}
        \vcen{pics/S4/discard_star_a}
        :=
        \vcen{pics/S4/discard_star_b}
        \qquad
        \vcen{pics/S4/copy_star_a}
        :=
        \vcen{pics/S4/copy_star_b}
        \qquad
        \vcen{pics/S4/if_star_a}
        :=
        \vcen{pics/S4/if_star_b}
    \end{center}
    The relevant properties can be shown using the diagonal method, with the naturality equations in particular being instances of Newton's cradle.
\end{proof}

	\subsection{Quantitative Reasoning}\label{ssec:quantitative}
    The theory developed in the previous section allows for exact reasoning about boolean stochastic maps, enabling us to derive precise equalities between them.

    However, in some cases—especially when parametric iteration is involved—we are interested in reasoning about approximate equivalence instead. To do so, we must move to metric-enriched symmetric monoidal categories. We begin by recalling some basic definitions.
\begin{definition}
    A \emph{metric space} $(A, d)$ consists of a set $A$ and a distance function $d : A \times A \to [0, \infty)$ such that for all $a,b,c\in A$:
    \begin{equation}\label{eq:distance}
		d(a,b) = 0 \Leftrightarrow a = b,
		\qquad
		d(a,b) = d(b,a),
		\qquad
		d(a,c) \leq d(a,b) + d(b,c).
	\end{equation}
    A \emph{morphism between metric spaces} $f : (A , d_A) \tom (B , d_B)$ is given by a non-expansive function $f : A \to B$, that is a function satisfying $d_B(f(a), f(b)) \leq d_A(a,b)$ for all $a,b\in A$.
\end{definition}

    \begin{definition}
        The category $\mathbb{M}$ has metric spaces as objects and non-expansive functions as arrows.
        This is a monoidal category, with $\otimes$ defined on objects as the metric space $(A \otimes B, d_{A \otimes B})$, where $A \otimes B$ is the cartesian product of sets and $d_{A \otimes B}((a,b) , (a' , b')) := d_A(a , a') + d_B(b , b')$, and on arrows as $(f \otimes g)(a,c) := (f(a) , g(c))$, for every $f \colon (A,d_A) \tom (B,d_B)$ and $g \colon (C,d_C) \tom (D,d_D)$.
    \end{definition}

    Our objects of study in this section are \emph{metric enriched monoidal categories}, or categories enriched over $\mathbb{M}$. Unfolding the definition, a metric enriched monoidal category is a category whose %
    homs
		are metric spaces, such that:
    \begin{equation}\label{eq:composition non expansive}
		d_{A \to C}(f ; g , f' ; g') \leq d_{A \to B}(f , f') + d_{B \to C}(g , g')
	\end{equation}
		and
	\begin{equation}\label{eq:tensor non expansive}
		d_{A \otimes C \to B \otimes D}(f \otimes g , f' \otimes g') \leq d_{A \to B}(f , f') + d_{C \to D}(g , g').
	\end{equation}

	\begin{lemma}\label{lemma:fstoch is metric enriched}
		$\FStoch$ is a metric enriched symmetric monoidal category.
	\end{lemma}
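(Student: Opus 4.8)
The plan is to equip each hom-set $\FStoch(A,B)$ with a metric induced by the total variation distance and then verify the two non-expansiveness bounds \eqref{eq:composition non expansive} and \eqref{eq:tensor non expansive}. Concretely, for $v, u \in \Dist B$ set $\tv(v,u) := \frac{1}{2}\sum_{b \in B} |v(b) - u(b)|$, and for stochastic maps $f, g \colon A \to B$ define
\[
d_{A \to B}(f,g) := \sup_{a \in A} \tv(f(a), g(a)).
\]
The first step is to confirm that each $\tv$ is a metric on $\Dist B$: the $\ell_1$ formula makes the three conditions of \eqref{eq:distance} immediate, and taking a pointwise supremum preserves them, so $d_{A \to B}$ is a metric on the hom-set. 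Identity of indiscernibles holds because $d_{A \to B}(f,g) = 0$ forces $f(a) = g(a)$ for every $a \in A$, hence $f = g$.

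The core of the argument is non-expansiveness of composition. Given $f, f' \colon A \to B$ and $g, g' \colon B \to C$, I would bound $\tv((f;g)(a), (f';g')(a))$ for each fixed $a$ by interposing the map $f;g'$ and applying the triangle inequality:
\[
\tv((f;g)(a),(f';g')(a)) \le \tv((f;g)(a),(f;g')(a)) + \tv((f;g')(a),(f';g')(a)).
\]
For the first summand I would use that $(f;g)(a) = \sum_{b} f(b \mid a)\, g(\cdot \mid b)$ is a convex combination, together with the joint convexity of $\tv$, to obtain $\tv((f;g)(a),(f;g')(a)) \le \sum_{b} f(b \mid a)\,\tv(g(\cdot \mid b), g'(\cdot \mid b)) \le d_{B \to C}(g,g')$. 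For the second summand I would invoke the data-processing inequality for total variation: post-composing with the fixed channel $g'$ is itself non-expansive, so $\tv((f;g')(a),(f';g')(a)) \le \tv(f(\cdot \mid a), f'(\cdot \mid a)) \le d_{A \to B}(f,f')$. Taking the supremum over $a$ then yields \eqref{eq:composition non expansive}.

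For tensor non-expansiveness I would rely on the product-distribution bound $\tv(v_1 \otimes v_2, u_1 \otimes u_2) \le \tv(v_1,u_1) + \tv(v_2,u_2)$, proved by interposing $u_1 \otimes v_2$ and observing that the total variation of two products sharing a common factor collapses to the $\tv$ of the differing factors. Applying this pointwise to $(f \otimes g)(a,c) = f(a) \otimes g(c)$ and then splitting the supremum over $a$ and $c$ gives \eqref{eq:tensor non expansive}. Finally I would note that the symmetric monoidal structure of $\FStoch$ is exactly the one already described above; only the enrichment over $\mathbb{M}$ is new.

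The main obstacle is the composition bound, since both the channel and the input distribution vary at once. The device of interposing $f;g'$ and then invoking joint convexity on one side and the data-processing inequality on the other is precisely what makes the additive bound go through, so it is worth isolating these two total-variation facts explicitly; the remaining verifications are routine.
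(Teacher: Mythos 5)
Your proposal is correct and takes essentially the same route as the paper: the same total variation metric on distributions and the same supremum-over-inputs distance on hom-sets. The only difference is that the paper delegates the verification of \eqref{eq:composition non expansive} and \eqref{eq:tensor non expansive} to a citation (Example 3.2.7 of \cite{houghton2021mathematical}), whereas you carry out that verification explicitly via the interposition/convexity/data-processing arguments, which are the standard and correct way to do it.
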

	\begin{proof}
		For each set $A$, we can define a metric $\tv_A$ on $\Dist A$ called the \emph{total variation metric}~(see e.g. \cite{kolmogorov1970introductory}). We give two equivalent formulations of $\tv_A$ below, for any $v, w \in \Dist A$.
		\[\tv_A(v,w) := \sum_{a \in \texttt{supp}(v) \cup \texttt{supp}(w)} \frac{|v(a) - w(a)|}{2}  = 1- \sum_{a \in \texttt{supp}(v) \cap \texttt{supp}(w)} \texttt{min}(v(a) , w(a))\]

		We take as distance between arrows $d_{A \to B}(f,g) := \texttt{max}_{a \in A} \tv_B(f(a) , g(a))$, for every $f, g \colon A \to B$ in $\FStoch$. This choice of distance makes $\FStoch$ metric enriched (see Example 3.2.7 in~\cite{houghton2021mathematical}).

	\end{proof}

	Just as we approached exact reasoning equationally, we now aim to reason quantitatively in a similar style. To this end, we introduce a proof system that axiomatizes distances between morphisms of metric-enriched Markov categories with probabilistic choice.
	\begin{equation}\label{eq:metric}
		\begin{array}{ccccc}
			\frac{}{f \equiv_0 f} \quad \frac{}{f \equiv_1 g}
			&
			\frac{f \equiv_\delta g}{g \equiv_\delta f}
			&
			\frac{f \equiv_{\delta} g \quad g \equiv_{\delta'} h}{f \equiv_{\delta + \delta'} h}
			&
			\frac{f \equiv_{\delta} g \quad {\delta} \leq {\delta'}}{f \equiv_{\delta'} g}
			&
			\frac{f \equiv_\delta f' \quad g \equiv_\delta g'}{(f \otimes id_\Bool \otimes g) ; \phi_X \equiv_\delta (f' \otimes id_\Bool \otimes g') ; \phi_X}
			\\[10pt]
			\frac{f \equiv_{\delta} f'}{f ; g \equiv_{\delta} f' ; g}
			&
			\frac{g \equiv_{\delta} g'}{f ; g \equiv_{\delta} f ; g'}
			&
			\frac{f \equiv_{\delta} f'}{f \otimes g \equiv_{\delta} f' \otimes g}
			&
			\frac{g \equiv_\delta g'}{f \otimes g \equiv_{\delta} f \otimes g'}
			&
			\frac{f \equiv_\delta f' \quad g \equiv_\gamma g'}{(f \otimes g) ; \phi^p_X \equiv_{p \cdot \delta + (1-p) \cdot \gamma} (f' \otimes g') ; \phi^p_X}
		\end{array}
	\end{equation}

	We write $f \equiv_\delta g$ to say that $f$ and $g$, of the same type, are provably at most $\delta$ away from each other. Moreover, we say that a proof system is \emph{sound} if $f \equiv_\delta g$ implies $d(f,g) \leq \delta$, and is \emph{complete} if $f \equiv_{d(f,g)} g$ for each pair of morphisms $f,g$ of the same type.

	Note that a sound proof system allows one to prove upper bounds to the actual distance between morphisms. As such, certain proofs give better bounds than others, allowing one to improve on a proof of distance by finding different routes. In particular, in order to prove security, one would like to prove that certain distances are below a level of allowed risk $\varepsilon$.

	We now show that the proof system above is sound and complete for $\FStoch_{\{1,0\}}$, equipped with the total variation distance.

	This approach varies from the total variation treatment in \cite{quantitativemonoidalalgebra}, where the monoidal product combines different cases (disjunctive), whereas in our case the monoidal product combines resources (conjunctive).

	\begin{lemma}\label{lemma:quantitative soundness and completeness}
		The proof system in~\eqref{eq:metric} is sound and complete for $\FStoch_{\{1,0\}}$. %
	\end{lemma}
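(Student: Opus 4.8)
The plan is to prove soundness by induction on the structure of derivations in~\eqref{eq:metric}, and completeness by reducing an arbitrary pair of morphisms first to pairs of \emph{states} (morphisms out of $I$) through case analysis on the input bits, and then settling the state case by a maximal-coupling decomposition.

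For soundness I would check that each rule preserves the invariant $d(f,g)\le\delta$. The axioms $f\equiv_0 f$, $f\equiv_1 g$, symmetry, the triangle rule, and weakening are immediate from the metric axioms~\eqref{eq:distance} together with the fact that total variation is bounded by $1$. The four congruence rules for $;$ and $\otimes$ are exactly the metric-enrichment inequalities~\eqref{eq:composition non expansive} and~\eqref{eq:tensor non expansive}, which hold in $\FStoch$ by Lemma~\ref{lemma:fstoch is metric enriched}. Only the two rules involving $\phi$ need a short argument. For the if-gate rule, fix an input: its Boolean component is a concrete value, so $(f\otimes id_\Bool\otimes g);\phi_X$ returns $f$'s branch when the bit is $1$ and $g$'s branch when it is $0$, and in either case the distance to the primed side is bounded by $d(f,f')$ or $d(g,g')$, hence by $\delta$. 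For the weighted-choice rule, $(f\otimes g);\phi^p_X$ produces on each input the mixture $p\cdot f(\cdot)+(1-p)\cdot g(\cdot)$, and since $\tv$ is jointly convex we get $\tv\le p\cdot\delta+(1-p)\cdot\gamma$.

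For completeness I must exhibit, for each pair $f,g\colon\Bool^n\to\Bool^m$, a derivation of $f\equiv_{d(f,g)}g$, where $d(f,g)=\max_{a}\tv(f(a),g(a))$. I proceed by induction on the number $n$ of input bits. If $n>0$, reading the first bit lets me write $f$ as a case split: with $f_b=(\ket b\otimes id);f\colon\Bool^{n-1}\to\Bool^m$ the two branches and $R$ the fixed ``copy-and-route'' map built from $\copier{}$ and symmetries, one has the genuine equality of stochastic maps $f=R;(f_1\otimes id_\Bool\otimes f_0);\phi_{\Bool^m}$, and likewise for $g$. Since $R$ is common to both sides, applying the if-gate rule to the induction hypotheses for $f_1,g_1$ and $f_0,g_0$ (aligned to a common bound by weakening) and then precomposing with $R$ via the composition rule yields $f\equiv_\delta g$ with $\delta=\max(d(f_1,g_1),d(f_0,g_0))$. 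As this maximum equals $d(f,g)$, the step closes.

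The base case $n=0$ compares two states $u,w\colon I\to\Bool^m$, and this is the crux. Writing $\lambda=\sum_a\min(u(a),w(a))$ so that $\tv(u,w)=1-\lambda$, I use the maximal-coupling decomposition $u=\lambda c+(1-\lambda)u'$ and $w=\lambda c+(1-\lambda)w'$, where $c$ is the normalised overlap and $u',w'$ are the disjointly supported residuals. These are honest equalities of distributions, so by reflexivity $u\equiv_0(c\otimes u');\phi^\lambda$ and $w\equiv_0(c\otimes w');\phi^\lambda$. Applying the weighted-choice rule with $p=\lambda$ to $c\equiv_0 c$ and the trivial bound $u'\equiv_1 w'$ gives $(c\otimes u');\phi^\lambda\equiv_{1-\lambda}(c\otimes w');\phi^\lambda$, and the triangle rule assembles $u\equiv_{1-\lambda}w$, i.e.\ $u\equiv_{\tv(u,w)}w$; the degenerate cases $\lambda\in\{0,1\}$ are handled directly by the $\equiv_1$ axiom and reflexivity. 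I expect the main obstacle to be exactly this state case: arranging the overlap decomposition so that the weighted-choice rule produces the \emph{exact} bound $1-\lambda$ rather than a loose one is what makes completeness work, and one must verify that $c,u',w'$ are legitimate states. Note that no separate appeal to Theorem~\ref{thm:complete} appears necessary, since $\equiv_\delta$ relates actual morphisms and all the decompositions used above are genuine equalities of stochastic maps.
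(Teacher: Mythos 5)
Your proposal is correct, and the soundness half coincides with the paper's argument (metric enrichment for the structural rules, a case split on the Boolean input for the $\phi$-rule, convexity of $\tv$ for the $\phi^p$-rule). The completeness half takes a genuinely different route at its crux. The paper puts both morphisms into the normal form of Theorem~\ref{thm:complete} and runs a double induction: the input case split is the same as yours, but the state case $n=0$ is handled by a further induction on $m$ over the weighted probabilistic trees, requiring a re-weighting step (rebalancing the two outer probabilities $p\leq q$ via $k=\frac{q-p}{1-p}$ using \eqref{ax:phiIdemp}, \eqref{ax:phidistr} and \eqref{ax:phias}) so that the quantitative $\phi^p$-rule, applied twice, yields the exact distance. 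You instead dispatch the state case in one step via the maximal-coupling decomposition $u=\lambda c+(1-\lambda)u'$, $w=\lambda c+(1-\lambda)w'$ with $\lambda=\sum_a\min(u(a),w(a))$, which hits the exact bound $1-\lambda=\tv(u,w)$ immediately; this is cleaner and avoids the tree induction entirely. The one caveat concerns your closing remark that Theorem~\ref{thm:complete} is unnecessary: that is true only under the purely semantic reading in which $f\equiv_0 f$ is reflexivity of actual stochastic maps, which is how the lemma is literally phrased. If one wants the result for syntactic circuits in $\MP$ (which is how the paper uses it, e.g.\ in Lemma~\ref{lemma:mp is metric enriched}), then your ``genuine equalities of stochastic maps'' such as $u=(c\otimes u');\phi^\lambda$ and $f=R;(f_1\otimes id_\Bool\otimes f_0);\phi_{\Bool^m}$ must first be turned into derivable instances of $\equiv_0$, and that conversion is exactly an appeal to the equational completeness of Theorem~\ref{thm:complete}. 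So the dependence does not disappear in general; it is only absorbed into reflexivity under the semantic reading.
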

    \begin{proof}
        See Appendix~\ref{ap:4}.
    \end{proof}

\section{Asymptotic Equivalence}\label{sec:asyeq}

	For a metric enriched symmetric monoidal category $\C$, we can equip $\PI{\C}$ with a notion of asymptotic equivalence.
	A nonnegative function $f \colon \Nat \to [0 , \infty)$ \emph{approaches zero}, written as $\lim_{k \to \infty} f(k) = 0$, if for any possible threshold $\varepsilon \in (0, \infty)$ there is some point $N \in \Nat$ after which the function is below $\varepsilon$. Symbolically: $\forall \varepsilon > 0.\, \exists N > 0.\, \forall k \geq N.\, f(k) < \varepsilon$.

	From the perspective of cryptography, we consider $\varepsilon$ the risk factor we are trying to beat, or the allowable failure rate of our algorithm. Then $N > 0$ is the minimum security parameter to guarantee that the actual probability of failure is below the allowable risk.

	The following definition captures indistinguishability between morphisms in $\PI{\C}$, given that they may be investigated a polynomial number of times by the distinguisher (e.g. adversary), and this distinguisher may choose a polynomial amount of inputs and study each probabilistically generated output.

	\begin{definition}[Asymptotic Equivalence]
		For a metric enriched symmetric monoidal category $\Cat{C}$, the asymptotic equivalence relation $\equiv$ on morphisms of $\PI{\Cat{C}}$ is defined as follows:
		\[
		f \equiv g \iff \forall a \in \Nat. \; \lim_{k \to \infty} k^a \cdot d(S_k(f) , S_k(g)) = 0 \qquad \text{for any $f,g \colon X \to Y$ in $\PI{\Cat{C}}$.}
		\]
	\end{definition}

	\begin{proposition}\label{prop:asyeq is a congruence}
		$\equiv$ is preserved under sequential composition and parallel composition.
	\end{proposition}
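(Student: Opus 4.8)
The plan is to reduce both claims to the metric-enrichment inequalities \eqref{eq:composition non expansive} and \eqref{eq:tensor non expansive} by exploiting the functoriality of each $S_k$. Recall that every $S_k \colon \PI{\C} \to \C$ is a symmetric monoidal functor, so $S_k(f;g) = S_k(f);S_k(g)$ and $S_k(f \otimes g) = S_k(f) \otimes S_k(g)$. This lets me transport the problem out of $\PI{\C}$, where there is no ambient metric, into $\C$, where the enrichment supplies the bounds I need. Since the distances involved are all nonnegative, the strategy throughout is to dominate the weighted distance between composites by a sum of weighted distances between the components and then squeeze.

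First I would treat sequential composition. Suppose $f \equiv f'$ and $g \equiv g'$, with types chosen so that $f;g$ and $f';g'$ are defined. Fixing $a \in \Nat$ and $k \in \Nat$, functoriality together with \eqref{eq:composition non expansive} gives
\[
d(S_k(f;g), S_k(f';g')) = d(S_k(f);S_k(g),\, S_k(f');S_k(g')) \leq d(S_k(f), S_k(f')) + d(S_k(g), S_k(g')).
\]
Multiplying through by the nonnegative weight $k^a$, which distributes over the sum, I obtain
\[
0 \leq k^a \cdot d(S_k(f;g), S_k(f';g')) \leq k^a \cdot d(S_k(f), S_k(f')) + k^a \cdot d(S_k(g), S_k(g')) \xrightarrow{k \to \infty} 0,
\]
where the two summands on the right vanish in the limit by the hypotheses $f \equiv f'$ and $g \equiv g'$ applied at the same exponent $a$. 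A squeeze argument then yields $\lim_{k \to \infty} k^a \cdot d(S_k(f;g), S_k(f';g')) = 0$ for every $a$, i.e.\ $f;g \equiv f';g'$.

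The argument for parallel composition is identical in shape, replacing \eqref{eq:composition non expansive} by the tensor inequality \eqref{eq:tensor non expansive}: using $S_k(f \otimes g) = S_k(f) \otimes S_k(g)$ one bounds $d(S_k(f \otimes g), S_k(f' \otimes g'))$ by $d(S_k(f), S_k(f')) + d(S_k(g), S_k(g'))$, multiplies by $k^a$, and concludes by the same squeeze. Thus $f \otimes g \equiv f' \otimes g'$.

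I do not expect a genuine obstacle here; the entire content lies in the observation that the polynomial weight $k^a$ passes through the sum supplied by the enrichment axioms, and that the class of polynomially-dominated null sequences is closed under finite addition. The only points requiring minor care are to note that the \emph{same} exponent $a$ appears on both summands (so the universal quantifier over $a$ in the definition of $\equiv$ is respected), and that the bound holds uniformly in $k$ so that the limit laws may be applied termwise.
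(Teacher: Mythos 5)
Your proof is correct and follows essentially the same route as the paper's: both reduce to functoriality of $S_k$ plus the enrichment inequalities \eqref{eq:composition non expansive} and \eqref{eq:tensor non expansive}, then conclude that the sum of two polynomially-weighted null sequences is null (the paper spells this out with an explicit $\varepsilon/2$ argument where you invoke the squeeze/limit laws, but the content is identical).
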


	We can show that the Newton's cradle equation from before also holds for asymptotic equivalence. We prove this using a method similar to the diagonal method. As a consequence, asymptotic equivalence is preserved under application of the $\star$-iteration operation $\tau^\star$.

	\begin{lemma}[Newton's cradle, reprised]\label{lem:Newton2}
		For any $f \colon S \to Z, g \colon Z \otimes \ol{A} \cdot 1 \to \ol{B} \cdot 1 \otimes Z, h \colon S \otimes \ol{A} \cdot 1 \to \ol{B} \cdot 1 \otimes S$, if \;\;$\vcen{pics/S3/Newton_S_a} \equiv \vcen{pics/S3/Newton_S_b}$ \;\; then \;\; $\vcen{pics/S3/Newton_S_c} \equiv \vcen{pics/S3/Newton_S_d}$.
	\end{lemma}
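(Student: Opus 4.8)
The plan is to follow the proof of the exact Newton's cradle (Lemma~\ref{lem:Newton0}) almost verbatim, but to track \emph{distances} instead of equalities. Since $\equiv$ is defined through the instantiations $S_k$, I would first apply $S_k$ to both sides. Because $S_k$ is a functor sending $\tau^\star$ to $\tau^k$, this turns the two sides into genuine $k$-fold iterations in $\C$: on the left, $S_k(f)$ sits before all $k$ copies of $S_k(g)$, while on the right it sits after all $k$ copies of $S_k(h)$. As in the original argument, the crucial observation is that moving $f$ past a single stage of the iteration is exactly one application of the hypothesis~(I).

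Concretely, I would set $\delta_k := d\big(S_k((f \otimes \textit{id}) ; g),\, S_k(h ; (\textit{id} \otimes f))\big)$, so that the hypothesis $(f \otimes \textit{id}) ; g \equiv h ; (\textit{id} \otimes f)$ says precisely that $\lim_{k \to \infty} k^a \delta_k = 0$ for every $a \in \Nat$. For fixed $k$, define intermediate morphisms $W_0, \dots, W_k$ in $\C$ in which $f$ has already been pushed past the first $j$ stages (so the first $j$ stages are copies of $S_k(h)$ and the remaining $k-j$ are copies of $S_k(g)$), with $W_0 = S_k(\mathrm{LHS})$ and $W_k = S_k(\mathrm{RHS})$. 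Passing from $W_j$ to $W_{j+1}$ replaces a single occurrence of $S_k((f \otimes \textit{id}) ; g)$ by $S_k(h ; (\textit{id} \otimes f))$ inside a fixed context made of the other stages together with the $\textsf{push}/\textsf{pop}$ plumbing. Since that plumbing consists only of identities and symmetries, and since composition and tensoring are non-expansive by~\eqref{eq:composition non expansive} and~\eqref{eq:tensor non expansive}, we obtain $d(W_j, W_{j+1}) \le \delta_k$. The triangle inequality then gives the key bound
\[
D_k := d\big(S_k(\mathrm{LHS}),\, S_k(\mathrm{RHS})\big) \;\le\; \sum_{j=0}^{k-1} d(W_j, W_{j+1}) \;\le\; k\,\delta_k .
\]

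Finally I would feed this into the definition of $\equiv$. For any $a \in \Nat$ we have $k^a D_k \le k^{a+1}\delta_k$, and applying the hypothesis with exponent $a+1$ yields $\lim_{k\to\infty} k^{a+1}\delta_k = 0$; hence $\lim_{k\to\infty} k^a D_k = 0$ for every $a$, which is exactly $\mathrm{LHS} \equiv \mathrm{RHS}$.

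The main obstacle — and the conceptual point of the statement — is the accumulation of error across stages: each of the $k$ swaps contributes up to $\delta_k$, so the total error grows \emph{linearly} in $k$. This extra factor is harmless precisely because asymptotic equivalence demands polynomial decay of the distance, so the linear blow-up is absorbed simply by raising the polynomial exponent from $a$ to $a+1$. The step that requires the most care is justifying $d(W_j, W_{j+1}) \le \delta_k$ in full rigour, since it rests on the metric-enrichment axioms guaranteeing that replacing a sub-diagram inside a context cannot increase the distance beyond the local change; this is the quantitative analogue of the single ``$\stackrel{(\mathrm{I})}{=}$'' step in the proof of Lemma~\ref{lem:Newton0}.
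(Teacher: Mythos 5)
Your proposal is correct and follows essentially the same route as the paper: the paper also fixes $k$, sets $a = \delta_k$ to be the distance between the two instantiated sides of the hypothesis, proves by induction on $r$ that the $r$-fold iterations are at most $ra$ apart (your hybrid sequence $W_0,\dots,W_k$ is exactly this induction unrolled as a telescoping sum), and then absorbs the resulting factor of $k$ by applying the hypothesis with exponent $a+1$. The only cosmetic difference is that the paper phrases the per-step bound via the quantitative proof system $\equiv_\delta$ rather than directly via the non-expansiveness axioms, but the justification is the same metric-enrichment property you invoke.
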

	\begin{proof}

		Let $k \in \Nat$, and take $S' = S_k(S)$, $Z' =  S_k(Z)$, $\ol{C} = S_k(\ol{A})$, $\ol{D} = S_k(\ol{B})$, $f_k = S_k(f)$, $g_k = S_k(g)$, and $h_k = S_k(h)$.
		We shall define $a \in [0,1]$ as:

		\noindent
		$\;\;a := d \left( \vcen{pics/S5/Newton_I_a} \;,\; \vcen{pics/S5/Newton_I_b} \right)$
		 and prove that \; $d \left( \vcen{pics/S5/Newton_I_c} \;,\; \vcen{pics/S5/Newton_I_d} \right) \leq k a$

		We prove this, by showing by induction on $r \in \Nat$ that:
		$\;\;\vcen{pics/S5/Newton_P_a} \equiv_{r a} \vcen{pics/S5/Newton_P_b}\;\;$

		Induction basis:
			$
			\vcen{pics/S5/Newton_IB_a}
			\stackrel{(\text{def})}{=}
			\vcen{pics/S5/Newton_IB_b}
			\stackrel{(\text{def})}{=}
			\vcen{pics/S5/Newton_IB_c}
			$

		Induction step:
			$
			\vcen{pics/S5/Newton_IS_a}
			\stackrel{(\text{def})}{=}
			\vcen{pics/S5/Newton_IS_b}
			\stackrel{(\text{I})}{\equiv_a}
			\vcen{pics/S5/Newton_IS_c}
			$
	
		\begin{center}
			$
			\stackrel{(\text{hyp})}{\equiv_{ra}}
			\vcen{pics/S5/Newton_IS_d}
			\stackrel{(\text{def})}{=}
			\vcen{pics/S5/Newton_IS_e}
			$
		\end{center}

		\noindent
		together getting an upper bound for the distance of $a + ra = (r+1)a$, as required.

		Taking $r = k$, we get a distance of $ka$ between the two sides of the equation.
		Hence, given any $b \in \Nat$ we get

		\noindent
		$\lim_{k \to \infty} k^b d\left( \vcen{pics/S5/Newton_I_c} , \vcen{pics/S5/Newton_I_d} \right) = \lim_{k \to \infty} k^{b+1} d\left( \vcen{pics/S5/Newton_I_a} , \vcen{pics/S5/Newton_I_b} \right) = 0$

	\end{proof}

	\subsection{Examples}

	Asymptotic equivalence is used to reason whether it is possible to increase the security parameter to significantly minimize the risk of two programs being distinguished. This allows us to formally ignore undesirable situations if their probability of occurring is asymptotically insignificant, or in other words: \emph{negligible}.
    Our first example shows that a randomly generated key of length $\sigma$ is unlikely to be any specific key.
	This has three useful consequences:
	\begin{enumerate}
		\item Two randomly generated keys of length $\sigma$ are not equal (asymptotically).
		\item If an adversary attempts to guess a key before that key is generated (or without any knowledge of the generated key), then this guess is wrong (asymptotically).
		\item We can uniformly generate an element of $\mathbb{Z}_{2^\sigma-1}$ (asymptotically),
	\end{enumerate}

	We first define $\textsf{all-1} : \Bool^\star \to \Bool$ which checks if all elements of $\Bool^\star$ are equal to $1$. Given an and-gate $\textsf{\&}: \Bool^2 \to \Bool$, we define $\textsf{all-1} := (\langle 1 \rangle \otimes id_{\Bool^\star}) ; \tau^\star_{\Bool , (\Bool) , ()}(\textsf{\&})$.

	\begin{lemma}\label{lem:all1}
		If $p < 1$, then
        $\vcen{pics/S5/ex/zero_a} \equiv \vcen{pics/S5/ex/zero_b}$.
	\end{lemma}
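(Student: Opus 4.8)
The two sides of the equation both denote closed morphisms $I \to \Bool$. Reading the left-hand diagram, it first generates a $\star$-tuple of independent biased coins (each an instance of $\langle p \rangle$) and then feeds this tuple into $\textsf{all-1}$, while the right-hand diagram is the constant $\langle 0 \rangle$. The plan is to instantiate both sides with $S_k$, compute the total variation distance explicitly as a function of $k$, and then invoke the fact that exponential decay dominates polynomial growth.

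First I would determine the semantics of $S_k$ applied to the left-hand side. Since $S_k$ is a symmetric monoidal functor satisfying $S_k(\tau^\star_{S,\ol A,\ol B}(f)) = \tau^k_{S,S_k(\ol A),S_k(\ol B)}(S_k(f))$ and $S_k(\br{f}) = f$, the morphism $S_k(\textsf{all-1})$ unfolds to $(\langle 1 \rangle \otimes \textit{id}_{\Bool^k}) ; \tau^k_{\Bool,(\Bool),()}(\textsf{\&})$, which takes the conjunction of the $k$ input bits with the initial state $1$. Composing with the $k$ independent coins, $S_k$ of the left-hand side is the morphism $I \to \Bool$ that draws $k$ independent bits, each equal to $1$ with probability $p$, and returns their conjunction. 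By induction on $k$, following the inductive clause for $\tau^{k+1}$, this conjunction equals $1$ exactly when all $k$ coins land on $1$, which happens with probability $p^k$; hence $S_k(\mathrm{lhs}) = p^k \ket{1} + (1 - p^k)\ket{0}$. On the other hand $S_k(\mathrm{rhs}) = \langle 0 \rangle = \ket{0}$.

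Next I would compute the distance using the total variation metric from Lemma~\ref{lemma:fstoch is metric enriched}, obtaining $d(S_k(\mathrm{lhs}), S_k(\mathrm{rhs})) = \tv_\Bool\bigl(p^k\ket{1} + (1-p^k)\ket{0},\, \ket{0}\bigr) = \tfrac{1}{2}(p^k + p^k) = p^k$. It then remains to verify the defining condition of asymptotic equivalence: for every $a \in \Nat$ one must show $\lim_{k \to \infty} k^a \cdot p^k = 0$. This is exactly the point at which the hypothesis $p < 1$ is needed.

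The main obstacle, though a standard analytic one, is this final limit, since everything preceding it is routine bookkeeping. For $p = 0$ it is immediate, and for $0 < p < 1$ one argues that $\log(k^a p^k) = a \log k + k \log p \to -\infty$ because $\log p < 0$, so $k^a p^k \to 0$. This captures precisely the cryptographic intuition that $p^k$ is \emph{negligible}: it decays faster than the reciprocal of any polynomial in the security parameter, so the randomly generated key is asymptotically indistinguishable from never matching the all-ones key.
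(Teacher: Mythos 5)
Your quantitative core (distance $p^k$ after instantiating by $S_k$, then $\lim_{k\to\infty}k^a p^k=0$ for $p<1$) matches the paper's, but you have proved a weaker statement than the lemma actually asserts. The left-hand diagram is not the closed morphism $I\to\Bool$ obtained by feeding the coins into $\textsf{all-1}$ and forgetting them: it is the morphism $I\to\Bool^\star\otimes\Bool$ that generates the key, \emph{copies} it, outputs one copy, and applies $\textsf{all-1}$ to the other; the right-hand side is $\tau^\star_{1,(),(\Bool)}(\langle p\rangle)\otimes\langle 0\rangle$, not $\langle 0\rangle$ alone. This is visible in the appendix's symbolic version of the proof, which establishes $\langle 1\rangle;\tau^n_{\Bool,(),(\Bool)}(f)\equiv_{p^n}\tau^n_{1,(),(\Bool)}(\langle p\rangle)\otimes\langle 0\rangle$ with $f=((\langle p\rangle;\copier{\Bool})\otimes\textit{id}_\Bool);(\textit{id}_\Bool\otimes\textsf{\&})$, and it is essential for the remark that the equivalence holds ``even if the key is used in some other context'' and for the application inside Lemma~\ref{lem:keyguess}, where the key wire is still live. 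The joint statement does not follow from your marginal one by congruence, because the retained key is correlated with the $\textsf{all-1}$ output.

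The gap is repairable: the two joint distributions differ only on the all-ones key (paired with $\ket{1}$ on one side and $\ket{0}$ on the other), so their total variation distance is still exactly $p^k$. But to get there you must carry the $k$ key wires through the induction rather than computing a marginal on the last output. The paper does this by first merging everything into a single iteration via (\ref{def:tcomp}) and (\ref{def:tmono}) and then inducting with the quantitative congruence rule for $\phi^p$ from~\eqref{eq:metric}, obtaining the bound $p\cdot p^k+(1-p)\cdot 0=p^{k+1}$ at each step; alternatively you could compute the two joint distributions on $\Bool^{k+1}$ directly and take $\tv$ there. Your final limit argument, and the observation that this is where $p<1$ enters (with the $0^0=1$ convention at $k=0$), is exactly as in the paper.
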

	\begin{proof}
		By induction on $k$, we can prove that the two diagrams as seen through $S_k$ in the Lemma statement are at most $p^k$ apart in the underlying metric-enriched category (to cover an edge-case we use $0^0 = 1$). Given that $p < 1$, we have that for any $a$, $\lim_{k \to \infty} k^a \cdot p^k = 0$.
		First, observe that by (\ref{def:tcomp}) and (\ref{def:tmono}),  $\vcen{pics/S5/ex/zero_a} = \vcen{pics/S5/ex/zero_c}$.

		For $k = 0$, the required distance $\equiv_1$ is trivially true, i.e. $\vcen{pics/S5/ex/zero_IB_a} \equiv_1 \vcen{pics/S5/ex/zero_IB_b}$.

		For $k' = k+1$, we do the following steps.

		\begin{center}
		$\vcen{pics/S5/ex/zero_IS_a}
		= \vcen{pics/S5/ex/zero_IS_b}
		= \vcen{pics/S5/ex/zero_IS_c}
		\equiv_{p \cdot p^k + (1-p) \cdot 0} \vcen{pics/S5/ex/zero_IS_d}
		= \vcen{pics/S5/ex/zero_IS_e}
		$
		\end{center}
	\end{proof}

	In other words, no randomly generated key with weighted choice below 1 consists of only ones. Note in particular that this is the case even if the key is used in some other context.
	We define equality on bits using a negated xor gate $(=) : \Bool^2 \to \Bool$.
	Then equality on $\star$-tuples $(=)^\star : \Bool^\star \otimes \Bool^\star \to \Bool$ is done by iterating the equality and checking that all entries are one.
	\begin{lemma}\label{lem:keyguess}
        $\vcen{pics/S5/ex/guess_a} \equiv \vcen{pics/S5/ex/guess_b}$.
	\end{lemma}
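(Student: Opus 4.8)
The plan is to reduce the statement to Lemma~\ref{lem:all1} at $p=\tfrac12$, by showing that bitwise-comparing a uniformly random key against an arbitrary guess behaves exactly like generating a fresh uniformly random bit-tuple while discarding the guess. Concretely, the left-hand diagram generates a random key $\langle\tfrac12\rangle$-by-$\langle\tfrac12\rangle$ as an iteration producing $\Bool^\star$, and feeds it together with the guess tuple into the tuple equality $(=)^\star$, which by definition unfolds as a bitwise equality iteration followed by $\textsf{all-1}$. My first step is to make this unfolding explicit and to recall that bit equality is the negated xor.

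The crux is a single-bit absorbing law, already visible in the one-time-pad example: composing $\langle\tfrac12\rangle$ into one argument of $(=)$ discards the other argument and emits a uniform bit, i.e.\ $(\langle\tfrac12\rangle\otimes \id{\Bool}); (=) \;=\; \discharger{\Bool};\langle\tfrac12\rangle$, since xor (and hence its negation) with a uniform bit is again uniform and independent of the second input. I would then lift this exact equality from bits to $\star$-tuples. Both the key generation and the bitwise comparison are $\star$-iterations with trivial state, so using the iteration laws -- expressing the identity on the guess wire as $\tau^\star(\id{\Bool})$ via (\ref{def:tid}) and merging the two iterations via (\ref{def:tcomp}), or equivalently by a direct diagonal-method induction on $S_k$ -- the combined one-step morphism is exactly the left-hand side of the absorbing law above. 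Rewriting each step by that law turns the whole composite into ``discard the guess tuple, then generate a uniformly random $\Bool^\star$''. This part is an \emph{exact} equality in $\PI{\FStoch_{\{1,0\}}}$, needing no approximation.

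It then remains to compose with $\textsf{all-1}$ and pass to asymptotics. Since $\equiv$ is a congruence (Proposition~\ref{prop:asyeq is a congruence}), the left-hand side is asymptotically equivalent to $\textsf{all-1}$ applied to a uniformly random tuple with the guess discarded; this is precisely the $p=\tfrac12$ instance of Lemma~\ref{lem:all1}, which yields $\equiv\langle 0\rangle$, matching the right-hand diagram.

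The main obstacle I anticipate is the middle step: cleanly lifting the one-bit absorbing law to the tuple level. The bookkeeping of $\star$-annotations and state threading when merging the generation iteration with the comparison iteration is delicate, and one must take care that only the first input tuple of $(=)^\star$ is produced internally while the guess tuple remains an external input. Once this lift is carried out (either through (\ref{def:tcomp}) or by an explicit $S_k$ induction), all the asymptotic content is isolated in Lemma~\ref{lem:all1}.
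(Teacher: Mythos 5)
There is a genuine gap, and it comes from a misreading of the left-hand diagram. In the paper's statement the generated key is \emph{copied}: one copy is emitted as an output of the diagram and only the other copy is fed into $(=)^\star$ against the guess (this is visible in the symbolic form of the proof, $(\tau^{\star}(\langle 1/2 \rangle) \otimes \textit{id}_{\Bool^\star}) ; (\copier{\Bool^\star} \otimes \textit{id}_{\Bool^\star}) ; (\textit{id}_{\Bool^\star} \otimes \textsf{eq}^\star)$, and in the right-hand side, which outputs a fresh $\tau^\star(\langle 1/2\rangle)$ \emph{together with} $\langle 0 \rangle$, not just $\langle 0\rangle$). Your central step, the absorbing law $(\langle\tfrac12\rangle\otimes \id{\Bool}); (=) = \discharger{\Bool};\langle\tfrac12\rangle$, is correct as stated but only applies when the uniform bit is consumed solely by the equality gate. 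Here it is not: the $\langle\tfrac12\rangle$ is first copied, with one branch escaping to an output wire, so you cannot absorb it into $(=)$ without destroying the correlation between the emitted key bit and the equality result. Your argument proves the weaker, no-copy variant of the lemma (whose conclusion would have type $\Bool^\star \to \Bool$, not $\Bool^\star \to \Bool^\star \otimes \Bool$), and the same misreading affects your invocation of Lemma~\ref{lem:all1}, which is likewise stated with the key copied and output.

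The missing idea is a \emph{joint}-distribution identity rather than a marginal one: for a uniform $r$ and input $b$, the pair $(r,\, \textsf{eq}(r,b))$ is equidistributed with $(\textsf{eq}(s,b),\, s)$ for a fresh uniform $s$. This is the ``reordering'' step in the paper's proof: after merging the two iterations with (\ref{def:tcomp}) and (\ref{def:tmono}), one rewrites each step so that the bit entering the $\textsf{all-1}$ accumulator is a fresh uniform bit (to which Lemma~\ref{lem:all1} at $p=\tfrac12$ then applies after re-splitting the iteration), while the output key bit is reconstructed as the equality of that fresh bit with the guess --- which is again uniform, recovering the right-hand side. Your plan of isolating all asymptotic content in Lemma~\ref{lem:all1} is the right instinct and matches the paper, but without the joint reordering identity the reduction does not go through for the statement as given.
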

	\begin{proof}
		First, we merge the two iterations into one using (\ref{def:tcomp}) and (\ref{def:tmono}). Then we can reorder $\langle 1/2 \rangle$ and $(=)$ operations in such a way that they still give the same probabilistic stochastic map. Thirdly, we can separate the iteration into two iterations, and apply Lemma \ref{lem:all1} to the first iteration. Lastly, merge the two remaining iterations again and note that equality to a random bit gives a random bit (similar to what we observed with xor before).

		\begin{center}
			$\vcen{pics/S5/ex/guess_P_a}
			= \vcen{pics/S5/ex/guess_P_b}
			= \vcen{pics/S5/ex/guess_P_c}
			= \vcen{pics/S5/ex/guess_P_d}
			\equiv \vcen{pics/S5/ex/guess_P_e}
			= \vcen{pics/S5/ex/guess_P_f}
			$
		\end{center}
	\end{proof}

    We consider one final example, the iterated von Neumann's trick. This is a method for simulating a fair coin flip using a biased coin that lands on heads with probability $p \in (0,1)$. The method involves flipping the biased coin twice and selecting the first outcome if both flips match; otherwise, the process is repeated.
    When a form of parametric iteration is available, the trick can be formalised as Example 11 in~\cite{torres2024iteration}. Here we give a categorical and diagrammatic representation as the diagram in the lemma.
    \begin{lemma}\label{eq:vn trick}
    	$\vcen{pics/S5/ex/Von_a} \equiv \vcen{pics/S5/ex/Von_b}$
    \end{lemma}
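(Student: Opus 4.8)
The plan is to reduce the asymptotic statement to a single geometric estimate on the finite instantiations $S_k$. Set $q := p^2 + (1-p)^2$; since $p \in (0,1)$ we have $q = 1 - 2p(1-p) < 1$. The core claim is that $d(S_k(\ell), S_k(r)) \leq q^k$ for every $k \in \Nat$, where $\ell$ and $r$ denote the left- and right-hand diagrams of the lemma, with $r = \langle \tfrac12 \rangle$ the fair coin. Granting this, asymptotic equivalence is immediate from the definition of $\equiv$: for every $a \in \Nat$ we have $\lim_{k\to\infty} k^a\, d(S_k(\ell),S_k(r)) \leq \lim_{k\to\infty} k^a q^k = 0$, since a polynomial is dominated by a geometrically decaying factor.

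To prove the estimate I would first apply $S_k$ to unfold the $\star$-iteration into the $k$-round bounded von Neumann procedure and expose its recursive structure: a single round flips two biased coins $\langle p\rangle$ and branches on whether the two bits agree. The heart of the argument is the \emph{one-round identity}, that such a round equals the convex combination $\phi^q$ of the $k$-round subroutine (weight $q$) and the fair coin $\langle\tfrac12\rangle$ (weight $1-q$). This is von Neumann's observation made diagrammatic: the two ``disagree'' patterns occur with equal probability $p(1-p) = (1-p)p$, so the decided branch outputs a uniformly random bit exactly, while the two ``agree'' patterns contribute total weight $q$ and pass control to the next round. Establishing this identity is a finite rearrangement of the two $\langle p\rangle$ generators and the branching $\phi$, carried out with the splitting and distributivity laws \eqref{ax:boolsplit} and \eqref{ax:phidistr} of Figure~\ref{fig:derived laws markov p choice}.

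With the one-round identity in place I would prove $S_k(\ell) \equiv_{q^k} \langle\tfrac12\rangle$ by induction on $k$ inside the sound proof system \eqref{eq:metric}. The base case $k=0$ is the axiom $f \equiv_1 g$, which matches $q^0 = 1$. For the step, rewrite the $(k+1)$-round trick via the one-round identity as the $\phi^q$-combination of the $k$-round trick and a genuine fair coin; by the inductive hypothesis the former is $q^k$-close to $\langle\tfrac12\rangle$ and the latter is $0$-close, so the convex-combination rule of \eqref{eq:metric} gives distance $q\cdot q^k + (1-q)\cdot 0 = q^{k+1}$, using also that $\phi^q$ of two fair coins is again fair. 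Soundness (Lemma~\ref{lemma:quantitative soundness and completeness}) then yields $d(S_k(\ell),\langle\tfrac12\rangle) \leq q^k$. If instead the right-hand diagram is the fair if-gate $\phi^{1/2}$ driven by the von Neumann bit, the same estimate together with the congruence property (Proposition~\ref{prop:asyeq is a congruence}) finishes the proof.

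The main obstacle is the one-round identity of the second paragraph: extracting the clean $\phi^q$-normal form $(1-q)\cdot\langle\tfrac12\rangle + q\cdot(\text{repeat})$ from the raw ``flip-twice-and-branch'' diagram. This demands threading the iteration state so that the recursive branch type-checks against the convex rule, and manipulating the probabilistic-choice axioms to collapse two independent $\langle p\rangle$ flips and the comparison gate into a single weighted choice, crucially exploiting the symmetry $p(1-p) = (1-p)p$, which is precisely what makes the decided branch exactly fair rather than merely asymptotically fair. Once this normalisation is secured, the induction and the limit computation are routine.
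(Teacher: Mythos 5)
Your proof is correct and its overall architecture matches the paper's: reduce $\equiv$ to a geometric bound on the $S_k$-instantiations, prove that bound by induction on $k$ by exhibiting the $(k{+}1)$-round procedure as a $\phi^q$-convex combination of the $k$-round procedure (handled by the inductive hypothesis) and an exactly fair branch (distance $0$), apply the convex-combination rule of \eqref{eq:metric}, and finish with $\lim_{k\to\infty} k^a q^k = 0$. Where you genuinely diverge is in the key one-round identity, and hence the constant. You decompose a round as agreement (weight $q = p^2+(1-p)^2$, pass control on) versus disagreement (weight $1-q$, output exactly fair by the symmetry $p(1-p)=(1-p)p$), yielding the bound $(p^2+(1-p)^2)^k$; the paper instead extracts a deterministic residue, writing a round as ``with probability $2p-1$ apply the not gate $n$ and recurse, otherwise emit a fair bit'' (for $p \ge 1/2$; the symmetric variant for $p \le 1/2$), yielding the sharper bound $|2p-1|^k$ at the cost of a case split on $p \gtrless 1/2$ and of tracking that the $k$-round trick post-composed with $n$ is still $q^k$-close to $\langle \tfrac12 \rangle$. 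Since $|2p-1| \le p^2+(1-p)^2 < 1$ for all $p \in (0,1)$, either base decays geometrically and suffices to beat the polynomial factor, so both arguments are sound; yours avoids the case split, the paper's gives a tighter quantitative estimate. One caveat: your one-round identity must be checked against the diagram actually drawn in the lemma --- the paper's reliance on the not gate and on the sign of $2p-1$ indicates that its round is not symmetric under $p \mapsto 1-p$, unlike the textbook agree/disagree round you describe --- but this discrepancy only changes the constant, not the validity or the shape of the induction.
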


	\begin{proof}
    The equation above states that the process of flipping a bit with probability $p$ a total of $k$-times converges to a fair coin flip as $k$ increases. We prove this by showing that for a constant number $k$ of iterations, the two diagrams are $(2p-1)^k$ away if $p \geq 1/2$, and $(1-2p)^k$ away if $p \leq 1/2$. In both cases, the distance approaches zero exponentially.

    We shall look at the case in which $p \geq 1/2$, the other case goes similarly.
    We prove the result by induction on $k$, as follows. The base case holds since everything is $1$ away from everything else, which satisfies the condition.
    For the inductive step, let $k' = k + 1$ and observe that the following holds true, where we take $q = 2p-1$ and $h = q^k$ the distance from the induction hypothesis. We write $n : \Bool \to \Bool$ for the not gate, then:
    \begin{center}
    $\vcen{pics/S5/ex/von_P_a}
    = \vcen{pics/S5/ex/von_P_b}
    = \vcen{pics/S5/ex/von_P_c}
    = \vcen{pics/S5/ex/von_P_d}$
	\end{center}
	\begin{center}
	$
    = \vcen{pics/S5/ex/von_P_e}
    \stackrel{\text{induction hypothesis}}{\equiv_{q \cdot q^k + (1-q) \cdot 0}} \vcen{pics/S5/ex/von_P_f}
    = \vcen{pics/S5/ex/von_P_g}
    $
	\end{center}
	\end{proof}

\section{Conclusions and Future Work}\label{sec:conc}

This work is a step towards an algebraic approach to cryptographic proofs completely parametric in security parameters. Such proofs normally entail showing that all protocols have polynomial time complexity with respect to the parameters, and require intricate probabilistic calculations.
Binding the parameter helps avoid unsound proof pitfalls, such as accidentally using asymptotic rewriting while performing induction over number of iterations.

In order to incorporate further cryptographic protocols however, we need two further extensions. First, we need to allow for cryptographic primitives, which should be given as families of operations in $\C$ indexed over the security parameter, with accompanying assumptions regarding distances. This should be possible by extending $\C$ accordingly, and deriving further equations once one takes the parametric iteration over $\C$.

The second extension entails dealing with operations like \emph{pseudorandom number generators}, as e.g. used in stream ciphers. In the current framework, we cannot generate more randomness from less. Concretely, we theoretically cannot have a morphism $f : \Bool^\star \to \Bool^\star \otimes \Bool$ such that $\tau^\star(\langle \frac{1}{2} \rangle) ; f \equiv \tau^\star(\langle \frac{1}{2} \rangle) \otimes \langle \frac{1}{2} \rangle$.
The solution to this problem is to define a coarser relation, that of \emph{contextual equivalence} relative to contexts of morphisms which can only investigate a finite amount of data, meaning whose input and output are constant (from $\C$). Our notion of asymptotic equivalence is (trivially) sound with respect to this notion of contextual equivalence, though further investigation of this relation is for future research.

On the topic of probabilistic computations, we use the fact that all our computations are total, which is reflected in the naturality property of our if-gate operation. Extending to partial probabilistic computations, we should generalize to partial Markov categories with a condition/equality operation. Our if-gate as well as the discard operation will only be natural over all total computations (those not involving conditioning). 
In order to more accurately model an if-statement as might be used in a programming language, we should either introduce a more structured functorial if operation, or move towards distributive monoidal categories and their associated languages of tape/sheet diagrams~\cite{bonchi2025tape,comfort2020sheet}.

\bibliography{main}

\begin{thebibliography}{10}

\bibitem{bonchi2025tape}
Filippo Bonchi, Cipriano~Junior Cioffo, Alessandro Di~Giorgio, and Elena
  Di~Lavore.
\newblock {Tape Diagrams for Monoidal Monads}.
\newblock In Corina C\^{i}rstea and Alexander Knapp, editors, {\em 11th
  Conference on Algebra and Coalgebra in Computer Science (CALCO 2025)}, volume
  342 of {\em Leibniz International Proceedings in Informatics (LIPIcs)}, pages
  11:1--11:24, Dagstuhl, Germany, 2025. Schloss Dagstuhl -- Leibniz-Zentrum
  f{\"u}r Informatik.
\newblock URL:
  \url{https://drops.dagstuhl.de/entities/document/10.4230/LIPIcs.CALCO.2025.11},
  \href {https://doi.org/10.4230/LIPIcs.CALCO.2025.11}
  {\path{doi:10.4230/LIPIcs.CALCO.2025.11}}.

\bibitem{Broadbent_2023}
Anne Broadbent and Martti Karvonen.
\newblock Categorical composable cryptography: extended version.
\newblock {\em Logical Methods in Computer Science}, Volume 19, Issue 4,
  December 2023.
\newblock URL: \url{http://dx.doi.org/10.46298/lmcs-19(4:30)2023}, \href
  {https://doi.org/10.46298/lmcs-19(4:30)2023}
  {\path{doi:10.46298/lmcs-19(4:30)2023}}.

\bibitem{canetti2001universally}
Ran Canetti.
\newblock Universally composable security: A new paradigm for cryptographic
  protocols.
\newblock In {\em Proceedings 42nd IEEE Symposium on Foundations of Computer
  Science}, pages 136--145. IEEE, 2001.

\bibitem{carette2019szx}
Titouan Carette, Dominic Horsman, and Simon Perdrix.
\newblock Szx-calculus: Scalable graphical quantum reasoning.
\newblock In {\em MFCS 2019-44th International Symposium on Mathematical
  Foundations of Computer Science}, volume 138, pages 55--1, 2019.

\bibitem{soton422947}
Apiwat Chantawibul and Pawel Sobocinski.
\newblock Monoidal multiplexing.
\newblock In B.~Fischer and T.~Uustalu, editors, {\em Theoretical Aspects of
  Computing ? ICTAC 2018}, volume 11187, pages 116--131. Springer, October
  2018.
\newblock URL: \url{https://eprints.soton.ac.uk/422947/}.

\bibitem{coecke2016mathematical}
Bob Coecke, Tobias Fritz, and Robert~W Spekkens.
\newblock A mathematical theory of resources.
\newblock {\em Information and Computation}, 250:59--86, 2016.

\bibitem{comfort2020sheet}
Cole Comfort, Antonin Delpeuch, and Jules Hedges.
\newblock Sheet diagrams for bimonoidal categories.
\newblock {\em arXiv preprint arXiv:2010.13361}, 2020.

\bibitem{dallago_et_al:LIPIcs.FSCD.2022.4}
Ugo Dal~Lago, Furio Honsell, Marina Lenisa, and Paolo Pistone.
\newblock {On Quantitative Algebraic Higher-Order Theories}.
\newblock In Amy~P. Felty, editor, {\em 7th International Conference on Formal
  Structures for Computation and Deduction (FSCD 2022)}, volume 228 of {\em
  Leibniz International Proceedings in Informatics (LIPIcs)}, pages 4:1--4:18,
  Dagstuhl, Germany, 2022. Schloss Dagstuhl -- Leibniz-Zentrum f{\"u}r
  Informatik.
\newblock URL:
  \url{https://drops.dagstuhl.de/entities/document/10.4230/LIPIcs.FSCD.2022.4},
  \href {https://doi.org/10.4230/LIPIcs.FSCD.2022.4}
  {\path{doi:10.4230/LIPIcs.FSCD.2022.4}}.

\bibitem{Monoidal_Streams}
Elena Di~Lavore, Giovanni de~Felice, and Mario Rom\'{a}n.
\newblock Monoidal streams for dataflow programming.
\newblock In {\em Proceedings of the 37th Annual ACM/IEEE Symposium on Logic in
  Computer Science}, LICS '22, New York, NY, USA, 2022. Association for
  Computing Machinery.
\newblock \href {https://doi.org/10.1145/3531130.3533365}
  {\path{doi:10.1145/3531130.3533365}}.

\bibitem{fritz_2020}
Tobias Fritz.
\newblock A synthetic approach to {{Markov}} kernels, conditional independence
  and theorems on sufficient statistics.
\newblock {\em Advances in Mathematics}, 370:107239, 2020.
\newblock \href {http://arxiv.org/abs/1908.07021} {\path{arXiv:1908.07021}}.

\bibitem{holtzen2020scaling}
Steven Holtzen, Guy Van~den Broeck, and Todd Millstein.
\newblock Scaling exact inference for discrete probabilistic programs.
\newblock {\em Proceedings of the ACM on Programming Languages},
  4(OOPSLA):1--31, 2020.

\bibitem{houghton2021mathematical}
Nicholas~Gauguin Houghton-Larsen.
\newblock A mathematical framework for causally structured dilations and its
  relation to quantum self-testing.
\newblock {\em arXiv preprint arXiv:2103.02302}, 2021.

\bibitem{joyal1991geometry}
Andr{\'e} Joyal and Ross Street.
\newblock The geometry of tensor calculus, {{I}}.
\newblock {\em Advances in Mathematics}, 88(1):55--112, July 1991.
\newblock \href {https://doi.org/10.1016/0001-8708(91)90003-P}
  {\path{doi:10.1016/0001-8708(91)90003-P}}.

\bibitem{Joyal1991}
Andre Joyal and Ross Street.
\newblock The geometry of tensor calculus, \textsc{i}.
\newblock {\em Advances in Mathematics}, 88(1):55--112, 1991.

\bibitem{Modern_Cryptography}
Jonathan Katz and Yehuda Lindell.
\newblock {\em Introduction to Modern Cryptography, Second Edition}.
\newblock Chapman \& Hall/CRC, 2nd edition, 2014.

\bibitem{kolmogorov1970introductory}
A.N. Kolmogorov and S.V. Fomin.
\newblock {\em Introductory Real Analysis}.
\newblock Selected Russian publications in the mathematical sciences.
  Prentice-Hall, 1970.
\newblock URL: \url{https://books.google.nl/books?id=9l4PAQAAMAAJ}.

\bibitem{Lago-Indistinct}
Ugo~Dal Lago, Zeinab Galal, and Giulia Giusti.
\newblock On computational indistinguishability and logical relations.
\newblock In {\em Programming Languages and Systems: 22nd Asian Symposium,
  APLAS 2024, Kyoto, Japan, October 22-24, 2024, Proceedings}, page 241–263,
  Berlin, Heidelberg, 2024. Springer-Verlag.
\newblock \href {https://doi.org/10.1007/978-981-97-8943-6_12}
  {\path{doi:10.1007/978-981-97-8943-6_12}}.

\bibitem{quantitativemonoidalalgebra}
Gabriele Lobbia, Wojciech R\'{o}\.{z}owski, Ralph Sarkis, and Fabio Zanasi.
\newblock {Quantitative Monoidal Algebra: Axiomatising Distance with String
  Diagrams}.
\newblock In Pawe{\l} Gawrychowski, Filip Mazowiecki, and Micha{\l} Skrzypczak,
  editors, {\em 50th International Symposium on Mathematical Foundations of
  Computer Science (MFCS 2025)}, volume 345 of {\em Leibniz International
  Proceedings in Informatics (LIPIcs)}, pages 68:1--68:21, Dagstuhl, Germany,
  2025. Schloss Dagstuhl -- Leibniz-Zentrum f{\"u}r Informatik.
\newblock URL:
  \url{https://drops.dagstuhl.de/entities/document/10.4230/LIPIcs.MFCS.2025.68},
  \href {https://doi.org/10.4230/LIPIcs.MFCS.2025.68}
  {\path{doi:10.4230/LIPIcs.MFCS.2025.68}}.

\bibitem{mclane}
Saunders Mac~Lane.
\newblock {\em {Categories for the Working Mathematician}}.
\newblock Springer, 1998.

\bibitem{maurer2011constructive}
Ueli Maurer.
\newblock Constructive cryptography--a new paradigm for security definitions
  and proofs.
\newblock In {\em Joint Workshop on Theory of Security and Applications}, pages
  33--56. Springer, 2011.

\bibitem{MaurerR11}
Ueli Maurer and Renato Renner.
\newblock Abstract cryptography.
\newblock In Bernard Chazelle, editor, {\em Innovations in Computer Science -
  {ICS} 2011, Tsinghua University, Beijing, China, January 7-9, 2011.
  Proceedings}, pages 1--21. Tsinghua University Press, 2011.
\newblock URL:
  \url{http://conference.iiis.tsinghua.edu.cn/ICS2011/content/papers/14.html}.

\bibitem{piedeleu2025boolean}
Robin Piedeleu, Mateo Torres-Ruiz, Alexandra Silva, and Fabio Zanasi.
\newblock A complete axiomatisation of equivalence for discrete probabilistic
  programming.
\newblock In {\em Programming Languages and Systems: 34th European Symposium on
  Programming, ESOP 2025, Held as Part of the International Joint Conferences
  on Theory and Practice of Software, ETAPS 2025, Hamilton, ON, Canada, May
  3–8, 2025, Proceedings, Part II}, pages 202--229, Berlin, Heidelberg, 2025.
  Springer-Verlag.
\newblock \href {https://doi.org/10.1007/978-3-031-91121-7_9}
  {\path{doi:10.1007/978-3-031-91121-7_9}}.

\bibitem{piedeleu2023introduction}
Robin Piedeleu and Fabio Zanasi.
\newblock {\em An Introduction to String Diagrams for Computer Scientists}.
\newblock Elements in Applied Category Theory. Cambridge University Press,
  2025.

\bibitem{Selinger2009}
P.~Selinger.
\newblock A {{Survey}} of {{Graphical Languages}} for {{Monoidal Categories}}.
\newblock 813:289--355, 2010.
\newblock \href {https://doi.org/10.1007/978-3-642-12821-9_4}
  {\path{doi:10.1007/978-3-642-12821-9_4}}.

\bibitem{torres2024iteration}
Mateo Torres-Ruiz, Robin Piedeleu, Alexandra Silva, and Fabio Zanasi.
\newblock On iteration in discrete probabilistic programming.
\newblock In {\em 9th International Conference on Formal Structures for
  Computation and Deduction (FSCD 2024)}, pages 20--1. Schloss
  Dagstuhl--Leibniz-Zentrum f{\"u}r Informatik, 2024.

\end{thebibliography}
\vfill

\appendix

\section{Additional Material}\label{app:additional}

\begin{figure}[H]
    \[
    \begin{array}{l@{\qquad}l@{\qquad}l@{\qquad}l}
        \toprule
        \multicolumn{4}{c}{\text{Markov categories}}
        \\
        \midrule
        \multicolumn{4}{c}{
            \begin{array}{l@{\qquad}l@{\qquad}l}
                \copier{A} ; (\discharger{A} \otimes id_A) = id_A
                &
                \copier{A} ; \chi_{A,A} = \copier{A}
                &
                \copier{A} ; (\copier{A} \otimes id_A) = \copier{A} ; (id_A \otimes \chi_{A,A})
            \end{array}
        }
        \\
        \multicolumn{4}{c}{f ; \discharger{B} = \discharger{A}}
        \\
        \midrule
        \multicolumn{4}{c}{\text{Probabilistic choice structure}}
        \\
        \midrule
        \langle 1 \rangle ; \copier{\Bool} = \langle 1 \rangle \otimes \langle 1 \rangle
        &
        \langle 0 \rangle ; \copier{\Bool} = \langle 0 \rangle \otimes \langle 0 \rangle
        &
        \phi^1_A = (id_A \otimes \discharger{A})
        &
        \phi_A ; f = (f \otimes \id{\Bool} \otimes f) ; \phi_B
        \\
        \multicolumn{4}{l}{
            \begin{array}{@{}l}
                \phi_I = \discharger{\Bool}
                \\
                \phi_{A \otimes C} = (id_{A^2} \otimes \copier{\Bool} \otimes id_{B^2}) ; (id_A \otimes \chi_{C , \Bool} \otimes \chi_{\Bool , A} \otimes id_C) ; (id_{A \otimes \Bool} \otimes \chi_{C,A} \otimes id_{\Bool \otimes C}) ; (\phi_A \otimes \phi_C)
                \\
                \phi^p_A = \chi_{A \otimes \Bool , A} ; (id_A \otimes \chi_{A , \Bool}) ; \phi^{1-p}_A
                \\
                (\phi^p_A \otimes id_A) ; \phi^q_A = (id_A \otimes \phi^r_A) ; \phi^t_A \text{ if } p q = t \text{ and } (1-r)(1-t) = 1-q
            \end{array}
        }
        \\
        \multicolumn{4}{l}{
            \begin{array}{@{}l@{\qquad}l}
                (\langle 1 \rangle \otimes id_\Bool \otimes \langle 0 \rangle) ; \phi_\Bool = id_\Bool
                &    
                (\copier{A} \otimes id_\Bool) ; (id_A \otimes \chi_{A , \Bool}) ; \phi_A = id_A \otimes \discharger{\Bool}
            \end{array}
            
        }
        \\
        \bottomrule
    \end{array}
\]
\caption{Axioms in term-based syntax.}    
\label{fig:axioms table}
\end{figure}

\section{Symmetric Monoidal Categories}\label{sec:monoidal}%

\begin{definition}
    A \emph{symmetric monoidal category} $(\Cat{C}, \otimes, I, \chi)$ consists of a category $\Cat{C}$, a bifunctor $\otimes \colon \Cat{C} \times \Cat{C} \to \Cat{C}$, an object $I$ and natural isomorphisms
    \[
    \alpha_{X,Y,Z} \colon (X \otimes Y) \otimes Z \to X \otimes (Y \otimes Z)
    \quad
    \lambda_X \colon I \otimes X \to X
    \quad
    \rho_X \colon X \otimes I \to X
    \quad
    \chi_{X,Y} \colon X \otimes Y \to Y \otimes X
     \]
     satisfying the coherence axioms below.
     \[
     \scalebox{0.7}{\begin{tikzcd}[column sep=tiny]
                                                                                                                           & (X \otimes Y) \otimes (Z \otimes W) \arrow[rd, "{\alpha_{X, Y, Z \otimes W}}"] &                                                                                 \\
((X \otimes Y) \otimes Z) \otimes W \arrow[ru, "{\alpha_{X \otimes Y, Z, W}}"] \arrow[d, "{\alpha_{X,Y,Z} \otimes \id{W}}"'] &                                                                                & X \otimes (Y \otimes (Z \otimes W))                                             \\
(X \otimes (Y \otimes Z)) \otimes W \arrow[rr, "{\alpha_{X, Y \otimes Z, W}}"']                                            &                                                                                & X \otimes ((Y \otimes Z) \otimes W) \arrow[u, "{\id{X} \otimes \alpha_{Y,Z,W}}"']
\end{tikzcd}
}
     \quad
     \scalebox{0.7}{\begin{tikzcd}[column sep=tiny]
    (X \otimes I) \otimes Y \arrow[rr, "{\alpha_{X, I, Y}}"] \arrow[rd, "\rho_X \otimes \id{Y}"'] &             & X \otimes (1 \otimes Y) \arrow[ld, "\id{X} \otimes \lambda_Y"] \\
                                                                                               & X \otimes Y &                                                            
\end{tikzcd}
}
     \]
     \[
    \scalebox{0.7}{\begin{tikzcd}[row sep=normal]
X \otimes Y \arrow[r, "{\chi_{X,Y}}"] \arrow[rd, Rightarrow, no head, shift right] & Y \otimes X \arrow[d, "{\chi_{Y,X}}"] \\
                                                                        & X \otimes Y                            
\end{tikzcd}
}
     \quad
     \scalebox{0.7}{\begin{tikzcd}[row sep=normal]
X \otimes I \arrow[r, "{\chi_{X,I}}"] \arrow[d, "\rho_X"'] & I \otimes X \arrow[d, "\lambda_X"] \\
X \arrow[r, Rightarrow, no head]                                & X                                    
\end{tikzcd}
}
     \quad
     \scalebox{0.7}{\begin{tikzcd}
                                                                                                  & X \otimes (Y \otimes Z) \arrow[r, "{\id{X} \otimes \chi_{Y,Z}}"] & X \otimes (Z \otimes Y) \arrow[rd, "{\alpha^-_{X, Z, Y}}"]       &                         \\
(X \otimes Y) \otimes Z \arrow[ru, "{\alpha_{X, Y, Z}}"] \arrow[rd, "{\chi_{X \otimes Y, Z}}"'] &                                                                  &                                                                     & (X \otimes Z) \otimes Y \\
                                                                                                  & Z \otimes (X \otimes Y) \arrow[r, "{\alpha^-_{Z, X, Y}}"']    & (Z \otimes X) \otimes Y \arrow[ru, "{\chi_{Z, X} \otimes \id{Y}}"'] &                        
\end{tikzcd}
}
     \]
\end{definition}

\begin{definition}
    Let $(\Cat{C}, \otimes^\Cat{C}, I^\Cat{C}, \alpha^\Cat{C}, \lambda^\Cat{C}, \rho^\Cat{C}, \chi^\Cat{C})$ and $(\Cat{D}, \otimes^\Cat{D}, I^\Cat{D}, \alpha^\Cat{D}, \lambda^\Cat{D}, \rho^\Cat{D}, \chi^\Cat{D})$ be two symmetric monoidal categories. A \emph{symmetric monoidal functor} is a \emph{strong monoidal functor}, that is a triple $(F, \epsilon, \mu)$ consisting of a functor $F \colon \Cat{C} \to \Cat{D}$ and two natural isomorphisms
    \[ \epsilon \colon I^\Cat{D} \to F(I^\Cat{C}) \qquad \mu_{X,Y} \colon F(X) \otimes^\Cat{D} F(Y) \to F(X \otimes^\Cat{C} Y) \]
    such that the following diagrams commute
    \[
        \scalebox{0.7}{
\begin{tikzcd}
        {(F(X) \otimes^\Cat F(Y)) \otimes^\Cat{D} F(Z) } && {F(X) \otimes^\Cat{D} (F(Y) \otimes^\Cat{D} F(Z))} \\
        {F(X \otimes^\Cat{C} Y) \otimes^\Cat{D} F(Z) } && {F(X) \otimes^\Cat{D} F(Y \otimes^\Cat{C} Z)} \\
        {F((X \otimes^\Cat{C} Y) \otimes^\Cat{C} Z) } && {F(X \otimes^\Cat{C} (Y \otimes^\Cat{C} Z) )}
        \arrow["{\alpha^{\Cat{D}}_{F(X), F(Y), F(Z)}}", from=1-1, to=1-3]
        \arrow["{F(\alpha^{\Cat{C}}_{X,Y,Z})}"', from=3-1, to=3-3]
        \arrow["{\mu_{X, Y \otimes^\Cat{C} Z}}", from=2-3, to=3-3]
        \arrow["{\id{F(X)} \otimes^\Cat{D} \mu_{Y,Z}}", from=1-3, to=2-3]
        \arrow["{\mu_{X,Y} \otimes^\Cat{D} \id{F(Z)}}"', from=1-1, to=2-1]
        \arrow["{\mu_{X \otimes^\Cat{C} Y, Z}}"', from=2-1, to=3-1]
    \end{tikzcd}
}
    \]
    \[
        \scalebox{0.7}{
\begin{tikzcd}
        {I^\Cat{D} \otimes^\Cat{D} F(X)} & {F(I^\Cat{C}) \otimes^\Cat{D} F(X)} \\
        {F(X)} & {F(I^\Cat{C} \otimes^\Cat{C} X)}
        \arrow["{\epsilon \otimes^\Cat{D} \id{F(X)}}", from=1-1, to=1-2]
        \arrow["{\mu_{I^\Cat{C},X}}", from=1-2, to=2-2]
        \arrow["{F(\lambda^{\Cat{C}}_X)}", from=2-2, to=2-1]
        \arrow["{\lambda^{\Cat{D}}_{F(x)}}"', from=1-1, to=2-1]
    \end{tikzcd}
}
        \scalebox{0.7}{
\begin{tikzcd}
        {F(X) \otimes^\Cat{D} J} & {F(X) \otimes^\Cat{D} F(I^\Cat{C})} \\
        {F(X)} & {F(X \otimes^\Cat{C} I^\Cat{C})}
        \arrow["{\id{F(X)} \otimes^\Cat{D} \epsilon}", from=1-1, to=1-2]
        \arrow["{\mu_{X,I^\Cat{C}}}", from=1-2, to=2-2]
        \arrow["{F(\rho^{\Cat{C}}_X)}", from=2-2, to=2-1]
        \arrow["{\rho^{\Cat{D}}_{F(x)}}"', from=1-1, to=2-1]
    \end{tikzcd}
}
    \]
    and such that $F$ preserves symmetries on both sides, that is $\chi^\Cat{D}_{F(X), F(Y)} ; \mu_{Y,X} = \mu_{X,Y} ; F(\chi^\Cat{C}_{Y,X})$.
\end{definition}

\begin{definition}
    A symmetric monoidal category $\Cat{C}$ is said to be \emph{strict} when $\alpha, \lambda$ and $\rho$ are identity natural isomorphisms. A strict symmetric monoidal functor is a functor $F \colon \Cat{C} \to \Cat{D}$, preserving $\otimes, I$ and $\chi$.
\end{definition}

By MacLane's strictification theorem~\cite{mclane}, every symmetric monoidal category is monoidally equivalent to a strict symmetric monoidal category.

\section{String Diagrams for Symmetric Monoidal Categories}\label{sec:app string diagrams}%

In this appendix we give a brief exposition of string diagrams as terms of a typed language.

Given a set $\sort$ of basic sorts, types are elements of $\sort^\ast$, i.e. words over $\sort$. Terms are generated by the following context-free grammar:
\begin{equation}\label{eq:syntaxsymmetricstrict}
\begin{array}{rcl}
f,g & :=& \; \id{I} \; \mid \; \id{A} \; \mid \; \gen \; \mid \; \chi_{A,B} \; \mid \;   f ; g   \; \mid \;  f \otimes g \\
\end{array}
\end{equation}
where $s$ belongs to a set of generators $\sign$. For any pair of words $X$ and $Y$, we regard $X \otimes Y$ as their concatenation. Each $s \in \sign$, comes with an arity $\ari(s) \in \sort^\ast$ and a coarity $\coar(s) \in \sort^\ast$. Altogether $(\sort, \sign, \ari, \coar)$ forms what is called a \emph{monoidal signature}.

Amongst the terms generated by~\eqref{eq:syntaxsymmetricstrict}, we consider only those that can be typed according to the following inference rules,
\begin{equation}
    \begin{array}{ccc}
        \frac{}{\id{A} \colon A \to A}
        &
        \frac{}{\id{I} \colon I \to I}
        &
        \frac{}{\chi_{A, B} \colon A \otimes B \to B \otimes A}
        \\[10pt]
        \frac{\gen \in \sign \quad \ari(\gen) = X \quad \coar(\gen) = Y}{\gen \colon X \to Y}
        &
        \frac{f \colon X \to Y \quad g \colon Y \to Z}{f ; g \colon X \to Z}
        &
        \frac{f \colon X_1 \to Y_1 \quad g \colon X_2 \to Y_2}{f \otimes g \colon X_1 \otimes X_2 \to Y_1 \otimes Y_2}
    \end{array}
\end{equation}
where $I$ is the empty word.
String diagrams are such terms modulo the axioms of symmetric monoidal categories, listed below.
\begin{equation}\label{eq:smcaxioms}
    \begin{array}{c@{\qquad}c@{\qquad}c}
        (f;g);h=f;(g;h) & \id{X};f=f=f;\id{Y} & (f_1\otimes f_2) ; (g_1 \otimes g_2) = (f_1;g_1) \otimes (f_2;g_2)
        \\
        \multicolumn{3}{c}{\begin{array}{c@{\qquad}c}
            \id{I}\otimes f = f = f \otimes \id{I} & (f \otimes g)\, \otimes h = f \otimes \,(g \otimes h)
        \\
        \chi_{A, B} ; \chi_{B, A} = \id{A \otimes B} & (f \otimes \id{Z}) ; \chi_{Y, Z} = \chi_{X,Z} ; (\id{Z} \otimes f)
        \end{array}}
    \end{array}
\end{equation}

String diagrams enjoy an elegant graphical representation: the identity $id_I$ is the empty diagram, while $id_A$ is a single wire labelled with $A$; symmetries $\chi_{A,B}$ are depicted as crossing of wires; a generator $s  \in \sign$, with $\ari(s) = A_1 \otimes \dots \otimes A_n$ and $\coar(s) = B_1 \otimes \dots \otimes B_m$, corresponds to a box having $n$ labelled wires on the left and $m$ labelled wires on the right; sequential and parallel composition are represented by concatenation and juxtaposition of diagrams, respectively. Indeed, the grammar in \eqref{eq:syntaxsymmetricstrict} can be given graphically as:
\begin{equation*}%
    \begin{array}{rcl}
    f & :=& \; \quad \; \mid \; \vcen{pics/S2/id} \; \mid \; \vcen{pics/S2/gen3} \; \mid \; \vcen{pics/S2/symm} \; \mid \;   \vcen{pics/S2/seq_comp}   \; \mid \;  \vcen{pics/S2/par_comp} \\
    \end{array}
\end{equation*}

The first two rows of axioms in~\eqref{eq:smcaxioms}
are implicit in the  graphical representation while the axioms in the last row  are displayed as
\[ \vcen{pics/S2/symm_inv} \qquad\qquad \vcen{pics/S2/symm_nat} \]

The category $\Cat{C}_\sign$ has words in $\sort^\ast$ as objects and string diagrams as arrows. Theorem 2.3 in~\cite{joyal1991geometry} states that $\Cat{C}_\sign$ is the free strict symmetric monoidal category generated by $(\sort,\sign)$. %

Wires in string diagrams may represent any object in the category. As such, it is possible to have a wire for the identity object and we can merge and split wires to combine and separate objects.
These operations can be represented in diagrammatic form as follows:

\begin{center}
	$
	\vcen{pics/S2/op_1}
	\qquad
	\vcen{pics/S2/op_2}
	\qquad
	\vcen{pics/S2/op_3}
	\qquad
	\vcen{pics/S2/op_4}
	$
\end{center}

Note that as morphisms in a category, all these operations are simply given by the identity morphism. They are purely a tool of convenience for keeping track of resources in string diagrams.
These operations satisfy the following equations:

\begin{figure}[H]
    \begin{center}
        $
        \vcen{pics/S2/A/ex_1a}
        =
        \vcen{pics/S2/A/ex_1b}
        \qquad
        \vcen{pics/S2/A/ex_2a}
        =
        \vcen{pics/S2/A/ex_2b}
        \qquad
        \vcen{pics/S2/A/ex_4a}
        =
        \vcen{pics/S2/A/ex_4b}
    $
    \end{center}

    \begin{center}
        $
        \vcen{pics/S2/A/ex_7a}
        =
        \vcen{pics/S2/A/ex_7b}
        \qquad
        \vcen{pics/S2/A/ex_8a}
        =
        \vcen{pics/S2/A/ex_8b}
        $
    \end{center}

    \begin{center}
        $
        \vcen{pics/S2/A/ex_3a}
        =
        \vcen{pics/S2/A/ex_3b}
        =
        \vcen{pics/S2/A/ex_3c}
        =
        \vcen{pics/S2/A/ex_3d}
        =
        \vcen{pics/S2/A/ex_3e}
        $
    \end{center}

    \begin{center}
        $
        \vcen{pics/S2/A/ex_5a}
        =
        \vcen{pics/S2/A/ex_5b}
        \qquad
        \vcen{pics/S2/A/ex_6a}
        =
        \vcen{pics/S2/A/ex_6b}
        $
    \end{center}
    \label{fig:monoidal operations}
    \caption{}
\end{figure}

\section{Appendix to Section~\ref{sec:iteration}}

\begin{proof}[Extended proof of Lemma \ref{lem:tau-prop}]
	We prove each property using the diagonalisation method, instantiating the top level iterations by $k$. We prove the relevant equation by induction on $k$. The base case is relatively simple. We prove the induction steps below, unfolding once, then using the induction hypothesis, and finishing with an optional refolding.

	Induction step for (\ref{def:tid}):
	\begin{center}
		$\vcen{pics/S3/A/star_id_P_a} = \vcen{pics/S3/A/star_id_P_b} = \vcen{pics/S3/A/star_id_P_c} = \vcen{pics/S3/A/star_id_P_d}$
	\end{center}

	Induction step for (\ref{def:tswap}):
	\begin{center}
		$\vcen{pics/S3/A/star_swap_P_a} = \vcen{pics/S3/A/star_swap_P_b} = \vcen{pics/S3/A/star_swap_P_c} = \vcen{pics/S3/A/star_swap_P_d}$
	\end{center}
	
	Induction step for (\ref{def:tcomp}) is done in the Section~\ref{sec:iteration}.
	
	Induction step for (\ref{def:tmono}):
	\begin{center}
		$\vcen{pics/S3/A/star_mono_P_a} = \vcen{pics/S3/A/star_mono_P_b} = \vcen{pics/S3/A/star_mono_P_c} = \vcen{pics/S3/A/star_mono_P_d} = \vcen{pics/S3/A/star_mono_P_e}$
	\end{center}
	
\end{proof}

\section{Appendix to Section~\ref{sec:preliminaries}}\label{ap:4}

\subsection{A Complete Equational Theory of Boolean Stochastic Maps}\label{app:completeness}
The structure of Markov categories with probabilistic choice, introduced in Section~\ref{sec:preliminaries}, is novel. This makes it particularly interesting to study whether the axioms are \emph{complete} with respect to the category $\FStoch_{\{1,0\}}$. In this context, completeness means that any equality between Boolean stochastic maps that holds in $\FStoch_{\{1,0\}}$ can also be derived syntactically using the axioms listed in Figure~\ref{fig:ax markov and p choice}. 

In order to formally state a completeness theorem, we introduce the syntactic category $\MP$ of probabilistic boolean circuits.
This is the category of string diagrams freely generated by the single-sorted signature $(\Bool, \sign)$, where $\sign := \{ \vcen{pics/S4/A/discardB}, \vcen{pics/S4/A/copyB}, \vcen{pics/S4/A/phiB}, \vcen{pics/S4/A/p} \}$ and such that the morphisms are additionally quotiented by the axioms in Figure~\ref*{fig:ax markov and p choice}.

Diagrams in $\MP$ take semantics in $\FStoch_{\{1,0\}}$ via the interpretation functor $\den{-} \colon$ ${\MP \to \FStoch_{\{1,0\}}}$, defined inductively as follows:
\[
		\den{\Bool} := \{1,0\}
		\quad
		\den{\vcen{pics/S4/A/discardB}} := \discharger{\{1,0\}}
		\quad
		\den{\vcen{pics/S4/A/copyB}} := \copier{\{1,0\}}
		\quad
		\den{\vcen{pics/S4/A/phiB}} := \phi_{\{1,0\}}
		\quad
		\den{\vcen{pics/S4/A/p}} := \langle p \rangle
\]
Then we say that the equational theory of $\MP$ is complete for $\FStoch_{\{1,0\}}$ if, for any two diagrams $c, d \in \MP$ of the same type, if $\den{c} = \den{d}$ then their equality can be derived using the axioms in Figure~\ref{fig:ax markov and p choice}. We rephrase Theorem \ref{thm:complete}.

\begin{theorem}
	The equational theory of $\MP$ is \emph{complete} for $\FStoch_{\{1,0\}}$.
\end{theorem}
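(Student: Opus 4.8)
The statement asserts faithfulness of the interpretation functor $\den{-} \colon \MP \to \FStoch_{\{1,0\}}$, and the plan is to prove it by exhibiting a \emph{normal form} that is uniquely determined by the semantics. Since $\MP$ is single-sorted, every morphism has type $\Bool^n \to \Bool^m$, and its semantics is precisely a family of distributions $(d_x)_{x \in \{0,1\}^n}$ with $d_x \in \Dist(\{0,1\}^m)$. I would define the normal form $\mathrm{NF}\big((d_x)_x\big) \colon \Bool^n \to \Bool^m$ as follows: first copy and fan out the $n$ input bits using the comonoid structure; then feed them as control bits into a balanced tree of multiplexers $\phi_{\Bool^m}$ (obtained from $\phi_\Bool$ by \eqref{ax:phiTimes}), which selects among $2^n$ leaves $g_x \colon I \to \Bool^m$; and finally realise each leaf $g_x$ as a canonical iterated convex combination $\phi^{q}$-comb of the $2^m$ constant tuples $\ket a$ (each $\ket a$ built from $\langle 0 \rangle$ and $\langle 1 \rangle$), with conditional weights chosen so that the comb denotes $d_x$.

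The proof then splits into two lemmas. \textbf{Reduction.} Every diagram of $\MP$ is provably equal, using the axioms of Figure~\ref{fig:ax markov and p choice}, to a normal form. I would argue by induction on the structure of diagrams, the core being two closure results: normal forms are closed under $\otimes$ and under sequential composition. Tensoring is routine (juxtapose the multiplexer trees and the leaf generators). Composition is the substantial step: plugging the $\Bool^m$-output of one normal form into the inputs of another requires pushing that output --- which is itself probabilistic and branches on the first diagram's control bits --- past the second diagram's multiplexers. Here I would rely on naturality of $\phi$ \eqref{ax:phinat}, the derived distributivity and splitting laws of Lemma~\ref{lem:split}, namely \eqref{ax:boolsplit} and \eqref{ax:phidistr}, and the determinism laws \eqref{ax:phidet} and \eqref{ax:phiZero}, together with the comonoid axioms, to flatten nested conditionals and marginalise, collapsing the composite back into a single $2^n$-case table of convex-combination leaves.

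\textbf{Uniqueness.} Two normal forms with the same semantics are provably equal. Reading off $\den{\mathrm{NF}((d_x))}$ recovers exactly the table $(d_x)_x$, so equal semantics forces the same table; it then remains to show that leaf generators $g_x$ and $g'_x$ denoting the same distribution $d_x$ are provably equal. This is precisely completeness of the \emph{convex (barycentric) algebra} presented by idempotence \eqref{ax:phiIdemp}, parametric commutativity \eqref{ax:phico} and associativity \eqref{ax:phias}, together with the interaction of $\phi$ with $\langle 0 \rangle$ and $\langle 1 \rangle$ via \eqref{ax:phiBool} and \eqref{ax:phiOne}: these are exactly the axioms presenting the finite-distribution monad $\Dist$ as the free convex algebra, so any two barycentric expressions for the same distribution normalise to the same ordered comb, whose conditional weights are the uniquely determined conditional probabilities of $d_x$.

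Combining the two lemmas gives the result: if $\den c = \den d$ then both reduce to the normal form of the common table $(d_x)_x$, hence $c = d$ in $\MP$. I expect the \emph{composition closure} inside the Reduction lemma to be the main obstacle, since it is exactly the point where the three layers of structure --- comonoid duplication, deterministic $\phi$-multiplexing, and convex mixing --- must be reconciled, and where the derived laws of Figure~\ref{fig:derived laws markov p choice} do the real work; the convex-algebra completeness in the Uniqueness lemma is standard and can be invoked as such.
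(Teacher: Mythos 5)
Your proposal is correct and follows essentially the same route as the paper: a normal form given by a multiplexer/case-distinction tree on copied input bits whose leaves are canonically ordered convex-combination trees over deterministic constants, with reduction proved by structural induction (composition being the key closure step, handled via naturality of $\phi$ and the derived laws) and uniqueness coming from the convex-algebra axioms. The only cosmetic difference is that the paper organises the composition step by peeling off one generator layer $(id \otimes s \otimes id)$ at a time and peels off one input bit at a time in the case tree, whereas you state closure under general composition and a balanced multiplexer tree; these are interchangeable.
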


\begin{proof}
	The proof shows that each element of $\FStoch_{\{1,0\}}$ has a unique normal form representation as a diagram in $\MP$, and each diagram in $\MP$ is provably equal to a diagram in normal form.
	
	We first equip each set $\Bool^n$ with a linear order.
	We then say that a formal sum $p_1 \cdot x_1 + \dots + p_m \cdot x_m$ over $\Bool^n$ is in normal form if all $p_i > 0$ and the list $x_1, \dots , x_m$ forms a sorted list of elements without repetition.
	As such, a formal sum in normal form is either of the form $1 \cdot x$, or $p \cdot x + (1-p) \cdot s$ with $p \in (0,1)$, and $s$ a formal sum in normal form, containing values $x'$ all higher than $x$ according to the chosen order.
	Importantly, each distribution has a unique normal formal sum representation.
	
	A weighted probabilistic tree over $\Bool^n$ is an inductively defined set of diagrams of type $I \to \Bool^n$:
	\[
	t,r \quad := \quad \langle b_1 \rangle \otimes \dots \otimes \langle b_n \rangle \mid (t \otimes r) ; \phi^p_{\Bool^n}
	\]
	where each $b_i \in \{0,1\}$. We represent a normal formal sum as a weighted probabilistic tree, by representing $1 \cdot (b_1,\dots,b_n)$ as $\langle b_1 \rangle \otimes \dots \otimes \langle b_n \rangle$, and $p \cdot (b_1,\dots,b_n) + (1-p) \cdot s$ as $(\langle b_1 \rangle \otimes \dots \otimes \langle b_n \rangle \otimes t) ; \phi^p_{\Bool^n}$, where $t$ is the representation of the normal formal sum $s$.
	It can be shown that any weighted probabilistic tree over $\Bool^n$ is equal to the tree representation of a normal formal sum. This is done by using the axioms of (iv) to get rid of zero-probability branches, sorting the order, and re-associating branches, as well as determinism of $\langle 1 \rangle$, $\langle 0 \rangle$ combined with the second equation of (v) to combine branches with the same result.
	
	The normal forms of type $\Bool^0 \to \Bool^n$ are the tree representations of normal formal sums, which are unique for each distribution. We generalize to arbitrary $\Bool^k \to \Bool^n$ by doing a case distinction on inputs, where a normal form of type $\Bool^{k+1} \to \Bool^n$ is of the form: $(\copier{\Bool^k} \otimes id_{\Bool}) ; (id_{\Bool^k} \otimes \chi_{\Bool^k , \Bool}) ; (f_1 \otimes id_\Bool \otimes f_0) ; \phi_{\Bool^n}$, where $f_1$ and $f_0$ are normal forms of morphisms of type $\Bool^k \to \Bool^n$.
	Each morphism $f : \Bool^k \multimap \Bool^n$ of $\FStoch_{\{1,0\}}$ has a unique normal form of this kind, with $f : \Bool^0 \to \Bool^n$ represented by the tree representation of the normal formal sum for the distribution given by $f(*)$, and $f : \Bool^{k+1} \to \Bool^n$ represented by taking $f_1$ and $f_0$ to be the normal form of the functions $(b_1,\dots,b_k) \mapsto f(b_1,\dots,b_k,1)$ and $(b_1,\dots,b_k) \mapsto f(b_1,\dots,b_k,0)$ respectively.
	This normal form is unique, as this specification is necessitated by the $\den{-}$ functor.
	
	See below examples of normal forms, with a normal form of a morphism of type $\Bool^0 \to \Bool^3$ on the left in the form of a sorted weighted tree (using lexicographic ordering on $\Bool^3$), and a normal form of a morphism of type $\Bool^3 \to \Bool^k$ with a weighted tree at each $t$ (some liberties were taken regarding the order of swaps and copies).
	\begin{center}
		\vcen{pics/S4/A/normal_tree}
		\qquad 
		\qquad 
		\vcen{pics/S4/A/normal_case}
	\end{center}
	
	We should prove that each diagram is equivalent to one in normal form. This is done in two steps.
	\begin{itemize}
		\item 
		First we show that the identity diagrams $id_{\Bool^n} : \Bool^n \to \Bool^n$ can be put in normal form. This can be shown by using the equations from (v) together with monoidal coherence and naturality of $\phi$.
		\item
		Second, note that each morphism in a freely generated monoidal category can be described as a composition of morphisms of the form $(id_A \otimes s \otimes id_B)$, with $s$ a generator, which in our case is of $\{\chi, \copier{}, \discharger{}, \langle p \rangle,\phi\}$. Then show that $f ; (id_A \otimes s \otimes id_B)$ where $f$ in normal form, can be reduced to a normal form.
		\begin{itemize}
			\item Pull $(id_A \otimes s \otimes id_B)$ through each $\pi$ using naturality, both the if-gates used to distinguish between different inputs, as well as the if-gates used in the weighted probabilistic trees, until we get to leaves of the trees.
			\item Reduce $(\langle b_1 \rangle \otimes \dots \otimes \langle b_n \rangle) ; (id_A \otimes s \otimes id_B)$ to a weighted probabilistic tree. 
			\item The result is $f$, but with different weighted probabilistic trees at each case in its case distinction. Put all those trees into normal form.
		\end{itemize}
	\end{itemize}
	Hence we can normalize each $f : \Bool^k \to \Bool^n$ by first giving it a specification as composition of $(id_A \otimes s \otimes id_B)$ morphisms. Then normalize the identity function on $\Bool^k$ and inductively normalize composing with each additional layer.

\end{proof}

\begin{lemma}\label{lemma:mp is metric enriched}
	$\MP$ is a metric enriched symmetric monoidal category.
\end{lemma}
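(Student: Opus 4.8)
The plan is to transport the metric enrichment of $\FStoch_{\{1,0\}}$ along the interpretation functor $\den{-} \colon \MP \to \FStoch_{\{1,0\}}$. First I would observe that $\den{-}$ is \emph{faithful}: this is exactly the content of the completeness Theorem~\ref{thm:complete}, which guarantees that $\den{c} = \den{d}$ implies $c = d$ in $\MP$. (The converse, soundness, is already built into the fact that $\den{-}$ is well defined on the quotient by the axioms of Figure~\ref{fig:ax markov and p choice}.) Since moreover $\den{-}$ is a strict symmetric monoidal functor that is bijective on objects and, by the normal-form argument, full, it is in fact an isomorphism of symmetric monoidal categories onto $\FStoch_{\{1,0\}}$. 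Faithfulness alone is what the proof really needs.

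Given this, I would define the distance between parallel arrows $c, d \colon A \to B$ of $\MP$ by pullback along $\den{-}$, setting $d_{A \to B}(c, d) := d_{\den A \to \den B}(\den c, \den d)$, where the right-hand side is the total variation distance on $\FStoch_{\{1,0\}}$ supplied by Lemma~\ref{lemma:fstoch is metric enriched}. It then remains to verify that each hom-set becomes a metric space and that sequential and parallel composition are non-expansive.

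For the metric axioms~\eqref{eq:distance}, symmetry and the triangle inequality are inherited immediately from the corresponding properties of the total variation distance. For the separation axiom, $d_{A \to B}(c, d) = 0$ forces $\den c = \den d$ since $\tv$ separates points, and faithfulness of $\den{-}$ then yields $c = d$; the reverse implication is trivial. For the enrichment conditions~\eqref{eq:composition non expansive} and~\eqref{eq:tensor non expansive}, I would use functoriality and monoidality of $\den{-}$ to rewrite the left-hand sides, e.g. $d(c;c', d;d') = d(\den c ; \den{c'}, \den d ; \den{d'})$ and $d(c \otimes c', d \otimes d') = d(\den c \otimes \den{c'}, \den d \otimes \den{d'})$, and then invoke the already-established non-expansiveness of composition and tensor in $\FStoch_{\{1,0\}}$; the bounds transport verbatim to $\MP$.

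The only genuine obstacle is the separation axiom, which is precisely where faithfulness of $\den{-}$, and hence completeness, is indispensable: without it the pullback of $\tv$ would merely be a \emph{pseudometric} on $\MP$, identifying provably distinct diagrams at distance zero. Everything else is a routine transport of structure along a strict symmetric monoidal isomorphism.
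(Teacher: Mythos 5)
Your proposal is correct and follows essentially the same route as the paper: define $d_{\Bool^n \to \Bool^m}(c,d) := d_{\{1,0\}^n \to \{1,0\}^m}(\den{c},\den{d})$, use completeness (Theorem~\ref{thm:complete}) for the separation axiom, and transport non-expansiveness of composition and tensor from $\FStoch_{\{1,0\}}$ via functoriality of $\den{-}$. The additional observations about fullness and $\den{-}$ being an isomorphism are not needed, as you yourself note, but they do no harm.
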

\begin{proof}[Proof of Lemma~\ref{lemma:mp is metric enriched}]
We take as distance between arrows \[d_{\Bool^n \to \Bool^m}(c,d) := d_{\{1,0\}^n \to \{1,0\}^m}(\den{c}, \den{d})\]
for every $c,d \colon \Bool^n \to \Bool^m$ in $\MP$.

First, we verify that $d_{\Bool^n \to \Bool^m}$ is indeed a distance—that is, it satisfies the conditions in~\eqref{eq:distance}. Most properties are straightforward to check; the only non-trivial one is $d_{\Bool^n \to \Bool^m}(c, d) = 0 \iff c = d$, which we prove below:
\begin{align*}
	d_{\Bool^n \to \Bool^m}(c,d) = 0
	&\iff d_{\{1,0\}^n \to \{1,0\}^m}(\den{c}, \den{d}) = 0 \tag{Definition of $d_{\Bool^n \to \Bool^m}$} \\
	&\iff \den{c} = \den{d} \tag{Lemma~\ref{lemma:fstoch is metric enriched}} \\
	&\iff c = d \tag{Theorem~\ref*{thm:complete}}
\end{align*}

To conclude that $\MP$ is metric enriched, we need to verify also that composition and tensoring are non-expansive, that is~\eqref{eq:composition non expansive} and~\eqref{eq:tensor non expansive}. We prove~\eqref{eq:composition non expansive} below, the proof for the other is analogous.
\begin{align*}
	d_{\Bool^n \to \Bool^m}(c;d, c';d')
	&= d_{\{1,0\}^n \to \{1,0\}^m}(\den{c;d}, \den{c';d'}) \tag{Definition of $d_{\Bool^n \to \Bool^m}$} \\
	&= d_{\{1,0\}^n \to \{1,0\}^m}(\den{c};\den{d}, \den{c'};\den{d'}) \tag{$\den{-}$ is a functor} \\
	&\leq d_{\{1,0\}^n \to \{1,0\}^l}(\den{c}, \den{c'}) + d_{\{1,0\}^l \to \{1,0\}^m}(\den{d},\den{d'}) \tag{\eqref{eq:composition non expansive}} \\
	&= d_{\Bool^n \to \Bool^l}(c, c') + d_{\Bool^l \to \Bool^m}(d, d') \tag{Definition of $d_{\Bool^n \to \Bool^m}$}
\end{align*}
\end{proof}

\subsection{Other results}

\begin{proof}[Extended proof of Lemma \ref{lemma:PI on markov with p choice}]
	We define copy, discard and if-gate on objects as done in the main body of the paper.
	We then show that in $\C$, $\tau^k_{1,(A),(A,A)}(\copier{A}) = \copier{A^k}$, $\tau^k_{1,(A),()}(\discharger{A}) = \discharger{A^k}$, and
	
	\begin{center}
		$\vcen{pics/S4/A/phik_a} = \vcen{pics/S4/A/phik_b}$
	\end{center}
	
	Consequently, by induction on $A$ in $\PI{\C}$, we can show that with the chosen definition, $S_k(\copier{A}) = \copier{S_k(A)}$, $S_k(\discharger{A}) = \discharger{S_k(A)}$, and $S_k(\phi_{A}) = \phi_{S_k(A)}$.
	This allows us to lift the equations for Markov categories and probabilistic choice to $\PI{\C}$ via $S_k$.

	We prove the relevant property for $\phi$ by induction on $k$. The induction basis is simple. The induction step goes as follows:
	
	\begin{center}
		$\vcen{pics/S4/A/phik_P_a} = \vcen{pics/S4/A/phik_P_b} = \vcen{pics/S4/A/phik_P_c}= \vcen{pics/S4/A/phik_P_d}$
	\end{center}
	
	Using the above property, we can show by induction on objects $A$ that $S_k(\phi_{A}) = \phi_{S_k(A)}$. This allows us to prove the relevant equations by lifting them from $\C$.
\end{proof}

\begin{lemma}\label{lemma:tv alternative}
    $\sum_{a \in \texttt{sup}(v) \cup \texttt{sup}(w)} \texttt{abs}(v(a) - w(a))/2 = 1- \sum_{a \in \texttt{sup}(v) \cap \texttt{sup}(w)} \texttt{min}(v(a) , w(a))$.
\end{lemma}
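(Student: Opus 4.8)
The plan is to reduce the whole statement to a single pointwise algebraic identity and then tidy up the support bookkeeping. The key observation is that for any two nonnegative reals $x,y$ one has $|x-y| = x + y - 2\min(x,y)$, which follows immediately from $\min(x,y) + \max(x,y) = x+y$ together with $\max(x,y) - \min(x,y) = |x-y|$. Everything else is bookkeeping around which index set the various sums range over.

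First I would argue that both sides may be taken over the common index set $U := \texttt{supp}(v) \cup \texttt{supp}(w)$ without changing their values. On the left-hand side, any $a \notin U$ satisfies $v(a) = w(a) = 0$, so $|v(a) - w(a)| = 0$ and contributes nothing; thus the union-indexed sum already equals the sum over all of $A$. On the right-hand side, $\min(v(a), w(a)) > 0$ forces both $v(a) > 0$ and $w(a) > 0$, i.e.\ $a \in \texttt{supp}(v) \cap \texttt{supp}(w)$, while elsewhere the minimum is $0$; hence the intersection-indexed min-sum agrees with the union-indexed (equivalently, all-of-$A$) min-sum. This step is what reconciles the asymmetry between $\cup$ on the left and $\cap$ on the right of the claimed equality.

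Summing the pointwise identity over $U$ then gives
\[
\sum_{a \in U} |v(a) - w(a)| = \sum_{a \in U} v(a) + \sum_{a \in U} w(a) - 2 \sum_{a \in U} \min(v(a), w(a)).
\]
Since $v$ and $w$ are probability distributions whose supports are contained in $U$, the first two sums on the right each equal $1$. Dividing through by $2$ and using the support-restriction argument above to rewrite the min-sum over the intersection, I obtain exactly
\[
\sum_{a \in \texttt{supp}(v) \cup \texttt{supp}(w)} \frac{|v(a) - w(a)|}{2} = 1 - \sum_{a \in \texttt{supp}(v) \cap \texttt{supp}(w)} \min(v(a), w(a)),
\]
as required.

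The hard part will be essentially nonexistent: this is a routine computation. The only place demanding genuine care is the second paragraph, where one must confirm that restricting the min-sum to the intersection of supports really does coincide with the min-sum one gets from summing the pointwise identity over the union (or over all of $A$); everything else reduces to normalisation of the two distributions.
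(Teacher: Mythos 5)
Your proposal is correct and follows essentially the same route as the paper's proof: both rest on the pointwise identity $2\min(x,y)+|x-y|=x+y$ (equivalently $\min+\max=x+y$), the observation that the min-sum over the intersection of supports equals the one over the union, and normalisation of the two distributions to evaluate $\sum_a (v(a)+w(a))=2$. No gaps.
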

\begin{proof}
    Note that 
    \[ \sum_{a \in \texttt{sup}(v) \cap \texttt{sup}(w)} \texttt{min}(v(a) , w(a)) = \sum_{a \in \texttt{sup}(v) \cup \texttt{sup}(w)} \texttt{min}(v(a) , w(a)) \]
    and
    \begin{align*}
        & 2 \cdot \sum_{a \in \texttt{sup}(v) \cup \texttt{sup}(w)} \texttt{min}(v(a) , w(a)) + \sum_{a \in \texttt{sup}(v) \cup \texttt{sup}(w)} \texttt{abs}(v(a) - w(a))
        \\
        = & \sum_{a \in \texttt{sup}(v) \cup \texttt{sup}(w)} \texttt{min}(v(a) , w(a)) + \sum_{a \in \texttt{sup}(v) \cup \texttt{sup}(w)} \texttt{max}(v(a) , w(a))
        \\
        = & \sum_{a \in \texttt{sup}(v) \cup \texttt{sup}(w)} (v(a) + w(a)) 
        \\
        = & \sum_{a \in \texttt{sup}(v)} (v(a)) + \sum_{\texttt{sup}(w)} w(a) 
        \\
        = & 2
    \end{align*}
\end{proof}

\begin{proof}[Proof of Lemma \ref{lem:split}]
	For the first diagram: \vcen{pics/S4/A/If_lemma}
	
	First we apply the (v) axioms to each of the incoming wires, and combining those with monoidal coherence.
	Then we pull the main morphism through the $\phi$-gate using naturality. 
	As a third step, we use determinism of $1$ and $0$, as well as the $\phi$ reduction steps.
	Lastly, we use naturality of the discard operation to get the right result.
		
	\noindent
	\vcen{pics/S4/A/If_lemma_proof}
	
	For the second diagram:	\vcen{pics/S4/A/Split_Lemma}
	
	For simplicity, we define: \vcen{pics/S4/A/Split_Lemma_Help}, and swap the first two wires.
	We then prove as follows, starting with the equations from (v) with monoidality of if, then use naturality of if, and finally simplifying using equations from (iv):
		
	\vcen{pics/S4/A/Split_Lemma_Proof}
\end{proof}

\begin{proof}[Proof of Lemma~\ref{lemma:quantitative soundness and completeness}]
    For soundness we proceed by induction on the rules in~\eqref{eq:metric}. Observe that the laws that do not involve the probabilistic choice are trivially verified from the fact that $\FStoch_{\{1,0\}}$ is metric-enriched.

    For the top-last rule, given maps $c,c' \colon \{1,0\}^n \to \{1,0\}^m$ and $d,d' \colon \{1,0\}^l \to \{1,0\}^m$ such that $c \equiv_\delta c'$ and $d \equiv_\delta d'$, by inductive hypothesis we have that $d_{\{1,0\}^n \to \{1,0\}^m}(c,c') \leq \delta$ and $d_{\{1,0\}^l \to \{1,0\}^m}(d,d') \leq \delta$.
    Now let $h = (c \otimes id_\Bool \otimes d);\phi_{\{1,0\}^m}$ and $h' = (c' \otimes id_\Bool \otimes d');\phi_{\{1,0\}^m}$ and observe that for any $a \in \{1,0\}^n, b \in \{1,0\}^l$, we have the following cases
    \[
    \begin{array}{cc}
        h(a,0,b) = {(c \otimes \discharger{\{1,0\}^l})}(a,b)
        &
        h'(a,0,b) = {(c' \otimes \discharger{\{1,0\}^l})}(a,b)
        \\
        h(a,1,b) = {(\discharger{\{1,0\}^n} \otimes d)}(a,b)
        &
        h'(a,1,b) = {(\discharger{\{1,0\}^n} \otimes d')}(a,b)
    \end{array}
    \]
    and thus, $d_{\{1,0\}^m}(h(a,0,b), h'(a,0,b)) = d_{\{1,0\}^m}({(c \otimes \discharger{\{1,0\}^l})}(a,b), {(c' \otimes \discharger{\{1,0\}^l})}(a,b)) \leq \delta$
    and $d_{\{1,0\}^m}({h}(a,1,b), {h'}(a,1,b)) = d_{\{1,0\}^m}({(\discharger{\{1,0\}^n} \otimes d)}(a,b), {(\discharger{\{1,0\}^n} \otimes d')}(a,b)) \leq \delta$.

    To conclude we have that
    \begin{align*}
        d_{\{1,0\}^{n+1+l} \to \{1,0\}^m}(h,h')
        &= \texttt{max}_{(a,i,b) \in \{1,0\}^{n+i+l}} \tv_{\{1,0\}^m}({h}(a,i,b), {h'}(a,i,b)) \\
        &\leq \texttt{max}_{(a,i,b) \in \{1,0\}^{n+i+l}} \delta \\
        &= \delta.
    \end{align*}

    For the bottom-last rule, given maps $c,c' \colon \{1,0\}^n \to \{1,0\}^m$ and $d,d' \colon \{1,0\}^l \to \{1,0\}^m$ such that $c \equiv_\delta c'$ and $d \equiv_{\delta'} d'$, by inductive hypothesis we have that $d_{\{1,0\}^n \to \{1,0\}^m}(c,c') \leq \delta$ and $d_{\{1,0\}^l \to \{1,0\}^m}(d,d') \leq \gamma$.
    Now let $h = (c \otimes d);\phi^p_{\{1,0\}^m}$ and $h' = (c' \otimes d');\phi^p_{\{1,0\}^m}$ and observe that for any $a \in \{1,0\}^n, b \in \{1,0\}^l$, we have that
    \[
    \begin{array}{cc}
        {h}(a,b) = p\ket{{c}(a)} + (1-p)\ket{{d}(b)}
        &
        {h'}(a,b) = p\ket{{c'}(a)} + (1-p)\ket{{d'}(b)}
    \end{array}
    \]
    and thus,
    \begin{align*}
        &d_{\{1,0\}^{n+l} \to \{1,0\}^m}(h,h') \\
        = &\texttt{max}_{(a,b) \in \{1,0\}^{n+l}} \tv_{\{1,0\}^m}({h}(a,b), {h'}(a,b)) \\
        = &\texttt{max}_{(a,b) \in \{1,0\}^{n+l}} \sum_x \frac{| {h}(a,b)(x) - {h'}(a,b)(x) |}{2} \\
        = &\texttt{max}_{(a,b) \in \{1,0\}^{n+l}} \sum_x \frac{| p\ket{{c}(a)(x) - {c'}(a)(x)} + (1-p)\ket{{d}(b)(x) - {d'}(b)(x)} |}{2} \\
        \leq &p \cdot \texttt{max}_{a \in \{1,0\}^n} \sum_x \frac{| {c}(a)(x) - {c'}(a)(x) |}{2} + (1-p) \cdot \texttt{max}_{b \in \{1,0\}^l} \sum_x \frac{| {d}(b)(x) - {d'}(b)(x) |}{2} \\
        = &p \cdot d_{\{1,0\}^{n} \to \{1,0\}^m}({c},{c'}) + (1-p) \cdot d_{\{1,0\}^{l} \to \{1,0\}^m}({d},{d'}) \\
        \leq &p\cdot \delta + (1-p) \cdot \gamma.
    \end{align*}

    \bigskip

    For completeness, we rely on the normal form Theorem~\ref{thm:complete} and use the rules from \eqref{eq:metric}.
    Take two morphisms $f, g$ from $n \to m$ in normal form.
    We prove that $f \equiv_{d(f,g)} g$ by induction on $n$ and $m$.
    \begin{itemize}
        \item Base case: $n = m = 0$, then both morphisms are $id_0$ and hence equal. So $f \equiv_0 g$ and $d(f,g) = 0$.

        \item Suppose $n = 0$ and $m = m' + 1$. Then $f = (\langle 1 \rangle \otimes f_1 \otimes \langle 0 \rangle \otimes f_0) ; \phi^{p}$ and  $g = (\langle 1 \rangle \otimes g_1 \otimes \langle 0 \rangle \otimes g_0) ; \phi^{q}$.
        Now, $d(f,g) = \min(p,q) \cdot d(f_1,g_1) + \min(1-p,1-q) \cdot d(f_0,g_0) + (1-\min(p,q)-\min(1-p,1-q))$, since $d(\langle i \rangle \otimes f_i , \langle i \rangle \otimes g_i) = d(f_i , g_i)$, and $d(\langle i \rangle \otimes f_i , \langle 1-i \rangle \otimes g_{i-1}) = 1$.

        Assume w.l.o.g. that $p \leq q$, so $\min(p,q) = p$ and $\min(1-p,1-q) = 1-q$, and $d(f,g) = p \cdot d(f_1,g_1) + (1-q) \cdot d(f_0 , g_0) + (q-p)$. Define $k = \frac{q-p}{1-p}$. We can then prove that $f = (\langle 0 \rangle \otimes f_0 \otimes \langle 0 \rangle \otimes f_0) ; (\langle 1 \rangle \otimes f_1 \otimes \phi^k) ; \phi^p$ using \ref{ax:phiIdemp} \ref{ax:phidistr}, and $g = (\langle 1 \rangle \otimes g_1 \otimes \langle 0 \rangle \otimes g_0) ; (\langle 1 \rangle \otimes g_1 \otimes \phi^k) ; \phi^p$ using the same rules and \ref{ax:phias}.
        By induction hypothesis, $\langle 1 \rangle \otimes f_1 \equiv_{d(f_1, g_1)} \langle 1 \rangle \otimes g_1$ and $\langle 0 \rangle \otimes f_0 \equiv_{d(f_0, g_0)} \langle 0 \rangle \otimes g_0$, and trivially  $\langle 0 \rangle \otimes f_0 \equiv_1 \langle 1 \rangle \otimes g_1$.
        Applying the second quantitative $\phi$-rule twice, we get that $f \equiv_\alpha g$ for: $\alpha = p \cdot d(f_1,g_1) + (1-p) \cdot (k \cdot 1 + (1-k) \cdot d(f_0,g_0)) = p \cdot d(f_1,g_1) + (p-q) + (1-q) \cdot d(f_0,g_0)$ as required.

        \item Suppose $n = n'+1$. Then $f = (id_{1} \otimes \delta_{n'}) ; (\gamma_{1,n'} \otimes id_{n'}) ; (f_1 \otimes id_1 \otimes f_0) ; \phi_{n'}$ and $g = (id_{1} \otimes \delta_{n'}) ; (\gamma_{1,n'} \otimes id_{n'}) ; (g_1 \otimes id_1 \otimes g_0) ; \phi_{n'}$.
        Note that 
        \begin{align*}
            d(f,g) &= \max_{(i,x) \in \{1,0\}^{n}}(d(f(i,x) , g(i,x))) \\
            &= \max_{(i,x) \in \{1,0\}^{n}}(d(f_i(x) , g_i(x))) \\
             &= \max(\max_{x \in \{1,0\}^{n'}}(d(f_1(x) , g_1(x))) , \max_{x \in \{1,0\}^{n'}}(d(f_0(x) , g_0(x)))) \\
             &= \max(d(f_1,g_1) , d(f_0,g_0)).
        \end{align*}

        By induction hypothesis, $f_1 \equiv_{d(f_1, g_1)} g_1$ and $f_0 \equiv_{d(f_0,g_0)} g_0$. So with weakening,
        by induction hypothesis, $f_1 \equiv_{d(f, g)} g_1$ and $f_0 \equiv_{d(f,g)} g_0$ (since $d(f,g)$ is at least as large as the  other distances).
        Applying the first quantitative $\phi$-rule, we get:
        $(f_1 \otimes id_1 \otimes f_0) ; \phi_{n'} \equiv_{d(f, g)} (g_1 \otimes id_1 \otimes g_0) ; \phi_{n'}$, so with compositionality, $f \equiv_{d(f, g)} g$ as required.
    \end{itemize}
\end{proof}

\section{Appendix to Section~\ref{sec:asyeq}}

\begin{proof}[Proof of Proposition~\ref{prop:asyeq is a congruence}]
    Suppose $f \equiv f'$ and $g \equiv g'$. Take $k \in \Nat$, and $\varepsilon > 0$, then by assumption on $\varepsilon/2$ there are $N$ and $M$ such that if $n \geq N$ and $n \geq M$, then respectively $n^k \cdot d(S_n(f) , S_n(f')) < \varepsilon/2$ and $n^k \cdot d(S_n(g) , S_n(g')) < \varepsilon/2$.
    Taking $K = \max(N,M)$, then for $n \geq K$ we get $n^k \cdot d(S_n(f ; g) , S_n(f' ; g')) = n^k \cdot d(S_n(f) ; S_n(g) , S_n(f') ; S_n(g')) \leq n^k \cdot (d(S_n(f) , S_n(f')) + d(S_n(g) , S_n(g'))) = n^k \cdot d(S_n(f) , S_n(f')) + n^k \cdot d(S_n(g) , S_n(g')) < \varepsilon/2 + \varepsilon/2 = \varepsilon$.
    Same reasoning may be applied to the monoidal product operation.
\end{proof}

	We give a symbolic proof of the fact that the $\tau^\star$ operation preserves asymptotic equivalence. This a corollary of Lemma \ref{lem:Newton2}.

	Suppose $f \equiv f'$ of appropriate type to apply $\tau^\star_S$. Remember that $d(S_n(\tau^\star_S(f)) , S_n(\tau^\star_S(g))) = d(\tau^n(S_n(f)) , \tau^n(S_n(g))) \leq n \cdot d(S_n(f) , S_n(g))$.
	Take $k \in \Nat$ and $\varepsilon > 0$, and apply assumption to $k + 1$ to get $N > 0$ such that for all $n \geq N$, $n^{k+1} \cdot d(S_n(f), S_n(g)) < \varepsilon$.
	So $n^k \cdot d(S_n(\tau^\star(f)) , S_n(\tau^\star(g))) = n^k \cdot d(\tau^n(S_n(f)) , \tau^n(S_n(g))) \leq n^{k+1} \cdot d(S_n(f) , S_n(g)) < \varepsilon$.

\begin{proof}[Proof of Lemma \ref{lem:all1} in symbolic notation]
	
	First, observe that by functoriality of $\tau$, we can combine the whole morphism into on $\langle 1 \rangle ; \tau^n_{\Bool , () , (\Bool)}(f)$, where $f = ((\langle p \rangle ; \copier{\Bool}) \otimes \textit{id}_{\Bool}) ; (\textit{id}_{\Bool} \otimes \textsf{\&})$.
	
	By induction on $n$, we can prove that $\langle 1 \rangle ; \tau^n_{\Bool , () , (\Bool)}(f) \equiv_{p^n} \tau^n_{1 , () , (\Bool)}(\langle p \rangle) \otimes \langle 0 \rangle$:
		\begin{itemize}
			\item For $n = 0$, the required distance $\equiv_1$ is trivially true.
			\item For $n = m+1$, 
			$\langle 1 \rangle ; \tau^n_{\Bool , () , (\Bool)}(f) = 
			\langle 1 \rangle ; \tau^m_{\Bool , () , (\Bool)}(f) ; (\textit{id}_{\Bool^m} \otimes f)$.
			Now the last $f$ has a single weighted choice, and we can rewrite the morphism as a weighted choice between two options, using that generally $g ; (h \oplus_p k) = (g ; h) \oplus_p (g ; k)$. We analyze the two options.
			\begin{itemize}
				\item If the last $f$ creates a $1$, then the final result will be $\langle 1 \rangle ; \tau^m_{\Bool , () , (\Bool)}(f) ; (\textit{id}_{\Bool^m } \otimes \langle 1 \rangle \otimes \textit{id}_{\Bool})$, since we add a 1 at the end and keep the all-1 value from the initial $m$ steps. By induction hypothesis, this is $p^m$ away from $\tau^m_{1 , () , (\Bool)}(\langle p \rangle) \otimes \langle 1 \rangle \otimes \langle 0 \rangle$
				\item If the last $f$ creates a $0$, then the final result will be $\langle 1 \rangle ; \tau^m_{\Bool , () , (\Bool)}(f) ; (\textit{id}_{\Bool^n} \otimes \discharger{\Bool} \otimes \langle 0 \rangle \otimes \langle 0 \rangle)$, since we add a 0 at the end and replace the all-1 value to 0. This is equivalent to $\tau^m_{1 , () , (\Bool)}(\langle p \rangle) \otimes \langle 0 \rangle \otimes \langle 0 \rangle$.
			\end{itemize}
			Hence the morphism is equal to:
			$(\langle 1 \rangle ; \tau^m_{\Bool , () , (\Bool)}(f) ; (\textit{id}_{\Bool^m } \otimes \langle 1 \rangle \otimes \textit{id}_{\Bool})) \oplus_p (\tau^m_{1 , () , (\Bool)}(\langle p \rangle) \otimes \langle 0 \rangle \otimes \langle 0 \rangle)$, which is at most $p \cdot p^m 0 p^n$ away from $(\tau^m_{1 , () , (\Bool)}(\langle p \rangle) \otimes \langle 1 \rangle \otimes \langle 0 \rangle) \oplus_p (\tau^m_{1 , () , (\Bool)}(\langle p \rangle) \otimes \langle 0 \rangle \otimes \langle 0 \rangle)$ which can be shown using the induction hypothesis.
			We can pull out the commonalities to see this is equal to $\tau^m_{1 , () , (\Bool)}(\langle p \rangle) \otimes (\langle 1 \rangle \oplus_p \langle 0 \rangle) \langle 0 \rangle$, where $\langle 1 \rangle \oplus_p \langle 0 \rangle = \langle p \rangle$. Hence this is equal to $\tau^n_{1 , () , (\Bool)}(\langle p \rangle) \otimes \langle 0 \rangle$.
		\end{itemize}
\end{proof}

Equality checking on $\star$-tuples given equality check on bits $\textsf{eq}$, symbolically:
\[
\textsf{eq}^\star \quad := \quad \tau^\star_{1 , (\Bool , \Bool) , (\Bool)}(\textsf{eq}) ; \textsf{all-1} \quad = \quad (\langle 1 \rangle \otimes \textit{id}_{\Bool^\star \otimes \Bool^\star}) ; \tau^\star_{\Bool , (\Bool , \Bool) ,  ()}((\textit{id}_\Bool \otimes \textsf{eq}) ; \textsf{\&})
\]

\begin{proof}[Proof of Lemma \ref{lem:keyguess} in symbolic notation]
	
	By compositional properties, 
	
	$(\tau^{\star}_{1 , () , (\Bool)}(\langle 1/2 \rangle) \otimes \textit{id}_{\Bool^\star}) ; (\copier{\Bool^\star} \otimes \textit{id}_{\Bool^\star}) ; (\textit{id}_{\Bool^\star} \otimes \textsf{eq}^\star) = 
	(\langle 1 \rangle \otimes \textit{id}_{\Bool^\star}) ; \tau^\star_{\Bool , () , (\Bool)}((\textit{id}_\Bool \otimes (\langle 1/2 \rangle ; \copier\Bool) \otimes \textit{id}_\Bool) ; (\chi_{\Bool,\Bool} \otimes \textsf{eq}^\star) ; (\textit{id}_\Bool \otimes \textsf{\&}))$.
	
	By comparing distributions we can see that $(\langle 1/2 \rangle \otimes \textit{id}_\Bool) ; (\copier\Bool \otimes \textit{id}_\Bool) ; (\textit{id}_\Bool \otimes \textsf{eq}^\star) = (\textit{id}_\Bool \otimes \langle 1/2 \rangle) ; (\textit{id}_\Bool \otimes \copier\Bool) ; (\textsf{eq}^\star \otimes \textit{id}_\Bool)$, so the content of the iteration can be rewritten to $(\textit{id}_\Bool \otimes (\langle 1/2 \rangle ; \copier\Bool) \otimes \textit{id}_\Bool) ; (\chi_{\Bool,\Bool} \otimes \textsf{eq}^\star) ; (\textit{id}_\Bool \otimes \textsf{\&}) = ((\langle 1/2 \rangle ; \copier\Bool) \otimes \textit{id}_{\Bool^2}) ; (\textit{id}_{\Bool} \otimes \textsf{\&} \otimes \textit{id}_{\Bool}) ; (\textit{id}_{\Bool} \otimes \textsf{eq})$.
	
	Using compositionality again, we get 
	
	$(\textit{id}_{\Bool^\star} \otimes (\tau^{\star}_{1 , () , (\Bool)}(\langle 1/2 \rangle); \copier{\Bool^\star} ; (\textit{id}_{\Bool^\star} \otimes \textsf{all-1}))) ; (\tau^\star_{1 , (\Bool) , (\Bool)}(\textsf{eq}) \otimes \textit{id}_{\Bool})$, which by the previous lemma is asymptotically equivalent to: $(\textit{id}_{\Bool^\star} \otimes \tau^{\star}_{1 , () , (\Bool)}(\langle 1/2 \rangle) \otimes \langle 0 \rangle) ; (\tau^\star_{1 , (\Bool) , (\Bool)}(\textsf{eq}) \otimes \textit{id}_{\Bool}) = \tau^{\star}_{1 , (\Bool) , (\Bool)}((\langle 1/2 \rangle \otimes \textit{id}_{\Bool}) ; \textsf{eq}) \otimes \langle 0 \rangle = \tau^{\star}_{1 , (\Bool) , (\Bool)}(\discharger\Bool ; \langle 1/2 \rangle) \otimes \langle 0 \rangle = \discharger{\Bool^\star} ; (\tau^\star_{1 , (\Bool) , (\Bool)}(\langle 1/2 \rangle) \otimes \langle 0 \rangle)$.
\end{proof}

\end{document}